\newtheorem{thm}{Theorem}[section]
\newtheorem{prop}[thm]{Proposition}
\newtheorem{lem}[thm]{Lemma}
\newtheorem*{thm*}{Theorem}
\theoremstyle{definition}
\newtheorem{definition}[thm]{Definition}
\newtheorem{assumption}[thm]{Assumption}
\newtheorem{rem}[thm]{Remark}
\renewcommand{\phi}{\varphi}
\newcommand{\eps}{\varepsilon}
\newcommand{\ud}{\mathrm{d}}
\newcommand{\ue}{\mathrm{e}}
\newcommand{\ui}{\mathrm{i}}
\newcommand{\R}{\mathbb{R}}
\newcommand{\C}{\mathbb{C}}
\newcommand{\N}{\mathbb{N}}
\newcommand{\cF}{\mathcal{F}}
\newcommand{\cN}{\mathcal{N}}
\newcommand{\cK}{\mathcal{K}}
\newcommand{\uppar}[1]{\ensuremath{^{(#1)}}}
\title{Hamiltonians for polaron models with subcritical ultraviolet singularities}
\author{Jonas Lampart
\thanks{CNRS \& LICB (UMR 6303), Université de Bourgogne Franche-Comté, 9 Av. A. Savary, 21078 Dijon Cedex, France.
\texttt{lampart@math.cnrs.fr}}
}
\begin{document}

\maketitle

\begin{abstract}
We treat the ultraviolet problem for polaron-type models in nonrelativistic quantum field theory. Assuming that the dispersion relations of particles and the field have the same growth at infinity, we cover all subcritical (superrenormalisable) interactions.
The Hamiltonian without cutoff is exhibited as an explicit self-adjoint operator obtained by a finite iteration procedure. The cutoff Hamiltonians converge to this operator in the strong resolvent sense after subtraction of a perturbative approximation for the ground-state energy.
\end{abstract}

\tableofcontents

\section{Introduction} 

In this article we consider models for a particle interacting with a bosonic quantum field by a linear coupling. Such nonrelativistic quantum field theory (QFT) models play an important role in mathematical physics, on the one hand as approximations for more fundamental relativistic QFTs, and on the other hand as effective models that feature interactions with a field of quasi-particles. Examples of such models include the Nelson model~\cite{nelson1964, GrWu18, LaSch19}, the optical Fröhlich model for an electron and phonons in a solid~\cite{Hfrohlich1954, GrWu16}, and the Bogoliubov-Fröhlich model for an impurity interacting with excitations of a Bose-Einstein condensate~\cite{grusdt2016, La20}. 

We will consider translation-invariant models and fix the total momentum, the sum of the momenta of the particle and the field. The formal expression for the polaron Hamiltonian of one particle interacting with the field at total momentum $P$ is given on the momentum Fock space $\cF$ by
\begin{equation}\label{eq:H formal}
H= \Omega(\ud \Gamma(k) - P) + \ud \Gamma(\omega) + a(v)+a^*(v),
\end{equation}
where $\Omega$, $\omega$ are the dispersion relations of the particle and the field, respectively, $a,a^*$ denote  bosonic annihilation/creation operators and $\ud \Gamma(A)$ acts as the one-particle operator $A$ on each boson (see Section~\ref{sect:notation} for a summary of the notation). This expression, taken literally, defines an operator with dense domain only if $v$ is an element of the one-particle space $L^2(\R^d)$, as otherwise $a(v)$ is not closable and $a(v)^*$ is not densely defined.

The goal of this article is to make sense of the operator~\eqref{eq:H formal} in situations presenting an ultraviolet (UV) singularity, that is $v\notin L^2$ due to the behaviour of $v\in L^2_\mathrm{loc}$ for large momentum. Although we only treat here the model with one particle, we expect that our results can be generalised to any fixed number of particles interacting with the field.
This problem has been studied for several classes of $\Omega$, $\omega$,  $v$ in the literature~\cite{nelson1964, eckmann1970,gross1973, GuHiLo14, GrWu16, GrWu18, wuensch17, LaSch19, IBCrelat, IBCmassless}. The novelty of our result is the introduction of a new general and explicit method allowing for stronger singularities than previous results, up to the threshold of critical singularities that we explain below. This method generalises the one used in~\cite{La19, La20} for the specific case of the Bogoliubov-Fröhlich Hamiltonian.
The underlying idea is that one might be able to make sense of the expression~\eqref{eq:H formal} if one can find a domain $D$ such that the action of the individual terms in $H$ on $\Psi\in D$ may not yield an element of $\cF$, but, due to cancellations between the different terms, their sum is an element of $\cF$. The conditions on $\Psi$ leading to such cancellations are known as (abstract) interior boundary conditions and have recently been studied for a variety of models~\cite{thomas1984, yafaev1992, TeTu20, KeSi16, IBCpaper, LaSch19, La19,schmidt2019, IBCrelat, IBCmassless, tumulka2020, henheik2020, lienert2020, La2021polaron, BiLa21}. 

\subsection{Ultraviolet scaling}

The strength of the UV singularity depends on the behaviour of the functions $\Omega$, $\omega$,  $v$ (from $\R^d$ to $\R$) as their argument tends to infinity. This can be analysed heuristically if the functions are (essentially) homogeneous. Assume for the moment that $\Omega(p)=|p|^\gamma$, $\omega(k)=|k|^\beta$, $v(k)=g|k|^{-\alpha}$ are exactly homogeneous with exponents $\beta, \gamma \geq 0$, $\alpha\in \R$, and coupling strength $g>0$. 
Let $U_\lambda\psi(k) =\lambda^{d/2} \psi(\lambda k)$ be the unitary rescaling on $L^2(\R^d)$. Conjugating each term in~\eqref{eq:H formal}
with the lift $\Gamma(U_\lambda)$ of this unitary to Fock space (acting as $U_\lambda$ on each tensor factor), we obtain 
\begin{equation}\label{eq:H formal resc}
 \Gamma(U_\lambda)H \Gamma(U_\lambda)^*=\lambda^\gamma \Omega(\ud \Gamma(k) - \lambda^{-1}P) + \lambda^\beta \ud \Gamma(\omega) + \lambda^{d/2-\alpha}(a(v)+a^*(v)),
\end{equation}
which is of a similar form as~\eqref{eq:H formal} but with modified total momentum $P_\lambda=\lambda^{-1}P$ and new pre-factors for each term. 
If we factor out $\lambda^{\max\{\beta, \gamma\}}$, the coupling constant to the interaction becomes $g_\lambda=\lambda^{d/2-\alpha-\max\{\beta, \gamma\}}g$.

The large-momentum, respectively small-distance, behaviour of the model is related to the rescaled model with large $\lambda$. This has a small coupling constant if $d/2-\alpha-\max\{\beta, \gamma\}<0$, and we call this scaling subcritical ($d/2-\alpha-\max\{\beta, \gamma\} =0$, or $>0$, would be critical and supercritical, respectively).
The fact that the coupling becomes small for large $\lambda$ suggests that, concerning the UV behaviour, the interaction may be treated perturbatively. One expects such models to be superrenormalisable. This means, roughly speaking, that a renormalised model without UV cutoff can be defined after taking into account a finite number of divergences that are determined by perturbation theory. 
This is what we will show, under technical hypothesis that are sharp if $\beta=\gamma$ and in the precise sense of Theorem~\ref{thm:H} below.

\subsection{Main result}

We make the following assumptions on the dispersion relations and the interaction.

\begin{assumption}\label{ass:main} 
The functions $v:\R^d\to \R$, $\omega:\R^d\to \R_+$, $\Omega:\R^d\to \R_+$ are invariant under rotations.
 We assume that we have parameters, $\alpha<d/2$, $\gamma\in \{1,2\}$ satisfying
 \begin{equation*}
  \delta:=d-2\alpha-\gamma<\gamma,
 \end{equation*}
and constants $C>0$, $c_1>0$, $c_2\geq 0$ so that $\Omega\in C^\gamma(\R^d)$ and the inequalities
 \begin{equation*}
 \begin{aligned}
        |v(k)|&\leq C |k|^{-\alpha},   \\
        \omega(k) & \geq C (c_1 + k^2)^{\gamma/2}, \\
        \Omega(p)& \geq C (c_2 + p^2)^{\gamma/2}, \\
        |\partial^\nu\Omega(p)| &\leq C (c_2 + p^2)^{(\gamma-|\nu|)/2} , \qquad |\nu| \in \{1,  \gamma\},
 \end{aligned}
\end{equation*}
hold.
\end{assumption}
The first consequence of this hypothesis is that $v \omega^{-1}\in L^2(\R^d)$, since
\begin{equation}
 \int_{\R^d} \frac{|v(\xi)|^2}{\omega(\xi)^2} \ud \xi \leq C \int_{\R^d} \frac{\ud \xi}{|\xi|^{2\alpha} (c_1 + \xi^2)^{\gamma}},
\end{equation}
and $2\alpha + 2\gamma>d$. Note that the hypothesis on $\alpha, \gamma$ is sharp in this regard.
This implies that $a(v) \ud \Gamma(\omega)^{-1}$ is bounded (see Lemma~\ref{lem:a-omega}).
For $E_0\geq 0$ we set
\begin{equation}
 H_0:= \Omega(\ud \Gamma(k)-P)+\ud \Gamma(\omega)+E_0.
\end{equation}
Let $T$ be any operator such that $H_0+T$ is self-adjoint on $D(H_0)$ and invertible (for appropriate $E_0$). 
We define (this is well defined as under our hypothesis $H_0^{-1}a^*(v)$ is bounded by Lemma~\ref{lem:a-omega})
\begin{equation}
 G_T:=-(H_0+T)^{-1}a^*(v).
\end{equation}

\begin{thm}\label{thm:H}
Let $P\in \R^d$ and suppose $\Omega$, $\omega$, $v$ satisfy Assumption~\ref{ass:main}. 
 There exist a symmetric and $H_0$-bounded operator $(T,D(T))$ on $\cF$, a bounded symmetric operator $R\in \mathscr{L}(\cF)$, and a number $E_0$ such that the following hold.
 
 \begin{enumerate}[1)]
  \item\label{thm:sa} The operator 
\begin{equation*}
 H:=(1-G_T)^{*}(H_0+T)(1-G_T) + R-E_0
\end{equation*}
is self-adjoint on 
\begin{equation}
D(H)=\{ \Psi \in \cF: (1-G_T)\Psi \in D(H_0)\} 
\end{equation}
 and bounded from below.
\item\label{thm:conv} For $\Lambda\geq 0$, let $v_\Lambda(k)=v(k)1(|k|\leq \Lambda)$ and
 \begin{equation*}
  H_\Lambda:=\Omega(\ud \Gamma(k) - P)+\ud \Gamma(\omega)+a^*(v_\Lambda)+a(v_\Lambda).
 \end{equation*}
The operators $H_\Lambda$ are self-adjoint on $D(H_0)$ and for the numbers $E_\Lambda$ given in~\eqref{eq:E-def} we have
\begin{equation*}
\lim_{\Lambda\to \infty} \Big(H_\Lambda-E_\Lambda\Big)=H 
\end{equation*}
in the strong resolvent sense.
\item\label{thm:distr} 
For all $\Psi\in D(H)$ the equality
\begin{equation*}
H\Psi= \Omega(\ud \Gamma(k) - P)\Psi+\ud \Gamma(\omega)\Psi + a^*(v)\Psi + a(v)(1-G_T)\Psi + (T+R)\Psi 
 \end{equation*}
 holds in the dual of $D(H_0)$.
 \end{enumerate}
\end{thm}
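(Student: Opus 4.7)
The plan is to construct $T$ by a finite iterative procedure that successively reduces the ultraviolet singularity of the off-diagonal terms produced by conjugating $H_0+T$ with $1-G_T$, and then to read off self-adjointness, the distributional action, and the convergence statement from the algebraic identity
\begin{equation*}
(1-G_T)^*(H_0+T)(1-G_T)=H_0+T+a^*(v)+a(v)(1-G_T),
\end{equation*}
which holds formally on $D(H_0)$ as a consequence of $(H_0+T)G_T=-a^*(v)$. Comparing with part~\ref{thm:distr}, the operator $R$ will absorb the bounded correction left over after $T$ has been constructed, while $E_0$ collects the scalar divergences.

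The core step is to define $T$ inductively, starting from $T^{(0)}=0$. At stage $n$ I would analyse $a(v)(H_0+T^{(n)})^{-1}a^*(v)$ via the pull-through formula, splitting it into a scalar Wick contraction $E^{(n)}$, an operator piece $T^{(n+1)}-T^{(n)}$ whose effective scaling degree is strictly smaller than that of $T^{(n)}$, and a bounded remainder. Assumption~\ref{ass:main} with $\delta<\gamma$ ensures that each iteration lowers the singularity degree by the positive amount $\gamma-\delta$, so the process terminates after finitely many steps with a total remainder $R$ bounded on $\cF$ and a total scalar shift that defines $E_0$. The uniform input required at each stage is the boundedness of $a(v)(H_0+T^{(n)})^{-1}$ on $\cF$, which follows from Lemma~\ref{lem:a-omega} together with an elementary perturbation argument giving equivalence of the graph norms of $H_0+T^{(n)}$ and $H_0$. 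This generalises the construction in~\cite{La19, La20}.

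Once $T$ is fixed, $G_T\in\mathscr{L}(\cF)$ is immediate. For part~\ref{thm:sa} I would introduce the semibounded quadratic form
\begin{equation*}
q(\Psi)=\norm{(H_0+T)^{1/2}(1-G_T)\Psi}^2+\langle \Psi,(R-E_0)\Psi\rangle
\end{equation*}
with form domain $\{\Psi\in\cF:(1-G_T)\Psi\in D((H_0+T)^{1/2})\}$, after enlarging $E_0$ so that $H_0+T\geq 1$ and $R-E_0$ is bounded below. Closedness follows from closedness of $(H_0+T)^{1/2}$ composed with the bounded operator $1-G_T$, and the representation theorem yields the self-adjoint operator $H$; substituting the formal identity back identifies the domain and action of $H$, giving part~\ref{thm:distr} as an equality in $D(H_0)'$.

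Part~\ref{thm:conv} follows by performing the same construction with $v$ replaced by the cutoff $v_\Lambda$: for $v_\Lambda\in L^2$ the formal identity of the first paragraph is a genuine operator identity on $D(H_0)$, so $H_\Lambda-E_\Lambda$ agrees with the $\Lambda$-analogue of $H$ built from $T_\Lambda$, $G_{T_\Lambda}$, $R_\Lambda$. Convergence $T_\Lambda\to T$, $G_{T_\Lambda}\to G_T$, $R_\Lambda\to R$ in appropriate operator topologies then transfers to convergence of the forms $q_\Lambda\to q$, and standard form-convergence results yield strong resolvent convergence. I expect the main obstacle throughout to be the iterative construction of $T$: applying the pull-through formula to $(H_0+T^{(n)})^{-1}$ rather than the free $H_0^{-1}$ produces commutator terms whose control requires careful bookkeeping, and one must show that the scaling heuristic of~\eqref{eq:H formal resc2} translates into a genuine uniform reduction of the singularity at each iteration, with constants controlled independently of $n$.
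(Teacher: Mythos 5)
Your overall architecture matches the paper's: build $T$ iteratively by expanding $a(v)(H_0+T)^{-1}a^*(v)$ with the pull-through formula and normal ordering, show $G_T$ and $R$ are bounded, and deduce self-adjointness and strong resolvent convergence from the algebraic identity $(1-G_T)^*(H_0+T)(1-G_T)=H_0+T+a^*(v)+a(v)(1-G_T)$. The differences are in the two final steps and in one conceptual point; I flag them below.

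For part~\ref{thm:sa} you use the quadratic form $q(\Psi)=\|(H_0+T)^{1/2}(1-G_T)\Psi\|^2 + \langle\Psi,(R-E_0)\Psi\rangle$ and the representation theorem. The paper instead argues directly at the operator level: since $T$ is $H_0$-bounded with relative bound $<1$ (Kato--Rellich) and $1-G_T$ is boundedly invertible (Proposition~\ref{prop:G}), the symmetric operator $(1-G_T)^*(H_0+T)(1-G_T)$ is surjective on $D(H)$ and hence self-adjoint; adding the bounded symmetric $R-E_0$ preserves this. Your form route is valid and leads to the same operator, but you would still need to verify that the operator extracted from the representation theorem has domain exactly $\{\Psi:(1-G_T)\Psi\in D(H_0)\}$ (it does, because $(1-G_T)^*$ is bounded, so $D((AB)^*AB)=\{\Psi:B\Psi\in D(A^2)\}$ with $A=(H_0+T)^{1/2}$, $B=1-G_T$), a point your sketch glosses over. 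The paper's argument is more economical.

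For part~\ref{thm:conv} the paper does not invoke any form-convergence theorem; it writes the resolvent difference $(H+\ui)^{-1}-(H_\Lambda-E_\Lambda+\ui)^{-1}$ explicitly as a sum of four factored terms (\eqref{eq:resdiff-R}--\eqref{eq:resdiff-G2}) and shows each tends to zero strongly from the strong convergences $T_\Lambda\to T$, $G_{T_\Lambda}\to G_T$, $R_\Lambda\to R$ together with the uniform bounds. Appealing instead to ``standard form-convergence results'' is risky here: the forms $q_\Lambda$ are not monotone in $\Lambda$, so you would need Mosco-type convergence or a norm-resolvent argument, and checking its hypotheses would in practice reduce to the same explicit computation. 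I would not call this a gap, but the resolvent-difference route is the cleaner one for this problem.

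Two genuine issues. First, you describe $E_0$ as ``a total scalar shift'' collecting the scalar divergences; this is not what $E_0$ is. In the paper $E_0$ is a fixed finite constant inserted into $H_0=\Omega(\ud\Gamma(k)-P)+\ud\Gamma(\omega)+E_0$ purely to make $H_0+T$ positive and invertible; the $-E_0$ in the definition of $H$ cancels it again. The divergent scalars are the $E_{\Lambda,n}$ and their sum $E_\Lambda$, which do \emph{not} converge as $\Lambda\to\infty$ and are subtracted from $H_\Lambda$, not absorbed into $E_0$. Second, your account of the iterative construction is where the real difficulty lies and is far too thin: ``boundedness of $a(v)(H_0+T^{(n)})^{-1}$'' is not the uniform input one needs. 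The paper tracks quantitative pointwise bounds on the normal-ordered kernels $\theta_{n,m}$ in the weighted classes $\cK_{m,\lambda}$ and proves via Theorem~\ref{thm:K-star-prod} that each contraction in a product $\star_\ell$ gains decay of order $\ell(1-\delta/\gamma)$; that quantitative gain is what guarantees termination after $n_*=\lfloor(1-\delta/\gamma)^{-1}\rfloor$ steps with a bounded remainder. Without this kernel-level bookkeeping (including the rotation-invariance and differentiability arguments used to cancel the linear Taylor term and extract the finite part after subtracting $E_{\Lambda,n}$), the claim that ``each iteration lowers the singularity degree by $\gamma-\delta$'' remains a heuristic.
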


This theorem is proved at the end of Section~\ref{sect:constr}.
The operator $T$ is constructed explicitly by an iterative procedure with number of iterations
\begin{equation*}
 n_* = \max\Big\{n\in\N: n(1-\delta/\gamma) \leq 1\Big\}=\Big\lfloor \frac{\gamma}{2\gamma+2\alpha-d} \Big\rfloor.
\end{equation*}
Note that our conditions on $\gamma$, $\alpha$ are exactly such that $n_*$ is finite.
The operator $R$ is given as a function of $T$ in Proposition~\ref{prop:R}.

The condition $(1-G_T)\Psi \in D(H_0)$ prescribes the behaviour of $\Psi \in D(H)$ at large momentum, respectively the singularities of its Fourier transform. Since $T$ preserves the boson number, it can be read as a the relation between
\begin{equation}
  \Psi\uppar{n}-G_T\Psi\uppar{n-1} \in D(H_0)
  \end{equation}
for $n\in \N$. Expanding the resolvent of $H_0+T$ in $G_T$, one finds the leading contribution 
\begin{equation}
  \sqrt{n} \Psi\uppar{n}(K, k_{n}) \sim \frac{v(k_n)}{\Omega(k_n)+\omega(k_n)} \Psi\uppar{n-1}(K)
\end{equation}
for $|k_n|\to \infty$. The presence of $T$ gives lower order corrections to this, see~\cite{La19, La20, La2021polaron}. Such conditions are known as interior-boundary conditions~\cite{TeTu20} and can be related to the general theory of self-adjoint extensions~\cite{BiLa21, posi20}.

The equality in point~\ref{thm:distr}) can be interpreted as follows. The operator $a(v)$ is defined on $D(H_0)$ and not 
\begin{equation}
D(H) \subset D(H_0)\oplus G_T \cF.
\end{equation}
If we extend it by setting 
\begin{equation}
 \overline{a}(v)G_T\Psi:=(T+R)\Psi,
\end{equation}
 where $T:\cF\to D(T)'\subset D(H_0)'$, then the equality reads
 \begin{equation}
  H\Psi= \Omega(\ud \Gamma(k) - P)\Psi+\ud \Gamma(\omega)\Psi + a^*(v)\Psi + \overline{a}(v)\Psi.
 \end{equation}
Note that here neither of the three terms needs to be an element of $\cF$. The individual terms are rather elements of $D(H_0)'$, with only the sum in $\cF$ due to cancellations enforced by the condition $(1-G_T)\Psi \in D(H_0)$.

Theorems on renormalisation with statements as in point~\ref{thm:conv}) of Theorem~\ref{thm:H} are abundant in the literature, see e.g.~\cite{nelson1964, Schrader1968, eckmann1970,gross1973, GuHiLo14, GrWu18, wuensch17, LaSch19, IBCrelat, IBCmassless, alvarez2021}. However, they are almost entirely constrained to cases which require only one renormalisation step, i.e.~with $n_*=1$. The recent results~\cite{GrWu18, wuensch17} are essentially sharp within this class, as the hypothesis amount roughly to $\delta<\tfrac\gamma2$ (see also Remark~\ref{rem:E_2}). A notable exception is the article~\cite{gross1973}, which treats the critical case $\beta=\gamma=\delta=1$. However, the renormalised operator $H$ is obtained by an abstract compactness argument, so very little is known about its properties (see also~\cite{deckert2014}).
The articles~\cite{La19, La20, La2021polaron} were the first to explicitly treat cases with $n_*=2$.
Models with $n_*=2$ that are slightly different from the polaron-type Hamiltonian were treated in~\cite{Schrader1968, IBCpaper} (see also Remark~\ref{rem:E_2}).

\begin{rem}\label{rem:beta neq gamma}
Our technical hypothesis are not optimal when $\omega$, $\Omega$ have different homogeneity, i.e., $\omega(k)\sim |k|^\beta$, $\Omega(k)\sim |k|^\gamma$, with $d-2\alpha-2 \max\{\beta, \gamma\}<0$ and $\beta\neq \gamma$. 
In this case, the iteration number should be
\begin{equation}
 n_* = \Big\lfloor \frac{\max\{\beta,\gamma\}}{2\max\{\beta,\gamma\}+2\alpha-d} \Big\rfloor.
\end{equation}
The generalisation of our results to the case $\beta > \gamma$ is straightforward and we omit it here in favor of a simpler notation since it seems to be less relevant to physics.
Concerning the case $\gamma>\beta$, the Nelson model (where $\beta=1$, $\gamma=2$ and $n_*=1$) was treated in~\cite{LaSch19}. 
 In general, this case brings with it the additional difficulty that, while $\Omega(\ud \Gamma(k)-P)$ can be used to remedy the lack of integrability of $v$, it does not control the number of bosons and $a(v)$. 
 Thus in all the relevant bounds there is a balance to be struck between gaining decay at $k\to \infty$ by using $\Omega$ and control of the boson number by $\ud \Gamma(\omega)$. 
 The number of iterations $n_*$ will be sufficient to render all expressions finite, for any given finite number of bosons.  However, the operator $R$ will not be bounded, but bounded by some power of the boson number $\cN$. 
 The requirement that this power be less than one will impose additional conditions on the range of $\delta$ depending on the difference $\gamma-\beta$.
 A similar statement to Theorem~\ref{thm:H} will still hold for appropriate values of $\delta$, but the proof will be more involved as it will require tracking a family of bounds and later selecting the appropriate ones, similarly to the considerations of~\cite[Sect.3.3]{LaSch19}.
 Alternatively, one could impose a cutoff on the boson number, and our method of proof should then work on the whole range of subcritical models (compare~\cite{La2021polaron}). 
 \end{rem}

\subsection{Notation}\label{sect:notation}

For normed spaces $X,Y$ we denote by $\mathscr{L}(X,Y)$ the normed space of bounded linear maps from $X$ to $Y$ and by $X'=\mathscr{L}(X,\C)$ the topological dual to $X$.
We denote by $\cF$ the symmetric Fock space over $L^2(\R^d)$,
\begin{equation}
 \cF:=\bigoplus_{n=0}^\infty L^2(\R^d)^{\otimes_{s}n} = \bigoplus_{n=0}^\infty \cF\uppar{n}.
\end{equation}
For a unitary $U\in \mathscr{L}(L^2(\R^d))$ we define the unitary $\Gamma(U)\in \mathscr{L}(\cF)$ by $\Gamma(U)\vert_{\cF\uppar{n}}=U^{\otimes n}$ and for a self-adjoint $A, D(A)$ on $L^2(\R^d)$ we define $\ud \Gamma(A)$ as the generator of $\Gamma(\ue^{-\ui t A})$. In particular, we denote by $\cN:=\ud\Gamma(1)$ the number operator on $\cF$, and by $\ud \Gamma(k)=(\ud \Gamma(k_1), \dots, \ud \Gamma(k_d))$ the vector-valued field momentum.

For real valued functions $f,g:X\to \R$ on a set $X$, we employ the notation
\begin{equation}
 f(x) \lesssim g(x)
\end{equation}
for the statement that there exists $C>0$ so that for all $x\in X$
\begin{equation}
 f(x) \leq C g(x).
\end{equation}
For symmetric linear operators the notation $A\lesssim B$ refers to the corresponding inequality of quadratic forms as usual. 

For a multi-index $\nu \in \N_0^k$ we denote by $|\nu|=\sum_{j=1}^k \nu_j$ the sum of its entries.
For a vector $X\in (\R^{d})^n$ or a multi-index $X\in \N_0^n$, we denote its components by the corresponding lower case letter, and by
\begin{equation}\label{eq:not-Xab}
 X_a^b:=(x_j)_{j=a}^b=(x_a, \dots, x_b)
\end{equation}
the element of $\R^{d(b-a+1)}$, respectively $\N_0^{b-a+1}$, given by the entries $x_j$ with $1\leq a\leq j\leq  b\leq n$. Similarly, we denote for an index set $M=(m_1,\dots , m_k)  \in \N^k$ with $m_i\neq m_j$ for $i\neq j$,
\begin{equation}\label{eq:not-X_J}
\begin{aligned}
 X_M&= (x_{m_1}, \dots , x_{m_k}), \\
 X_{M^c}&=(x_j)_{j\notin \{m_1, \dots, m_k)}.
\end{aligned} 
\end{equation}
Note that in $X_M$ the entries occur in the order given by the numbering of $m_1, \dots, m_k$, which can differ from the ordering of the $m_j$ by magnitude. For $k=n$, $X_M$ is a permutation of $X$. In $X_{M^c}$ entries occur in the same order as in $X$.

Moreover, we denote
 \begin{equation}\label{eq:not-omega}
  \omega(X_a^b):=\sum_{j=a}^b \omega(x_j).
 \end{equation}

\section{Construction of the Hamiltonian}\label{sect:constr}

In this section we give a detailed outline of the iterative construction procedure for the Hamiltonian, while postponing the technical work of proving the required bounds to Section~\ref{sect:bounds}. The general strategy of this construction should also be applicable to models with a different structure than the polaron models we consider, though of course the requirements for obtaining appropriate bounds will depend on the details. 

We begin by explaining how the auxiliary operator $T$ enters into the story. We will outline its construction below. Let $T$ be any operator such that $H_0+T$ is self-adjoint and invertible on $D(H_0)$, and set as above
\begin{equation}
 G_T:=- \big(a(v)(H_0+T)^{-1}\big)^* =-(H_0+T)^{-1}a^*(v).
\end{equation}
As $(H_0+T)G_T=-a^*(v)$ we have the following short calculation, which at this point is purely formal,
\begin{align}
 (1-G_T^*)(H_0 + T)(1-G_T)&= (H_0 + T + a(v))(1-G_T) \notag \\
 &= H_0 + a(v) + a^*(v)  + T - a(v)G_T.
\end{align}
Hence, if we were able to make sense of $ T - a(v)G_T$, say as a bounded operator, then we could define 
\begin{equation}
 H:=(1-G_T^*)(H_0 + T)(1-G_T) + a(v)G_T-T-E_0,
\end{equation}
and this would be self-adjoint on $(1-G_T)^{-1}D(H_0)$. The role of $T$ here is twofold. On the one hand, the domain of $H$ depends on $T$ via $G_T$, and on the other hand $T$ must be chosen so that $a(v)G_T-T$ has good properties. Since $a(v)G_T$ is not defined a priori, the latter means there are cancellations between the two terms.

In order to make this discussion well defined and explain the relation to renormalisation, we introduce a sharp UV-cutoff $\Lambda\geq 0$ and set 
\begin{equation}
 v_\Lambda(k):=v(k) 1(|k|\leq \Lambda).
\end{equation}
For any bounded operator $T_\Lambda$ we set 
\begin{equation}
 G_{T_\Lambda}:=-(H_0+T_\Lambda)^{-1}a^*(v_\Lambda),
\end{equation}
and then have the identity
\begin{align}
 H_\Lambda  
 &= \Omega(\ud\Gamma(k)-P) + \ud \Gamma(\omega)+ a(v_\Lambda)+a^*(v_\Lambda) \notag\\
 &= (1-G_{T_\Lambda}^*)(H_0 + T_\Lambda)(1-G_{T_\Lambda}) - T_\Lambda - a(v_\Lambda)(H_0+T_\Lambda)^{-1}a^*(v_\Lambda) - E_0.\label{eq:HLambda-G}
\end{align}
If $T_\Lambda$ has a limit $T$ for $\Lambda\to \infty$ so that, for example, $T$ is $H_0$-bounded with bound less than one, then the first term in the sum above defines a self-adjoint operator also for $\Lambda=\infty$. If we could choose the operators $T_\Lambda$ such that
\begin{equation}\label{eq:op-eq}
 a(v_\Lambda)G_{T_\Lambda}-T_\Lambda= - a(v_\Lambda)(H_0 + T_\Lambda)^{-1} a^*(v_\Lambda)- T_\Lambda = E_\Lambda
\end{equation}
for some numbers $E_\Lambda$, then we could give meaning to the operator $H_\Lambda - E_\Lambda$ for $\Lambda=\infty$.
It is not clear whether operators solving this equation can be found. However, we will show that under the assumption $\delta<\gamma$ we can iteratively construct approximate solutions, up to error terms that are eventually  well behaved for $\Lambda\to \infty$.
To find a formula for $T_\Lambda$, consider for a moment the interaction form factor $v_g=gv$. We then make an ansatz as a power series in $g^2$, i.e.,
\begin{equation}
 T_\Lambda = \sum_{j=1}^n g^{2j} T_{\Lambda, j}.
\end{equation}
That is, $T_{\Lambda, j}$ is homogeneous of degree $2j$ in $v$. The equation~\eqref{eq:op-eq} gives for $j=1$
\begin{equation}
 g^2 T_{\Lambda, 1} = - a(g v_\Lambda)H_0^{-1} a^*(gv_\Lambda) - g^2 E_{\Lambda, 1},
\end{equation}
with a natural choice for $E_{\Lambda, 1}$ given by
\begin{align}
 E_{\Lambda, 1}& = - \langle \varnothing,  a(v_\Lambda)H_0^{-1} a^*(v_\Lambda) \varnothing \rangle_{P=0}  \notag \\
 &= -\int\limits_{|k|\leq \Lambda} \frac{v(k)^2}{\Omega(k) + \omega(k)+E_0} \ud k \sim \Lambda^{\delta} \label{eq:self-E_1}.
\end{align}
Now $T_{\Lambda, j+1}$ should be the coefficient of $g^{2j+2}$ in
\begin{equation}
 - a(gv_\Lambda)\Big(H_0 + \sum_{\ell=1}^j  g^{2j} T_{\Lambda,\ell}\Big)^{-1} a^*(gv_\Lambda)- \sum_{\ell=1}^j g^{2\ell} T_{\Lambda,\ell} - g^{2j+1} E_{\Lambda, j+1}.
\end{equation}
To obtain an explicit formula, we use the following resolvent expansion that follows by induction from the resolvent formula, see~\cite[Lem.3.13]{bossmann2021}.
\begin{lem}\label{lem:res-gen}
 Let $H_0$ be closed and and set $H_n=H_0+\sum_{m=1}^n g^{2m} T_m$ for some $H_0$-bounded operators $T_m$, $m\in \{1, \dots, n\}$. For $-z\in \rho(H_0)\cap \rho(H_n)$ and any $L\in \N_0$ it holds
 \begin{equation*}
  (H_n+z)^{-1} = (H_0+z)^{-1} \sum_{\ell=0}^L  g^{2\ell} S_\ell  - g^{2L+2} \sum_{\ell=0}^L \sum_{j=\ell+1}^{n}(H_n+z)^{-1}T_j (H_0+z)^{-1} S_{L-\ell}
 \end{equation*}
with 
\begin{equation*}
S_0=1, \qquad S_{\ell}= \sum_{\nu=1}^\ell (-1)^\nu \sum_{1\leq j_1, \dots, j_\nu \leq n \atop j_1+\dots+j_\nu=\ell} \prod_{\mu=1}^\nu T_{j_\mu}(H_0+z)^{-1},\quad \ell\in \N.
\end{equation*}
\end{lem}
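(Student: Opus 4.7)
My plan is to prove the expansion by induction on $L$, using only the second resolvent identity $(H_n+z)^{-1} = (H_0+z)^{-1} - (H_n+z)^{-1}(H_n-H_0)(H_0+z)^{-1}$ as analytic input. For the base case $L=0$ one has $S_0=1$, and the stated identity collapses to
\begin{equation*}
(H_n+z)^{-1} = (H_0+z)^{-1} - \sum_{j=1}^n (H_n+z)^{-1} T_j (H_0+z)^{-1},
\end{equation*}
which is precisely the second resolvent identity written out for $H_n-H_0 = \sum_{j=1}^n T_j$.

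For the inductive step I assume the formula at level $L$ and apply the resolvent identity once more to every $(H_n+z)^{-1}$ factor appearing in the remainder. This splits the remainder into a ``resolved'' piece $-\sum_{\ell=0}^L\sum_{j=\ell+1}^n (H_0+z)^{-1} T_j (H_0+z)^{-1} S_{L-\ell}$ and an updated triple-sum remainder
\begin{equation*}
+\sum_{\ell=0}^L\sum_{j=\ell+1}^n\sum_{k=1}^n (H_n+z)^{-1} T_k (H_0+z)^{-1} T_j (H_0+z)^{-1} S_{L-\ell}.
\end{equation*}
To identify the new main term I regroup the resolved piece by the total index sum $m = j + (L-\ell)$ of the product $T_j (H_0+z)^{-1} S_{L-\ell}$, which satisfies $m \geq L+1$ since $j \geq \ell+1$. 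Using the key combinatorial identity $-T_j (H_0+z)^{-1} S_{m-j} = \tilde S_{m,j}$, where $\tilde S_{m,j}$ denotes the subsum of $S_m$ whose leading factor is $T_j$, the slice $m = L+1$ contributes $(H_0+z)^{-1}\sum_{j=1}^n \tilde S_{L+1,j} = (H_0+z)^{-1} S_{L+1}$, since $\tilde S_{L+1, j}$ vanishes automatically for $j > L+1$. This upgrades the main term from $\sum_{\ell=0}^L S_\ell$ to $\sum_{\ell=0}^{L+1} S_\ell$.

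It then remains to show that the overflow contributions with $m > L+1$, together with the triple-sum remainder, reassemble into the target form $-\sum_{\ell'=0}^{L+1}\sum_{j'=\ell'+1}^n (H_n+z)^{-1} T_{j'}(H_0+z)^{-1} S_{L+1-\ell'}$. I would do this by reindexing both pieces in $(\ell',j')$ and applying the identity $T_j(H_0+z)^{-1} S_m = -\tilde S_{j+m,j}$ in reverse to reconstitute the $S_{L+1-\ell'}$ factors. The main obstacle is precisely this combinatorial bookkeeping: one must verify that the index ranges $\ell \in \{0,\dots,L\}$, $j \geq \ell+1$ of the old remainder, enriched by the new summation index $k \in \{1,\dots,n\}$, recombine exactly into the ranges $\ell' \in \{0,\dots,L+1\}$, $j' \geq \ell'+1$ of the new remainder, without over- or undercounting. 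The sign bookkeeping is routine since each additional $T$-factor picks up one extra $(-1)$ from the resolvent identity, so the substantive work lies in matching the domains of summation.
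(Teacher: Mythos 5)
Your base case, your combinatorial identity $-T_j(H_0+z)^{-1}S_{m-j}=\tilde S_{m,j}$, and the identification of the $m=L+1$ slice of the resolved piece with $(H_0+z)^{-1}S_{L+1}$ are all correct. The gap is in the reassembly step. The overflow carries the prefix $(H_0+z)^{-1}$, while the triple-sum and the target remainder both carry $(H_n+z)^{-1}$; reindexing in $(\ell',j')$ and regrouping via the combinatorial identity cannot bridge that mismatch on their own. The missing ingredient is a \emph{reverse} application of the resolvent identity: for each overflow term $-(H_0+z)^{-1}T_j(H_0+z)^{-1}S_{L-\ell}$ with $j\geq\ell+2$, pair it with the block $\sum_{k=1}^n(H_n+z)^{-1}T_k(H_0+z)^{-1}T_j(H_0+z)^{-1}S_{L-\ell}$ of the triple-sum and use $(H_0+z)^{-1}-(H_n+z)^{-1}\sum_{k}T_k(H_0+z)^{-1}=(H_n+z)^{-1}$ to contract the pair to $-(H_n+z)^{-1}T_j(H_0+z)^{-1}S_{L-\ell}$; these are exactly the target terms with $\ell'=\ell+1\geq1$, $j'=j$. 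What is then left of the triple-sum, namely the $j=\ell+1$ terms, collapses by your combinatorial identity to $-\sum_{k}(H_n+z)^{-1}T_k(H_0+z)^{-1}S_{L+1}$, the $\ell'=0$ slice of the target.

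With that step supplied your induction does close, so the route is viable, but it is circuitous: you expand \emph{every} $(H_n+z)^{-1}$ in the remainder and then immediately re-contract most of them, creating the overflow only to cancel it. A cleaner inductive step isolates the $j=\ell+1$ terms of the level-$L$ remainder, observes $-\sum_{\ell=0}^L(H_n+z)^{-1}T_{\ell+1}(H_0+z)^{-1}S_{L-\ell}=(H_n+z)^{-1}S_{L+1}$ via the combinatorial identity, applies the resolvent identity once to that single factor, and merely reindexes the remaining remainder terms by $\ell'=\ell+1$; no overflow is ever generated. (The paper itself does not prove this lemma but cites an external reference, so I cannot compare against its own argument.)
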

The expansion of Lemma~\ref{lem:res-gen} yields the recursive formula for $T_{\Lambda, j}$ (using the notation $J=(j_1, \dots, j_\nu)$)
\begin{equation}\label{eq:T_Lambda def}
 T_{\Lambda, j+1} + E_{\Lambda, j+1} = \sum_{\nu=1}^j (-1)^{\nu+1} 
 \sum_{J\in \{1, \dots, j\}^\nu \atop |J|=j}  a(v_\Lambda)H_0 ^{-1} \Big(\prod_{\ell=1}^\nu T_{\Lambda,j_\ell}H_0^{-1} \Big) a^*(v_\Lambda),
\end{equation}
where the numbers $E_{\Lambda, j}$ can be chosen to be the vacuum expectation values of $T_{\Lambda, j}$ with $P=0$. These numbers arise naturally in the formal perturbation series for the ground state energy at $P=0$. 
The coupling constant $g$ was only introduced here as a marker in the power counting argument, hence we again set $g\equiv1$ for the remainder of the article.

Roughly speaking, the terms in $T_{\Lambda, j}$ involve $j$ of each creation and annihilation operators and $2j-1$ resolvents of $H_0$.
If we were not missing one resolvent, we could expect these operators to be bounded, but this defect should become less and less important for large $j$. We thus expect more and more regular behaviour as $j$ increases. Moreover, the numbers $E_{\Lambda, j}$ should diverge more slowly, or not at all, for larger $j$. Simply adding up the homogeneities of the different factors, we expect that
\begin{equation}
 E_{\Lambda,n}\sim \Lambda^{2n(d/2-\alpha)-(2n-1)\gamma} = \Lambda^{\delta - (n-1)(\gamma-\delta)},
\end{equation}
when $v, \omega, \Omega$ are essentially homogeneous. 
Note that the exponent of $\Lambda$ is non-negative exactly if $n\leq n_*$, so there should be no more divergent contributions for $n>n_*$.

The ansatz~\eqref{eq:self-E_1} for $E_{\Lambda, 1}$ is well known, e.g. from the Nelson model, where it diverges logarithmically. When higher-order contributions $E_{\Lambda, j}$ are also divergent for $\Lambda\to \infty$ these higher orders must also be taken into account when constructing $H$. For example, in the Bogoliubov-Fröhlich model, $E_{\Lambda,1}\sim \Lambda$, $ E_{\Lambda,2}\sim\log(1+\Lambda)$ and since by the results of~\cite{La19, La20} the operators $H_\Lambda + E_{\Lambda, 1} + E_{\Lambda, 2}$ converge to $H$ as $\Lambda\to \infty$, $H_\Lambda  +E_{\Lambda, 1}$ cannot have a limit.

The important point is now that the definition of the operators $T_\Lambda$ can be extended to $\Lambda=\infty$. For this, it is necessary to understand the cancellations between the divergent sequences $E_{\Lambda,j}$ and the other terms in $T_{\Lambda, j}$. 
Consider $T_{\Lambda, 1}$. It involves one creation  operator on the right and an annihilation operator to the left.  To bring it into a sort of normal order, we use the pull-through formula
\begin{equation}\label{eq:pull}
 a_q (\Omega(\ud \Gamma(k)-P) + \ud \Gamma(\omega))^{-1} = (\Omega(\ud \Gamma(k)+q-P) + \ud \Gamma(\omega)+\omega(q))^{-1} a_q.
\end{equation}
With this, we obtain
\begin{align}
 T_{\Lambda,1} & = - a(v_\Lambda)(H_0 +E_0)^{-1} a^*(v_\Lambda) - E_{\Lambda, 1} \notag\\
 &= - \int \ud q \ud r\, a_r \frac{v_\Lambda(q) v_\Lambda(r)}{\Omega(\ud \Gamma(k)-P) + \ud \Gamma(\omega)+E_0} a^*_q- E_{\Lambda, 1} \notag \\
 &= -\int \ud q \ud r \, a_q^* \frac{v_\Lambda(q) v_\Lambda(r)}{\Omega(\ud \Gamma(k)+q+r-P) + \ud \Gamma(\omega)+\omega(q)+\omega(r)+E_0} a_r
 \label{eq:T_1,od}\\
 &\qquad- \int \ud q \ud r  \frac{v_\Lambda(q) v_\Lambda(r)\delta(q-r)}{\Omega(\ud \Gamma(k)+r-P) + \ud \Gamma(\omega)+ \omega(r)+E_0} - E_{\Lambda, 1}.\label{eq:T_1,d}
\end{align}
Note that~\eqref{eq:T_1,d} acts on the $n$-boson space as a multiplication operator, while~\eqref{eq:T_1,od}
acts essentially as an integral operator, and vanishes on the vacuum. Hence~\eqref{eq:T_1,od} will make sense even for $\Lambda=\infty$ on functions that decay sufficiently fast, while~\eqref{eq:T_1,d} will have a limit for $\Lambda\to \infty$ by choice of $  E_{\Lambda, 1}$ (note that this is the value of the integral with zero bosons and $P=0$).
For general $n=1, \dots, n_*$ we define
\begin{equation}
 T_{\Lambda, n} = \sum_{m=0}^n \theta_{\Lambda, n, m},
\end{equation}
where $\theta_{\Lambda, n, j}$ is the part of $T$ that, after normal ordering as above, contains exactly $j$ creation and annihilation operators and $E_{\Lambda,n}$ is absorbed into $\theta_{\Lambda,n, 0}$ (i.e., for $n=1$ we have $\theta_{\Lambda,1,0}=\eqref{eq:T_1,d}$ and $\theta_{\Lambda, 1,1}=\eqref{eq:T_1,od}$). The $\theta_{\Lambda,n,m}$ are given by a recursive formula, see~\eqref{eq:theta-nm-def} for $m>0$ and~\eqref{eq:theta-n0-def} for $m=0$.
For the operators $\theta_{\Lambda, n, 0}$ we may take the limit $\Lambda\to \infty$, while for 
$\theta_{\Lambda, n, j}$, $j\geq 1$, we can simply set $\Lambda= \infty$ and obtain an unbounded operator. This defines the operator $T_n:=T_{\infty, n}$.  

\begin{prop}\label{prop:T} 
 Let $n\leq n_*$. 
 \begin{enumerate}[a)]
  \item\label{prop:T-bound}
  For all $s<\tfrac12 n(1-\delta/\gamma)$ there exists a constant $C>0$ so that for all $\Lambda \in \R_+ \cup \{\infty\}$ and $\Psi \in D(H_0)$
 \begin{equation*}
  \|T_{\Lambda,n} \Psi \|_\cF \leq C\|(\ud \Gamma(\omega)+E_0)^{-s}H_0\Psi\|_\cF;
 \end{equation*}
 \item\label{prop:T-conv}
 For all $\Psi\in D(H_0)$
 \begin{equation*}
   \lim_{\Lambda\to \infty} \|T_{\Lambda,n} \Psi - T_n \Psi\|_\cF=0;
  \end{equation*}
  \item\label{prop:T-sym} $T_{\Lambda,n}$ defines a symmetric operator on $D(H_0)$ for $\Lambda \in \R_+ \cup \{\infty\}$.
  \end{enumerate}
 \end{prop}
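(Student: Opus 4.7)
The plan is to prove all three statements by simultaneous induction on $n$, using the decomposition $T_{\Lambda,n}=\sum_{m=0}^{n}\theta_{\Lambda,n,m}$ into normal-ordered pieces. For the base case $n=1$, the off-diagonal term~\eqref{eq:T_1,od} is an integral operator with kernel $v_\Lambda(q)v_\Lambda(r)$ divided by a shifted dispersion; splitting the denominator symmetrically and using $v\omega^{-s}\in L^2$ (which holds whenever $2s>\delta$, by Assumption~\ref{ass:main}) would yield the required bound by $(\ud\Gamma(\omega)+E_0)^{-s}H_0$ for any $s<\tfrac12(1-\delta/\gamma)$. For the diagonal piece~\eqref{eq:T_1,d}, after subtracting the vacuum value $E_{\Lambda,1}$ the integrand becomes a finite difference of the shifted dispersion, which is controlled by the $C^\gamma$-hypothesis on $\Omega$ via a Taylor expansion at the unshifted point.

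For the inductive step, I would substitute the inductive bounds into the recursion~\eqref{eq:T_Lambda def}, which expresses $T_{\Lambda,n+1}+E_{\Lambda,n+1}$ as a sum of operators of the form $a(v_\Lambda)H_0^{-1}\prod_\ell T_{\Lambda,j_\ell}H_0^{-1}a^*(v_\Lambda)$ with $\sum_\ell j_\ell=n$. Inserting $(\ud\Gamma(\omega)+E_0)^{\pm s_\ell}$ on either side of each $T_{\Lambda,j_\ell}$, applying the inductive estimate together with $H_0\gtrsim \ud\Gamma(\omega)$, and commuting the residual weights past the outer creation and annihilation operators via the pull-through~\eqref{eq:pull}, the individual gains $s_\ell<\tfrac12 j_\ell(1-\delta/\gamma)$ should add up with an extra $\tfrac12(1-\delta/\gamma)$ contributed by the outer bracket $a(v_\Lambda)\cdots a^*(v_\Lambda)$ (the ``missing'' resolvent), yielding the total exponent $s<\tfrac12(n+1)(1-\delta/\gamma)$. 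The kernel-level estimates required to make this rigorous are the technical content of Section~\ref{sect:bounds}, which I take as available.

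Part~\ref{prop:T-conv}) would then follow from \ref{prop:T-bound}) by dominated convergence: the normal-ordered kernels converge pointwise as $\Lambda\to\infty$ since $v_\Lambda\to v$ pointwise, and the uniform bound of part~\ref{prop:T-bound}) supplies the dominating function. Part~\ref{prop:T-sym}) is a direct consequence of the recursion: reversing an index sequence $J=(j_1,\dots,j_\nu)$ in~\eqref{eq:T_Lambda def} produces the formal adjoint of the original term, so by symmetry of the sum over $J$ and the reality of $E_{\Lambda,n+1}$ (a vacuum expectation at $P=0$), $T_{\Lambda,n+1}$ coincides with its adjoint on $D(H_0)$.

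The main obstacle I anticipate is the bookkeeping of the $(\ud\Gamma(\omega)+E_0)^{-s}$ weights across the nested recursion: each $\theta_{\Lambda,n,m}$ with $m>0$ contains creation and annihilation operators that additively shift these weights, and the shifts must be absorbed by the kernel denominator at every level of the recursion. The strict inequality $\delta<\gamma$ is precisely what ensures a positive gain $\tfrac12(1-\delta/\gamma)>0$ per step, so that the induction terminates successfully within $n_*$ iterations.
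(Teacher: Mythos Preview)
Your proposal captures the right overall architecture—the normal-ordered decomposition $T_{\Lambda,n}=\sum_m\theta_{\Lambda,n,m}$, dominated convergence for~\ref{prop:T-conv}), and symmetry of the recursion for~\ref{prop:T-sym})—and these parts match the paper closely. There is, however, a genuine gap in the inductive step for~\ref{prop:T-bound}).

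The issue is that the inductive hypothesis you propose, namely the \emph{one-sided} operator bound $\|T_{\Lambda,j}\Psi\|\leq C\|(\ud\Gamma(\omega)+E_0)^{-s_j}H_0\Psi\|$, is too weak to chain through the product in~\eqref{eq:T_Lambda def}. After writing $T_{\Lambda,j_\ell}H_0^{-1}=B_\ell\,(\ud\Gamma(\omega)+E_0)^{-s_\ell}$ with $B_\ell$ bounded, the residual weight $(\ud\Gamma(\omega)+E_0)^{-s_\ell}$ does not commute past the next factor $B_{\ell+1}$, so the gains $s_\ell$ do not simply add. Pull-through only shifts the arguments of functions of $\ud\Gamma(k),\ud\Gamma(\omega)$; it does nothing for the off-diagonal pieces $\theta_{\Lambda,j,m}$, $m\geq 1$, which contain genuine creation and annihilation operators. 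What is needed is a \emph{two-sided} statement that tracks how decay can be shifted between the creation and annihilation variables.

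This is precisely what the paper does: the induction is carried out not on the operator bound of Proposition~\ref{prop:T} but on the kernel-level statement $\theta_{\Lambda,n,m}\in\cK_{m,(n-m)(1-\delta/\gamma)}$ (Theorem~\ref{thm:theta}), where the classes $\cK_{m,\lambda}$ of Definition~\ref{def:K_n} encode bounds of the form $\min_{s\in[\sigma-1,1-\sigma]}\rho_{m,\sigma+s}(Q,E)\tilde\rho_{m,\sigma-s}(R,E)$. The freedom in $s$ is exactly the two-sided flexibility that lets products accumulate decay (Theorem~\ref{thm:K-star-prod}). Proposition~\ref{prop:T}\ref{prop:T-bound}) is then \emph{deduced} from Theorem~\ref{thm:theta} via the Schur-type operator bound of Lemma~\ref{lem:K_n-op}, with no further induction. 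Since you already grant the kernel machinery of Section~\ref{sect:bounds}, the fix is to abandon the operator-level induction and quote Theorem~\ref{thm:theta} together with Lemma~\ref{lem:K_n-op} directly.

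Two smaller points. First, the condition for $v\omega^{-s}\in L^2$ is $s>\tfrac12(1+\delta/\gamma)$, not $2s>\delta$. Second, for~\ref{prop:T-sym}) at $\Lambda=\infty$ the recursion~\eqref{eq:T_Lambda def} is not an operator identity (the right-hand side diverges before $E_{\Lambda,n+1}$ is subtracted), so the reversal-of-$J$ argument only applies for finite $\Lambda$; the paper then obtains symmetry of $T_n$ from that of $T_{\Lambda,n}$ via the strong convergence of part~\ref{prop:T-conv}).
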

This proposition is key to our proof of Theorem~\ref{thm:H} and requires the most work. It is proved in Section~\ref{sect:propT}. Note that by~\ref{prop:T-bound}), $T_{\Lambda, n}$ is bounded relative to $H_0$ uniformly in $\Lambda$, and for $E_0$ large enough the relative bound is less than one.  
 
We set 
\begin{equation}
T=\sum_{j=1}^{n_*} T_j, \qquad D(T):=D(H_0).
\end{equation}
We also define
\begin{equation}\label{eq:E-def}
 E_\Lambda:= \sum_{m=1}^{n_*} E_{\Lambda, m}
\end{equation}
with $E_{\Lambda, m}$ given explicitly in~\eqref{eq:E_n-def}.

From Proposition~\ref{prop:T} we know that $H_0+T_\Lambda$ is self-adjoint and positive for $E_0$ sufficiently large. We can thus make sense of $G_{T_\Lambda}$.

\begin{prop}\label{prop:G}
 For $E_0\geq 0$ sufficiently large and $\Lambda\in \R_+\cup\{\infty\}$, the operator
 \begin{equation*}
  G_{T_\Lambda}:=- (H_0+T_{\Lambda})^{-1}a^*(v)
 \end{equation*}
is bounded on $\cF$ uniformly in $\Lambda\in \R_+\cup \{\infty\}$. Moreover, $(1-G_{T_\Lambda})$ is invertible with uniformly bounded inverse and $G_{T_\Lambda}$ converges to $G_T$ strongly as $\Lambda\to \infty$.
\end{prop}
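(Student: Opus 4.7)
The plan is to deduce everything from Proposition~\ref{prop:T}: control $H_0+T_\Lambda$ via Kato--Rellich, factor $G_{T_\Lambda}$ through the bounded operators $H_0(H_0+T_\Lambda)^{-1}$ and $H_0^{-1}a^*(v)$, and extract strong convergence from strong convergence of the first factor.

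From Proposition~\ref{prop:T}\ref{prop:T-bound}) with any fixed $s>0$ (admissible since $1-\delta/\gamma>0$), each $T_{\Lambda,n}$ satisfies $\|T_{\Lambda,n}\Psi\|_\cF\leq C E_0^{-s}\|H_0\Psi\|_\cF$, so $T_\Lambda$ is $H_0$-bounded uniformly in $\Lambda$ with relative bound $a=a(E_0)\to 0$ as $E_0\to\infty$. Combined with the symmetry from Proposition~\ref{prop:T}\ref{prop:T-sym}), the Kato--Rellich theorem gives self-adjointness of $H_0+T_\Lambda$ on $D(H_0)$, together with the lower bound $H_0+T_\Lambda\geq (1-a)E_0$ for $E_0$ large. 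The inequality
\[
 \|(H_0+T_\Lambda)\Psi\|_\cF\geq (1-a)\|H_0\Psi\|_\cF
\]
on $D(H_0)$ yields the uniform operator bounds $\|H_0(H_0+T_\Lambda)^{-1}\|\leq (1-a)^{-1}$ and $\|(H_0+T_\Lambda)^{-1}\|\leq ((1-a)E_0)^{-1}$.

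Next, Lemma~\ref{lem:a-omega} provides $a(v)H_0^{-1}\in\mathscr{L}(\cF)$, and taking adjoints in the identity $a(v)(H_0+T_\Lambda)^{-1}=\bigl(a(v)H_0^{-1}\bigr)\bigl(H_0(H_0+T_\Lambda)^{-1}\bigr)$ gives
\[
 G_{T_\Lambda} = -\bigl((H_0+T_\Lambda)^{-1}H_0\bigr)\bigl(H_0^{-1}a^*(v)\bigr),
\]
so $\|G_{T_\Lambda}\|\leq (1-a)^{-1}\|a(v)H_0^{-1}\|$ uniformly in $\Lambda$. A quantitative refinement of Lemma~\ref{lem:a-omega} that uses the full denominator of $H_0$ rather than only $\ud\Gamma(\omega)$ shows that $\|a(v)H_0^{-1}\|$ can be made smaller than any prescribed $\kappa\in(0,1-a)$ for $E_0$ sufficiently large, so $\|G_{T_\Lambda}\|<1$ uniformly in $\Lambda$; the Neumann series $(1-G_{T_\Lambda})^{-1}=\sum_{N\geq 0}G_{T_\Lambda}^N$ then converges in operator norm with uniform bound.

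For the strong convergence I establish first that $(H_0+T_\Lambda)^{-1}H_0\to (H_0+T)^{-1}H_0$ strongly on $\cF$. For $\Psi\in D(H_0)$, the second resolvent identity gives
\[
 (H_0+T_\Lambda)^{-1}H_0\Psi-(H_0+T)^{-1}H_0\Psi=(H_0+T_\Lambda)^{-1}(T-T_\Lambda)(H_0+T)^{-1}H_0\Psi,
\]
and since $(H_0+T)^{-1}H_0\Psi\in D(H_0)$, Proposition~\ref{prop:T}\ref{prop:T-conv}) applied to this vector together with the uniform resolvent bound forces the right-hand side to zero. A standard density argument using the uniform bound on $(H_0+T_\Lambda)^{-1}H_0$ extends the convergence to all $\Psi\in\cF$. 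Writing $G_{T_\Lambda}\Psi=-(H_0+T_\Lambda)^{-1}H_0\,\chi_\Psi$ with $\chi_\Psi:=H_0^{-1}a^*(v)\Psi\in\cF$ independent of $\Lambda$, strong convergence of the first factor applied to the fixed vector $\chi_\Psi$ yields $G_{T_\Lambda}\Psi\to G_T\Psi$. The main technical obstacle is precisely the refinement of Lemma~\ref{lem:a-omega} needed to guarantee $\|a(v)H_0^{-1}\|<1-a$ for $E_0$ large: the crude estimate $\|a(v)H_0^{-1}\|\leq\|v\omega^{-1}\|_{L^2}$ coming from $\ud\Gamma(\omega)\leq H_0$ is independent of $E_0$ and need not be less than one, so the $\Omega$ and $E_0$ contributions to $H_0$ must be genuinely exploited. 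Once this quantitative bound is secured, every other assertion in the proposition follows by routine operator-theoretic arguments.
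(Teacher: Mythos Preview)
Your overall strategy matches the paper's: control $H_0+T_\Lambda$ via Kato--Rellich from Proposition~\ref{prop:T}, factor $G_{T_\Lambda}$ through $a(v)H_0^{-1}$, and extract strong convergence from the resolvent identity combined with Proposition~\ref{prop:T}\ref{prop:T-conv}). The uniform bounds on $H_0(H_0+T_\Lambda)^{-1}$ and the convergence argument are correct and essentially identical to the paper's.

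However, you leave the one nontrivial step unproven. You correctly identify that the crude bound $\|a(v)H_0^{-1}\|\leq\|v\omega^{-1}\|_{L^2}$ is independent of $E_0$ and hence useless for forcing $\|G_{T_\Lambda}\|<1$, and you then \emph{assert} that ``a quantitative refinement of Lemma~\ref{lem:a-omega} that uses the full denominator of $H_0$'' resolves this. But you do not supply the refinement, and your proof is therefore incomplete at precisely its crucial point. In fact no new lemma and no use of $\Omega$ are needed: the paper simply applies Lemma~\ref{lem:a-omega} with an exponent $s\in\bigl(\tfrac12(1+\delta/\gamma),\,1\bigr)$ rather than $s=1$. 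This range is nonempty since $\delta<\gamma$, and for such $s$ one has $\|v\omega^{-s}\|_{L^2}<\infty$. Since $\ud\Gamma(\omega)+E_0\leq H_0$, the elementary spectral bound $\|\ud\Gamma(\omega)^s H_0^{-1}\|\lesssim E_0^{s-1}$ holds, and therefore
\[
 \|a(v)H_0^{-1}\|\leq\|a(v)\ud\Gamma(\omega)^{-s}\|\,\|\ud\Gamma(\omega)^s H_0^{-1}\|\leq\|v\omega^{-s}\|_{L^2}\,E_0^{s-1}\xrightarrow[E_0\to\infty]{}0.
\]
The missing decay comes from choosing $s<1$, not from exploiting $\Omega$. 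Once this bound is secured, all of your remaining assertions (uniform boundedness, Neumann series for $(1-G_{T_\Lambda})^{-1}$, strong convergence) are correct.
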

\begin{proof}
 First note that, by Proposition~\ref{prop:T}\ref{prop:T-bound}), we can choose $E_0$ sufficiently large so that for all $\Lambda \in \R_+\cup\{\infty\}$
\begin{equation}\label{eq:ResLambda-unif}
 \|(H_0+T_\Lambda)^{-1}\| + \|T_\Lambda(H_0+T_\Lambda)^{-1}\| \leq 1.
\end{equation}
Then, we use Lemma~\ref{lem:a-omega} to obtain for $\tfrac12(1+\delta/\gamma)<s<1$
 \begin{align}
  \|a(v)H_0^{-1}\| 
  &\leq \|v \omega^{-s}\|_{L^2}  \| \ud \Gamma(\omega)^{s}H_0^{-1}\| \notag\\
  & \leq \|v \omega^{-s}\|_{L^2} E_0^{s-1}.
 \end{align}
 
Hence 
\begin{equation}
 \| G_{T_\Lambda}^* \| =  \| a(v)H_0^{-1} T_{\Lambda}(H_0+T_\Lambda )^{-1}-a(v)H_0^{-1} \|\leq 2\|v \omega^{-s}\|_{L^2} E_0^{s-1}.
\end{equation}
The adjoint of this operator, that is $G_{T_\Lambda}$, is also bounded, with the same norm.
Moreover, for $E_0$ large enough, this norm is less than one and $1-G_{T_\Lambda}$ is invertible by Neumann series with uniformly bounded inverse.

Finally, 
\begin{equation}
 G_T^*-G_{T_\Lambda}^* = G_{T_\Lambda}^*(T_\Lambda-T)(H_0+T)^{-1}
\end{equation}
converges to zero strongly, because $(T_\Lambda-T)(H_0+T)^{-1}\to 0$ by Proposition~\ref{prop:T}\ref{prop:T-conv}) and $G_{T_\Lambda}^*$ is uniformly bounded.
\end{proof}

Using the resolvent expansion~\ref{lem:res-gen} we can calculate the remainder in the representation~\eqref{eq:HLambda-G} of $H_\Lambda-E_\Lambda$, i.e. with $E_\Lambda=\sum_{j=1}^{n_*} E_{\Lambda,j}$,
\begin{align}\label{eq:RLambda-def}
 R_\Lambda& := -a(v_\Lambda) (H_0+T_\Lambda)^{-1}a^*(v_\Lambda) - T_\Lambda -E_\Lambda \\
 &= \begin{aligned}[t]
     a(v_\Lambda) &(H_0+T_\Lambda)^{-1} \sum_{\ell=0}^{n_*-1} \sum_{j=\ell+1}^{n_*} T_{\Lambda,j} \\
     &\times \sum_{\nu=1}^{n_*-1-\ell}(-1)^{\nu}\sum_{J\subset\{1, \dots, n_*\}^\nu \atop |J|=n_*-1-\ell} \Big( \prod_{\mu=1}^\nu H_0^{-1} T_{\Lambda, j_\mu}\Big) H_0^{-1} a^*(v_\Lambda).
    \end{aligned}\notag
\end{align}
The individual factors in this expression are defined also for $\Lambda=\infty$ and using the same methods as in the proof of Proposition~\ref{prop:T} we can show that $R:=R_\infty$ is a bounded operator.

\begin{prop}\label{prop:R}
 The operator
 \begin{equation*}
  R=G_T^*  \sum_{\ell=0}^{n_*-1} \sum_{j=\ell+1}^{n_*} T_{j} \sum_{\nu=1}^{n_*-1-\ell}(-1)^{\nu}\sum_{J\subset\{1, \dots, n_*\}^\nu \atop |J|=n_*-1-\ell} \Big( \prod_{\mu=1}^\nu H_0^{-1} T_{j_\mu}\Big) G_0
 \end{equation*}
is bounded on $\cF$. The family $R_\Lambda$, $\Lambda\in \R_+ \cup\{\infty\}$ is uniformly bounded and converges to $R$ strongly as $\Lambda\to \infty$.
\end{prop}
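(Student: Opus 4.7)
The plan is to factor each summand of~\eqref{eq:RLambda-def} into three pieces and bound them separately, then handle convergence. Writing $a(v_\Lambda)(H_0+T_\Lambda)^{-1}=-G_{T_\Lambda,v_\Lambda}^*$ on the left and $H_0^{-1}a^*(v_\Lambda)=-G_{0,v_\Lambda}$ on the right, every summand takes the form
\begin{equation*}
  G_{T_\Lambda,v_\Lambda}^*\cdot\Big(T_{\Lambda,j}\prod_{\mu=1}^\nu H_0^{-1}T_{\Lambda,j_\mu}\Big)\cdot G_{0,v_\Lambda}.
\end{equation*}
The outer factors are uniformly bounded in $\Lambda\in\R_+\cup\{\infty\}$ by Proposition~\ref{prop:G} and by Lemma~\ref{lem:a-omega} (for $s\in(\tfrac12(1+\delta/\gamma),1)$ one has $\|v_\Lambda\omega^{-s}\|_{L^2}\leq\|v\omega^{-s}\|_{L^2}<\infty$). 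The key index bookkeeping is that $j\geq\ell+1$ and $|J|=n_*-1-\ell$, so the total order satisfies $j+|J|\geq n_*\geq(1-\delta/\gamma)^{-1}$.

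The main work is to bound the interior product of $\nu+1$ factors of $T_\Lambda$ and $\nu$ resolvents uniformly in $\Lambda$. I would cascade Proposition~\ref{prop:T}\ref{prop:T-bound}) through the chain: at each $T_{\Lambda,j_\mu}$ the bound $\|T_{\Lambda,j_\mu}\Phi\|\lesssim\|(\ud\Gamma(\omega)+E_0)^{-s_{j_\mu}}H_0\Phi\|$ with $s_{j_\mu}<\tfrac{j_\mu}{2}(1-\delta/\gamma)$ contributes a factor $(\ud\Gamma(\omega))^{-s_{j_\mu}}$, while its $H_0$ factor cancels the neighbouring $H_0^{-1}$. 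The accumulated regularity is $(\ud\Gamma(\omega))^{-\sigma}$ with $\sigma<\tfrac{j+|J|}{2}(1-\delta/\gamma)$, and the total-order constraint $j+|J|\geq(1-\delta/\gamma)^{-1}$ permits the choice $\sigma>\tfrac12(1+\delta/\gamma)$. This is exactly the regularity for which $(\ud\Gamma(\omega))^{-\sigma}a^*(v)$ is bounded (by duality from Lemma~\ref{lem:a-omega}), so the rightmost $G_{0,v_\Lambda}$ can be absorbed. Multiplying by the uniformly bounded boundary factors gives the required uniform bound on each summand, and summing the finite number of summands controls $R_\Lambda$; taking $\Lambda=\infty$ with the same constants proves boundedness of $R$.

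For strong convergence $R_\Lambda\to R$, I would use a standard telescoping argument in which the uniform bounds absorb the differences of successive factors. The three inputs are: strong convergence $T_{\Lambda,n}\Psi\to T_n\Psi$ on $D(H_0)$ from Proposition~\ref{prop:T}\ref{prop:T-conv}), strong convergence $G_{T_\Lambda,v_\Lambda}^*\to G_T^*$ from Proposition~\ref{prop:G}, and norm convergence $H_0^{-1}a^*(v_\Lambda)\to H_0^{-1}a^*(v)$, which follows from $\|(v-v_\Lambda)\omega^{-s}\|_{L^2}\to 0$ together with Lemma~\ref{lem:a-omega}. Since each intermediate vector after applying a resolvent $H_0^{-1}$ lies in $D(H_0)$, the strong convergence of Proposition~\ref{prop:T}\ref{prop:T-conv}) applies at every junction, yielding $R_\Lambda\Psi\to R\Psi$ for every $\Psi\in\cF$.

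The main obstacle is the bookkeeping in the cascade: one has to verify that the fractional regularity produced at each $T_{\Lambda,j_\mu}$ can be transported along the chain through the resolvents $H_0^{-1}$ and assembled into a single factor $(\ud\Gamma(\omega))^{-\sigma}$ at the right endpoint, where it must offset the unbounded creation operator $a^*(v)$. This is precisely the point at which the superrenormalisability threshold $n_*\geq(1-\delta/\gamma)^{-1}$ is used; a more pedestrian approach using only the $s=0$ version of Proposition~\ref{prop:T}\ref{prop:T-bound}) would fail at the rightmost link because $H_0^{-1}a^*(v)\Psi$ does not lie in $D(H_0)$ for $\Lambda=\infty$. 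The same cascading technology underlying Proposition~\ref{prop:T} should handle this, but the accounting must be done globally over the full chain rather than factor by factor.
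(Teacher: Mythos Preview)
There is a genuine gap in the arithmetic at the heart of your cascade. You claim that $j+|J|\geq n_*\geq(1-\delta/\gamma)^{-1}$ and that this permits $\sigma>\tfrac12(1+\delta/\gamma)$. Both parts fail. First, by definition $n_*=\lfloor(1-\delta/\gamma)^{-1}\rfloor$, so $n_*(1-\delta/\gamma)\leq 1$; the correct sharp inequality is the one the paper uses, $(n_*+1)(1-\delta/\gamma)>1$, i.e.\ $n_*(1-\delta/\gamma)>\delta/\gamma$. Second, and more seriously, Proposition~\ref{prop:T}\ref{prop:T-bound}) carries a factor $\tfrac12$: the accumulated exponent is at best $\sigma<\tfrac12(j+|J|)(1-\delta/\gamma)$, and for the worst summand $j+|J|=n_*$ this gives $\sigma<\tfrac12 n_*(1-\delta/\gamma)\leq\tfrac12$, which is strictly smaller than the threshold $\tfrac12(1+\delta/\gamma)$ you need to absorb $a^*(v)$. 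The factor $\tfrac12$ is not an artefact of the statement of Proposition~\ref{prop:T}: it is the loss incurred when passing from the kernel bound in $\cK_{n,\lambda}$ to an \emph{operator} bound via Lemma~\ref{lem:K_n-op}, and no cascade built on operator bounds alone can recover it. Your last paragraph correctly senses that the accounting must be global, but the specific mechanism you propose cannot close.

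The paper repairs this in two ways. It drops back to the kernel level, expanding each $T_{\Lambda,j_\mu}=\sum_i\theta_{\Lambda,j_\mu,i}$ and the operator products into $\star_\ell$-products, and then invokes Theorem~\ref{thm:K-star-prod} and Lemma~\ref{lem:tau} to place the whole middle block in some $\cK_{k,\lambda}$ with $\lambda+k(1-\delta/\gamma)\geq n_*(1-\delta/\gamma)$. For $k\geq1$ it then applies the \emph{form} bound Lemma~\ref{lem:K_n-form}, which has no factor $\tfrac12$, so one only needs $2s>1-n_*(1-\delta/\gamma)$. Simultaneously the paper extracts regularity from \emph{both} outer factors: $G_{T_\Lambda}^*(\ud\Gamma(\omega))^s$ and $(\ud\Gamma(\omega))^s G_0$ are each bounded for $s<\tfrac12(1-\delta/\gamma)$, so the form bound must only be beaten by $2s<1-\delta/\gamma$, and the condition becomes $(n_*+1)(1-\delta/\gamma)>1$, which is exactly the definition of $n_*$. (The $k=0$ terms are multiplication operators and are handled by a slightly different splitting using $\|(\ud\Gamma(\omega))^{-t}H_0 G_0\|$ for $t>\tfrac12(1+\delta/\gamma)$.) Your strong-convergence paragraph is fine once the uniform bound is in place.
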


This proposition is proved in Section~\ref{sect:propT}.
We now have all the necessary ingredients to prove Theorem~\ref{thm:H}.

\begin{proof}[Proof of Theorem~\ref{thm:H}]
 We first prove self-adjointness of 
 \begin{equation}
  H:=(1-G_T)^{*}(H_0+T)(1-G_T) + R-E_0
\end{equation}
on $D(H)$. By Proposition~\ref{prop:T}, $T=T_{n_*}$ is $H_0$-bounded with relative bound less than one. Thus for $E_0$ sufficiently large, $H_0+T$ is self-adjoint and invertible by the Kato-Rellich theorem.
 Choosing $E_0$ sufficiently large so that $(1-G_T)$ is invertible by Proposition~\ref{prop:G}, we then have that the operator
\begin{equation}
 (1-G_T)^{*}(H_0+T)(1-G_T) 
\end{equation}
is symmetric on $D(H)=(1-G_T)^{-1}D(H_0)$ and invertible. Hence it is self-adjoint, and as $R-E_0$ is a bounded symmetric operator this proves that $H$ is self-adjoint. 

We now prove that $H_\Lambda-E_\Lambda \to H$ in strong resolvent sense. First, note that $H_\Lambda$ is self-adjoint by the Kato-Rellich theorem, since
\begin{equation}
 \|a(v_\Lambda)\Psi + a^*(v_\Lambda)\Psi\|_\cF \leq \|v_\Lambda\|_{L^2} \| (\cN+1)^{1/2}\Psi \|_\cF \lesssim \| H_0^{1/2}\Psi \|_\cF .
\end{equation}

Now we write $H_\Lambda$ as in~\eqref{eq:HLambda-G} and use the definition of $R_\Lambda$ in~\eqref{eq:RLambda-def} to obtain
\begin{equation}
 H_\Lambda -E_\Lambda = (1-G_{T_\Lambda})^{*}(H_0+T_\Lambda)(1-G_{T_\Lambda}) + R_\Lambda - E_0.
\end{equation}
With this, we have (in the sense of quadratic forms)
\begin{align}
 &(H+\ui)^{-1} - (H_\Lambda -E_\Lambda + \ui)^{-1} \notag\\
&=(H_\Lambda -E_\Lambda + \ui)^{-1}(R_\Lambda-R)(H+\ui)^{-1} \label{eq:resdiff-R}\\
&\qquad+(H_\Lambda -E_\Lambda + \ui)^{-1}(1-G_T^*)(T_\Lambda-T)(1-G_T)(H+\ui)^{-1} \label{eq:resdiff-T}\\
&\qquad+(H_\Lambda -E_\Lambda + \ui)^{-1}(G_T^*-G_{T_\Lambda}^*)(H_0+T_\Lambda)(1-G_T)(H+\ui)^{-1}\label{eq:resdiff-G1} \\
&\qquad+(H_\Lambda -E_\Lambda + \ui)^{-1}(1-G_{T_\Lambda}^*)(H_0+T_\Lambda)(G_T^*-G_{T_\Lambda})(H+\ui)^{-1}\label{eq:resdiff-G2}.
\end{align}
The first line~\eqref{eq:resdiff-R} defines a bounded operator, and since $R_\Lambda-R\to 0$ strongly by Proposition~\ref{prop:R} this converges to zero strongly.
The second line~\eqref{eq:resdiff-T} also defines a bounded operator, because 
\begin{equation}
 (1-G_T)(H+\ui)^{-1} : \cF \to D(H_0)\subset D(T)
\end{equation}
is bounded, and $T_\Lambda-T$ converges strongly to zero on $D(H_0)$ by Proposition~\ref{prop:T}\ref{prop:T-conv}).

For~\eqref{eq:resdiff-G1} we have
\begin{align}
 \eqref{eq:resdiff-G1}
 &= (H_\Lambda -E_\Lambda + \ui)^{-1}(G_T^*-G_{T_\Lambda}^*)(H_0+T)(1-G_T)(H+\ui)^{-1} \\
 &\qquad + (H_\Lambda -E_\Lambda + \ui)^{-1}(G_T^*-G_{T_\Lambda}^*)(T-T_\Lambda)(1-G_T)(H+\ui)^{-1}.
\end{align}
The first term converges to zero because $G_T^*-G_{T_\Lambda}^*\to 0$ by Proposition~\ref{prop:G}, and the second one because $G_T^*-G_{T_\Lambda}^*$ is uniformly bounded by Proposition~\ref{prop:G} and $T-T_\Lambda \to 0$ by Proposition~\ref{prop:T}\ref{prop:T-conv}).

For the final term~\eqref{eq:resdiff-G2}, note that  
\begin{equation}
 (H_\Lambda -E_\Lambda + \ui)^{-1}(1-G_{T_\Lambda}^*)(H_0+T_\Lambda) = \Big((H_0+T_\Lambda)(1-G_{T_\Lambda})(H_\Lambda -E_\Lambda + \ui)^{-1}\Big)^*
\end{equation}
extends to a uniformly bounded operator on $\cF$, since 
\begin{align}
& (H_0+T_\Lambda)(1-G_{T_\Lambda})(H_\Lambda -E_\Lambda + \ui)^{-1} \notag\\
&= (1-G_{T_\Lambda}^*)^{-1} \Big (H_\Lambda-E_\Lambda - R_\Lambda \Big)(H_\Lambda -E_\Lambda + \ui)^{-1} \notag\\
&= (1-G_{T_\Lambda}^*)^{-1}\Big( 1- \ui - R_\Lambda \Big)(H_\Lambda -E_\Lambda + \ui)^{-1}
\end{align}
is uniformly bounded by Propositions~\ref{prop:G},~\ref{prop:R}.
Hence~\eqref{eq:resdiff-G2} converges to zero strongly for the same reason as~\eqref{eq:resdiff-G1}.
This proves strong resolvent convergence $H_\Lambda - E_\Lambda \to H$.
\end{proof}

\section{Construction of $T$}\label{sect:bounds}

As outlined above, the goal of this section is to prove Proposition~\ref{prop:T} by writing
\begin{equation}
 T_{n} = \sum_{j=0}^n \theta_{n, j},
\end{equation}
with an operator of multiplication $\theta_{n,0}$ and operators of integral type $\theta_{n,j}$, $j=1, \dots, n$. We then prove bounds on the kernels of $\theta_{n,j}$ and in an iteration step deduce from these bounds on the kernels of $\theta_{n+1,j}$.
To present these arguments in a clear way, we first introduce the relevant spaces of operators. 

\subsection{Integral operators on $\cF$ and their products}

 We denote $\cK_0$ the vector space of normal operators on $\cF$ acting as
 \begin{equation}\label{eq:kernel0}
 \kappa\Big(\ud \Gamma(k)-P, \ud \Gamma(\omega)+E_0\Big)
 \end{equation}
for a locally bounded function $\kappa:\R^d\times \R_+\to \C$.
Note that in particular $H_0\in \cK_0$ with the function given by
\begin{equation}
 H_0(p,E)=\Omega(p)+E,
\end{equation}
and also 
\begin{equation}\label{eq:H_0inv function}
 H_0^{-1}(p,E)= \frac{1}{\Omega(p)+E}.
\end{equation}
It is also not difficult to see that (see Lemma~\ref{lem:theta_10} below)
\begin{align}
 \theta_{1,0}:&=\lim_{\Lambda \to \infty} \theta_{\Lambda,1,0} \notag\\
 &=  - \lim_{\Lambda \to \infty} \bigg(\int   \frac{|v_\Lambda(\xi)|^2 \ud \xi}{\Omega(\ud \Gamma(k)+\xi-P) + \ud \Gamma(\omega)+E_0+ \omega(\xi)} + E_{\Lambda, 1}\bigg)
 \label{eq:theta-10}
\end{align}
converges to a locally bounded function of $\ud \Gamma(k)-P$, $\ud\Gamma(\omega)+E_0$ and thus defines an element of $\cK_0$.

For $j=1, \dots, n$, the operators $\theta_{n,j}$ will act as
\begin{equation}\label{eq:kernel-a*a}
 \kappa \Psi = \int\limits_{\R^{dm}\times \R^{d\ell}} \Big(\prod_{i=1}^m a^*_{q_i}\Big)\kappa\Big(Q,R, \ud\Gamma(k)-P ,\ud \Gamma(\omega)+E_0\Big)\Big(\prod_{i=1}^\ell a_{r_i}\Big)\Psi\, \ud Q \ud R 
\end{equation}
for some $\kappa \in L^2_\mathrm{loc}(\R^{md} \times \R^{\ell d}\times \R^d \times \R)$ with $m=\ell=j$.
Note that the action of $a(v)$, $a^*(v)$ can also be represented in the form~\eqref{eq:kernel-a*a}, with $(m,\ell)=(0,1)$ and $(m,\ell)=(1,0)$, respectively.
In the following, we will use the same notation for an operator of this type and its (operator-valued) integral kernel.

We will assume bounds on such kernels that ensure that $\kappa$ defines an operator from some dense domain $D\subset \cF$ to $\cF$.
Denote for $n\in \N$ and $\lambda \in \R$ 
\begin{align}\label{eq:rho-def}
\begin{aligned}
 \rho_{n,\lambda}(Q,E) & 
  = \left(\prod_{j=1}^{n-1}\frac{|v(q_j)|}{E+\omega(Q_j^n)} \right)\frac{|v(q_n)|}{(E+\omega(q_n))^{(1+\lambda)/2}} \\
  \tilde \rho_{n, \lambda}(R,E) &= \frac{|v(r_1)|}{(E+\omega(r_1))^{(1+\lambda)/2}}\left(\prod_{j=2}^{n}\frac{|v(r_j)|}{E+\omega(R_1^j)} \right) .
\end{aligned}
\end{align}
with the notation defined in~\eqref{eq:not-Xab},~\eqref{eq:not-omega}.

\begin{definition}\label{def:K_n}
Let $n\in \N_0$.
\begin{itemize}
 \item  For $n>0$ the space $\cK_{n}=\cK_{n, 0}$ is the space of operators acting as~\eqref{eq:kernel-a*a} with $m=\ell=n$ whose kernels satisfy
 \begin{equation*}
  | \kappa(Q,R,p,E)| \lesssim \min_{s\in [-1, 1]} \rho_{n, s}(Q,E) \tilde \rho_{n, -s}(R,E).
 \end{equation*}
 \item For $n>0$, $\lambda> 0$ the space $\cK_{n,\lambda}$ is the subspace of $\cK_n$ such that $\kappa\in\cK_{n,\lambda}$ satisfies for all $0\leq \sigma \leq 1$ with $\sigma< \lambda$ and $E\geq 1$
\begin{equation*}
 | \kappa(Q,R,p,E)| \lesssim E^{-(\lambda-1)_+} \min_{s\in [\sigma -1, 1-\sigma]} \rho_{n, \sigma + s}(Q,E) \tilde \rho_{n, \sigma-s}(R,E).
\end{equation*}
\item For $n=0$ and $\lambda\geq 0$ we denote by $\cK_{0, \lambda}$ the subspace of $\cK_0$ such that for $\kappa \in \cK_{0, \lambda}$ we have for all $0\leq \sigma<\lambda$ and $E\geq 1$
\begin{equation*}
|\kappa(p,E)| \lesssim H_0(p, E) E^{-\sigma}.
\end{equation*}
\end{itemize}
\end{definition}

By Lemma~\ref{lem:K_n-op} an element of $\cK_n$, $n\geq 1$, defines an operator from $D(H_0)$ to $\cF$.
Note that $\cK_{n, \lambda} \subset \cK_{n, \lambda'}$ for $\lambda' \leq \lambda$.

Note from~\eqref{eq:T_1,od} that $\theta_{1,1}$ has kernel
\begin{equation}\label{eq:theta-11}
 \theta_{\Lambda, 1,1}(q,r,p,E)= -\frac{v_\Lambda(q)v_\Lambda(r)}{\Omega(p+q+r)+E + \omega(q) + \omega(r)},
\end{equation}
so it is clearly an element of $\cK_{1,0}$ for $\Lambda \in \R_+\cup \{\infty\}$.

In order to construct the kernels $\theta_{n,j}$ for $n>1$ we will need to take products of such operators.
We will see that these can again be expressed as linear combinations of elements in $\cK_j$ for different $j$'s. Moreover, one gains some decay of the kernels in this process.

 If we take the product $\kappa \kappa'$ of $\kappa\in \cK_n$, $\kappa'\in \cK_m$, we will need to commute all the creation operators in $\kappa'$ to the left and the annihilation operators in $\kappa$ to the right  in order to put the product into the form~\eqref{eq:kernel-a*a}. Since the commutator of $a_r$ with  $a^*_{q'}$ is $\delta(q'-r)$, this leads to ``contractions'' between a variable $r$ of $\kappa$ and a variable $q'$ of $\kappa'$. That is, some of the integrals become part of the definition of a new integral kernel and do not involve variables of $\Psi$ any more. Moreover, when we commute creation or annihilation operators with functions of $\ud \Gamma(k)$ and $\ud \Gamma(\omega)$, we must take into account the pull-through formulas
\begin{equation}\label{eq:pull-gen}
 \begin{aligned}
  a_r f(\ud \Gamma(k)) &=  f(\ud \Gamma(k)+r) a_r \\
  a_r f(\ud \Gamma(\omega)) &= f(\ud \Gamma(\omega)+\omega(r))a_r,
 \end{aligned}
\end{equation}
as well as their adjoint relations (these hold by inspection in the sense that when acting on $\Psi \in D(\cN^{1/2}f(\cdot))$ both sides give the same element of $L^2(\R^d, \ud r)\otimes \cF$).

With this, we can write
\begin{equation}
 \kappa  \kappa' = \sum_{\ell=0}^{\min\{n,m\}} \kappa \star_\ell \kappa',
\end{equation}
where $\star_\ell$ denotes the part of the product with exactly $\ell$ contractions (i.e., $\ell$ commutators $[a_{r_i}, a^*_{q_j'}]$). When there are exactly $\ell$ contractions, these involve $\ell$ of the components of $R$, $r_{i_1}, \dots, r_{i_\ell}$, with $i_1<i_2< \dots <i_\ell$. Each of these is then paired with one of the components of $Q$, i.e. $q_{j_1}, \dots, q_{j_\ell}$, where the $j_\mu$ are pairwise different but otherwise arbitrary. Summing over all possibilities, the kernel of $\kappa \star_\ell \kappa'$ takes the form  
\begin{align}
 &\kappa \star_\ell \kappa'(Q, R,p, E) \label{eq:kappa-star_ell} \\
 &=\sum_{1\leq i_1< \dots <i_\ell\leq n \atop 1 \leq j_1\neq \dots \neq j_\ell \leq m} \int\limits_{\R^{\ell d}} 
  \kappa\bigg(Q_1^n, S, p+ \sum_{\nu=n+1}^{n+m-\ell} q_\nu, E + \sum_{\nu=n+1}^{n+m-\ell} \omega(q_\nu)\bigg)
 \bigg\vert_{S_I=\Xi \atop S_{I^c}=R_1^{n-\ell}} \notag\\
 &\qquad \times \kappa'\bigg( U, R_{n-\ell+1}^{n+m-\ell}, p+\sum_{\mu=1}^{n-\ell} r_\mu, E + \sum_{\mu=1}^{n-\ell} \omega(r_\mu)\bigg)
 \bigg\vert_{U_J=\Xi \atop U_{J^c}=Q_{n+1}^{n+m-\ell}} \ud \Xi  \notag,
\end{align}
 where we used the notation~\eqref{eq:not-X_J}. We will also use the notation $\kappa \star_\ell \kappa'$ for products with $\ell$ contractions of operators acting as in~\eqref{eq:kernel-a*a} for which $Q$, $R$ have different dimensions, notably $a(v)$, $a^*(v)$ . The expression for the kernel $\kappa \star_\ell \kappa'$ in this case can be easily obtained from~\eqref{eq:kappa-star_ell} by appropriately adjusting the arguments (see Lemma~\ref{lem:a-star} for the precise formulas).
 
%
 
 Using the bound provided by the definition of $\cK_n$, it is not difficult to see that the integral over $\Xi$ in~\eqref{eq:kappa-star_ell} converges, since we can always choose the parameter $s$ in that bound in such a way that there is a resolvent for each contracted variable (i.e., take $s=-1$ for $\kappa$ and $s'=1$ for $\kappa'$). However, in general, $\kappa\star_\ell \kappa'$ is not an element of $\cK_{n+m-\ell}$. For us the following will be more important.
 
\begin{thm}\label{thm:K-star-prod}
Let $E_0\geq 1$ and let $n,m\in \N_0$, $\lambda, \lambda'\geq 0$, and $\kappa\in \cK_{n,\lambda}$, $\kappa'\in\cK_{m,\lambda'}$. Then for $\ell \leq \min\{n,m\}$
 \begin{equation*}
  \kappa H_0^{-1}\star_\ell \kappa' \in \cK_{\mu, \sigma},
  \end{equation*}
  where
  \begin{align*}
   \mu&=m+n-\ell, \\
   \sigma&=\lambda + \lambda' + \ell(1-\delta/\gamma).
  \end{align*}

\end{thm}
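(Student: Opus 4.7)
The plan is to substitute the explicit star-product formula~\eqref{eq:kappa-star_ell} into the kernel of $\kappa H_0^{-1}\star_\ell \kappa'$, apply the bounds from Definition~\ref{def:K_n} to both factors, and integrate out the $\ell$ contracted variables $\Xi$. I first note that $\kappa H_0^{-1}$ admits a natural kernel representation: using the pull-through~\eqref{eq:pull-gen} to move $H_0^{-1}$ through the annihilation operators in $\kappa$, its kernel becomes $\kappa(Q,R,p,E)\cdot(\Omega(p+\sum_{j} r_j)+E+\omega(R))^{-1}$. Substituting this into~\eqref{eq:kappa-star_ell} produces an integral over $\Xi\in\R^{\ell d}$ of a product of three factors: the pointwise bound on $\kappa$, the pointwise bound on $\kappa'$, and the extra $H_0^{-1}$-denominator whose argument contains $\omega(\xi_i)$ for every contracted variable.

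The heart of the proof is a judicious choice of parameters in the $\min$-bounds of Definition~\ref{def:K_n}. For $\kappa$ I would pick the $s$-parameter near the lower endpoint of its admissible range so that the $\tilde\rho_n$-factor carries a full resolvent on each entry of $S$; dually, for $\kappa'$ I push $s'$ toward the upper endpoint so that the $\rho_m$-factor is resolved on each entry of $U$. Together with the additional $(E+\omega(\xi_i))^{-1}$ coming from $H_0^{-1}$ (after dropping $\Omega\geq 0$), each contracted variable $\xi_i$ then sits inside an integrand of the form $|v(\xi_i)|^2(E+\omega(\xi_i))^{-p}$ with some $p>1+\delta/\gamma$. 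The key scaling estimate
\begin{equation*}
\int_{\R^d}\frac{|v(\xi)|^2}{(E+\omega(\xi))^{p}}\,\ud\xi \;\lesssim\; E^{\,1+\delta/\gamma-p},\qquad p>1+\delta/\gamma,
\end{equation*}
follows from Assumption~\ref{ass:main} via the substitution $\xi=E^{1/\gamma}\eta$ together with the identity $d-2\alpha=\gamma+\delta$. Each contraction thereby buys a factor of $E^{-(1-\delta/\gamma)}$, so the $\ell$ contractions collectively produce the gain $E^{-\ell(1-\delta/\gamma)}$ that appears in the target exponent $\sigma$.

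After the $\Xi$-integration is carried out, one is left with a product of $\rho$- and $\tilde\rho$-factors depending only on the uncontracted variables. I would reassemble these into $\rho_\mu$ and $\tilde\rho_\mu$ by monotonicity, using the fact that the shifted energies $E+\sum_\nu\omega(q_\nu)$ and $E+\sum_\mu\omega(r_\mu)$ appearing in~\eqref{eq:kappa-star_ell} dominate the partial sums of $\omega$ featuring in the definitions~\eqref{eq:rho-def}. Combining the prefactors $E^{-(\lambda-1)_+}$, $E^{-(\lambda'-1)_+}$, and $E^{-\ell(1-\delta/\gamma)}$ then delivers the required $E^{-(\sigma-1)_+}$ for any admissible $\sigma$ strictly below $\lambda+\lambda'+\ell(1-\delta/\gamma)$. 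The main obstacle I foresee is the simultaneous bookkeeping: one must choose the auxiliary parameters $(s,s')$ in a way that, for \emph{every} target split $(\tau+t,\tau-t)$ in the $\cK_{\mu,\sigma}$-bound, the resulting estimate has the correct form. A secondary, more technical issue is handling the special endpoint variables $r_1$ and $q_n$ of~\eqref{eq:rho-def}, whose exponents $(1+\lambda)/2$ differ from the generic $1$; when these sit among the contracted indices, a separate argument is needed to ensure that the integrability criterion $p>1+\delta/\gamma$ still holds with room to spare.
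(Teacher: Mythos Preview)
Your proposal is correct and follows essentially the same approach as the paper, which organizes the argument into three lemmas according to whether $\ell=0$, $0<\ell<\max\{n,m\}$, or $\ell=n=m$ (the last case, where all variables on one side are contracted, is precisely the endpoint issue you flag as the secondary obstacle). One small practical point: the paper makes the \emph{opposite} parameter choice---taking $s=1-\lambda$ for $\kappa$ and $s'=\lambda'-1$ for $\kappa'$ so that the full exponent $1$ sits on the \emph{uncontracted} variables $Q_1^n$ and $R'$---which allows the concatenation identity~\eqref{eq:rho-id} to be applied directly in the reassembly step; your choice also works, but the reassembly of $\rho_{n,2\lambda-1}(Q_1^n,\cdot)$ with $\rho_{m-\ell}(\cdot)$ into a single $\rho_{\mu}$ is slightly less immediate.
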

 \begin{proof}
 The basic idea is that every contraction comes with an integral
 \begin{equation}
  \int  \frac{|v(\xi)|^2}{(E+\omega(\xi))^2} \ud \xi \lesssim E^{-(1-\delta/\gamma)}. 
 \end{equation}
  The details of the bounds on the integrals are given in Section~\ref{sect:kappa-prod} of the appendix.
  The case $\ell=0$ follows from Lemma~\ref{lem:star ell=0} (respectively the fact that the kernel is just the product of kernels in case $m=n=0$). The case $0<\ell < \max\{n, m\}$ is obtained from Lemma~\ref{lem:star ell<n} (the case excluded there does not arise since $1+\delta/\gamma>1$). The remaining case $\ell=m=n$ is Lemma~\ref{lem:star ell=n} (the loss of decay in this lemma is the main reason why  we restrict to $\sigma <\lambda$ in the definition of $\cK_{n, \lambda}$). 
 \end{proof}

If we set $\cK:=\bigoplus_{n=0}^\infty \cK_n$, then 
\begin{equation}
 (\kappa, \kappa')\mapsto \kappa H_0^{-1} \star_\ell \kappa'
\end{equation}
defines a binary operation on $\cK$. We will view this as a product, even though it is not, in general, associative. Associativity can fail for $\ell\neq 0$ since the expression
\begin{equation}
 \kappa\star_1 (\kappa'\star_1 \kappa'')
\end{equation}
has contributions with one or zero contractions between $\kappa$ and $\kappa'$, whereas in
\begin{equation}
 (\kappa\star_1 \kappa')\star_1 \kappa''
\end{equation}
 there is exactly one contraction between $\kappa$ and $\kappa'$.
 However, in the special case $m=n=\ell$, the operation
 \begin{equation}
 \cK_n\times \cK_n \to \cK_n, \qquad (\kappa, \kappa')\mapsto \kappa \star_n H_0^{-1}\kappa'
\end{equation}
 is associative. The reason is that in $\kappa H_0^{-1} \star_n \kappa'$ all the available variables are contracted, and thus  all the additional contractions in
 \begin{equation}
  (\kappa H_0^{-1} \star_n \kappa')H_0^{-1} \star_n \kappa''
 \end{equation}
occur between $\kappa'$ and $\kappa''$.
 
Similar classes of integral operators and products with contractions occur throughout the literature on quantum field theory. Let us emphasize, however, that the classes $\cK_n$ and the product $\star_\ell$ differ from the common ones (see e.g.~\cite{glimm1968b}) in the fact that the kernel additionally depends on $\ud \Gamma(k), \ud \Gamma(\omega)$ and thus an element of $\cK_n$ acts non-trivially on all bosons, not just $n$ of them. This is crucial for our argument, as it allows us to shift decay from one variable to another as needed.

 \subsection{The algorithm for $T_n$} \label{sect:algorithm}

 We now describe the algorithm for the construction of $T_n$. Assume that we have given, for $\Lambda \in \R_+ \cup\{\infty\}$ and $k\leq n$
 \begin{equation}
  T_{\Lambda, k} = \sum_{j=0}^k \theta_{\Lambda, k, j}
 \end{equation}
with $\theta_{\Lambda, k, j}\in \cK_j$.
We now want to write explicit formulas for $\theta_{n+1, m}$, $m=0, \dots, n+1$. Our task will then be to derive appropriate bounds on these kernels.

In view of~\eqref{eq:T_Lambda def}, the operator $\theta_{\Lambda, n+1, m}$ is the sum over $\nu, J=(j_1, \dots, j_\nu)\in \{1,\dots, n\}^\nu, |J|=n$ of those terms in the expression 
\begin{align}
a(v_\Lambda) H_0^{-1} \left(\prod_{\mu=1}^\nu  T_{\Lambda,j_\mu} H_0^{-1}\right)  a^*(v_\Lambda)
=a(v_\Lambda) H_0^{-1}  \left(\prod_{\mu=1}^\nu \sum_{i_\mu=0}^{j_\mu}\theta_{\Lambda,j_\mu, i_\mu}H_0^{-1} \right)  a^*(v_\Lambda)
\end{align}
which have exactly $m$ creation (and annihilation) operators after putting these in normal order.
For a given sequence $i_1, \dots , i_\nu$,  there are a total of
$ 1+ \sum_{\mu=1}^\nu i_\mu$
creation operators, so there must be exactly $1+\sum_{\mu=1}^\nu i_\mu-m$ contractions.
Fixing $I=(i_1, \dots, i_\nu)$,  the corresponding term can be expressed as the sum over $L=(\ell_0,\ell_1, \dots, \ell_{\nu})$ of
 \begin{align}\label{eq:tau-def}
  \bigg( a(v) H_0^{-1}\star_{\ell_0}  \underbrace{\Big( \big(\theta_{\Lambda,j_1, i_1} H_0^{-1}\star_{\ell_1} \theta_{\Lambda,j_2, i_2}\big)  H_0^{-1} \star_{\ell_2} \cdots \Big)}_{=:\tau_{\Lambda,I,L_1^{\nu-1}}} \bigg) H_0^{-1} \star_{\ell_{\nu}}  a^*(v),
 \end{align}
 such that
 \begin{equation}\label{eq:L-sum}
  \sum_{\mu=0}^{\nu} \ell_\nu=1+\sum_{\mu=1}^\nu i_\mu-m,
 \end{equation}
with the constraints (imposed by the fact that we must have $\ell \leq \min \{n, m\}$ in~\eqref{eq:kappa-star_ell}),
\begin{equation}\label{eq:L-constr}
\begin{aligned}
&0\leq \ell_\mu \leq \min\Big\{\sum_{\iota=1}^{\mu} i_\iota -\sum_{\iota=1}^{\mu-1}\ell_\iota ,i_{\mu+1}\Big\} \text{ for } 0<\mu<\nu \\
&0 \leq \ell_0 \leq \min\Big\{1,  \sum_{\iota=1}^{\nu} i_\iota -\sum_{\iota=1}^{\nu-1} \ell_\iota \Big\} \\
&0 \leq \ell_\nu \leq \min\Big\{1,  1 +\sum_{\iota=1}^{\nu} i_\iota -\sum_{\iota=0}^{\nu-1} \ell_\iota \Big\}.
\end{aligned}
\end{equation}
Here we can take $\Lambda \in \R_+ \cup \{\infty\}$, since by Theorem~\ref{thm:K-star-prod}
and our knowledge about $T_k$, $k\leq n$ we have $\tau_{I, L_1^{\nu-1}} \in \cK_{m+\ell_0+\ell_\nu}$.
To be completely precise, we can define $\tau_{I,L}$ recursively by 
setting $\tau_{\Lambda,i_1,\varnothing}=\theta_{\Lambda,j_1, i_1}$ and
 \begin{equation}\label{eq:kernel-recursion}
  \tau_{\Lambda,(i_1, \dots, i_{\mu+1}), (\ell_1, \dots \ell_\mu)} = \tau_{\Lambda,(i_1, \dots, i_{\mu}), (\ell_1, \dots \ell_{\mu-1})} H_0^{-1} \star_{\ell_\mu} \theta_{j_{\mu+1}, i_{\mu+1}}.
 \end{equation}
For $J\in \{1, \dots, n\}^\nu$, $I\in  \N_0^\nu$ and $L$ as above, we then set 
\begin{equation}\label{eq:vartheta-def}
 \vartheta_{\Lambda, J, I, L} =\Big( a(v_\Lambda) H_0^{-1}\star_{\ell_0} \tau_{\Lambda, I,L_1^{\nu-1}} \Big) H_0^{-1} \star_{\ell_{\nu}}  a^*(v_\Lambda),
\end{equation}
and for $m=1, \dots, n+1$
\begin{align}\label{eq:theta-nm-def}
 \theta_{\Lambda, n+1, m} 
 := \sum_{\nu=1}^n  (-1)^{\nu+1}
 \sum_{J \in \{1, \dots, n\}^\nu \atop |J|=n}
 \sum_{I \in \N_0^\nu \atop  i_\mu \leq j_\mu}
 \sum_{L\in \N_0^{\nu+1} \atop \text{ with \eqref{eq:L-sum},~\eqref{eq:L-constr}}}
 \vartheta_{\Lambda, J, I, L}.
 \end{align}

 We will see in Theorem~\ref{thm:theta} below that $\theta_{\Lambda, n+1, m}$ is well defined, including for $\Lambda=\infty$ and $\theta_{n+1, m}\in \cK_{m}$. The basic reason for this is that there always remain uncontracted variables, and using the bound from Definition~\ref{def:K_n} we can then pair every contracted variable with two instances of $H_0^{-1}$ that contain this variable. 
 Note that the sums simplify considerably in some cases, e.g. for $m=n+1$ the only possibility for given $\nu$, $J$ is $i_\mu=j_\mu$ and $L=(0, \dots, 0)$.
Note also that 
\begin{equation}
 \langle \varnothing, \theta_{\Lambda, n+1, m} \varnothing \rangle = 0, \qquad m\neq 0.
\end{equation}

For $m=0$, the expression~\eqref{eq:theta-nm-def} is not well defined for $\Lambda=\infty$. In order to remedy this, we need to subtract its vacuum expectation value, the numbers $E_{\Lambda,n}$, and take the limit $\Lambda \to \infty$. For $m=0$, the constraint~\eqref{eq:L-sum} for $m=\nu$ becomes
\begin{equation}
0= 1 + \sum_{\mu=1}^\nu i_\mu - \sum_{\mu=0}^\nu \ell_\mu =  \underbrace{\sum_{\mu=1}^\nu i_\mu - \sum_{\mu=1}^{\nu-1} \ell_\mu - \ell_0}_{\stackrel{\eqref{eq:L-constr}}{\geq} 0}+ 1 - \ell_\nu.
\end{equation}
We deduce that $\ell_\nu=1$ and 
\begin{equation}
 \ell_0 = \sum_{\mu=1}^\nu i_\mu - \sum_{\mu=1}^{\nu-1} \ell_\mu = \underbrace{ \sum_{\mu=1}^{\nu-1} i_\mu - \sum_{\mu=1}^{\nu-2} \ell_\mu  -\ell_{\nu-1}}_{\stackrel{\eqref{eq:L-constr}}{\geq} 0} + i_\nu.
\end{equation}
From this and $\ell_\nu \leq 1$ we deduce that $i_\nu\leq 1$ and thus also $\ell_{\nu-1}\leq 1$. Recursively we thus obtain that $i_\mu\in \{0,1\}$, $\ell_\mu\in\{0,1\}$.
Moreover, $\ell_0=1$ exactly if at least one of $i_\nu$ and $\sum_{\mu=1}^{\nu-1} i_\mu - \sum_{\mu=1}^{\nu-2} \ell_\mu$ equal one. If both are equal to one, then we also have $\ell_{\nu-1}=1$. 
Then we see that $\ell_\mu=1$ if and only if $i_1=1=i_2$, and then 
\begin{equation}
 \sum_{\mu=1}^2 i_\mu - \ell_1=1 = \|I_1^2\|_\infty.
\end{equation}
Continuing with this reasoning, we see that for given $I\in \{0,1\}^\nu$ this leaves only one possible choice of $L$, which amounts to taking $\ell_\mu=1$ at every position where this makes sense, and $\ell_\mu=0$ otherwise.
The constraints~\eqref{eq:L-sum},~\eqref{eq:L-constr} become for $m=0$
\begin{equation}\label{eq:L-constr-0}
 \begin{aligned}
 \ell_0 &=\|I\|_\infty \\
 \ell_\mu&= \min\{ \|I_1^{\mu}\|_\infty, i_{\mu+1}\}\\
  \ell_\nu&=1.
 \end{aligned}
\end{equation}

We thus set for $\Lambda\in \R_+$ and $L=(\ell_0, \dots, \ell_\nu)$ given by~\eqref{eq:L-constr-0}
\begin{align}
 E_{\Lambda, n+1} 
 &:= \sum_{\nu=1}^n (-1)^{\nu+1} 
 \sum_{J \in \{1, \dots, n\}^\nu \atop |J|=n}
 \sum_{I \in \{0,1\}^\nu \atop  i_\mu \leq j_\mu}
  %
  \underbrace{\langle \varnothing, \vartheta_{\Lambda, J, I, L} \varnothing\rangle\vert_{P=0}}_{=:E_{\Lambda, J,I,L}} , \label{eq:E_n-def}
 \end{align}
 and
 \begin{align}\label{eq:theta-n0-def}
 \theta_{\Lambda,n+1,0} 
 := \sum_{\nu=1}^n (-1)^\nu
 \sum_{J \in \{1, \dots, n\}^\nu \atop |J|=n}
 \sum_{I \in \{0,1\}^\nu \atop  i_\mu \leq j_\mu}
  (\vartheta_{\Lambda, J, I, L} -  E_{\Lambda, J,I,L}), 
  %
 \end{align}
 where $\vartheta_{\Lambda, J, I, L}$ is given by~\eqref{eq:vartheta-def}.
 We will show below that the limit 
 \begin{equation}
  \lim_{\Lambda \to \infty} \theta_{\Lambda,n+1,0}
 \end{equation}
exists, and defines an element of $\cK_0$ satisfying appropriate bounds.

\begin{rem}\label{rem:E_2}
As an example note that $E_{\Lambda,2}$ hast two contributions, coming from $i=i_1\in \{0,1\}$, namely 
\begin{align}
 E_{\Lambda,2}& = 
  \langle \varnothing , a(v_\Lambda) H_0^{-1}\theta_{\Lambda,1,0}H_0^{-1} \star_1 a^*(v_\Lambda) \varnothing\rangle \notag\\
 &\qquad  + \langle \varnothing , a(v_\Lambda) H_0^{-1}\star_1\theta_{\Lambda,1,1} H_0^{-1}\star_1 a^*(v_\Lambda) \varnothing\rangle.
\end{align}
Observe that $\theta_{\Lambda,1,0}$ involves $E_{\Lambda,1}$ via $\theta_{\Lambda, 1,0}$ so $E_{\Lambda, 2}$ encodes the ``nested'' divergence after $E_{\Lambda,1}$ has been subtracted from $H_\Lambda$.

The second term here is negative, since $\theta_{\Lambda,1,1}$ has a negative integral kernel (see~\eqref{eq:theta-11}), and the first term is positive for large $\Lambda$ (if it diverges), because $\theta_{1,0}$ (cf.~\eqref{eq:theta-10}) is bounded from below but not necessarily from above. It is thus possible to have cancellations between the two for some special cases. This is exactly what happens for the  models considered in~\cite{IBCpaper, Schrader1968}, which correspond to $\gamma=2$, $\delta=1$. The scaling behaviour would suggest that $E_{\Lambda,2}$ diverges logarithmically as in~\cite{La19, La20}. However, the special choice of the masses (``infinite'' mass of the particles in~\cite{IBCpaper}, mass preservation upon boson creation in~\cite{Schrader1968}) arranges for the divergences to cancel each other and $E_{\Lambda,2}$ has a finite limit for $\Lambda\to \infty$.
 \end{rem}
 
 \begin{lem}\label{lem:theta_10}
 Assume the hypothesis~\ref{ass:main}. The expression~\eqref{eq:theta-10} is well defined and
 \begin{equation*}
\theta_{1,0} \in \cK_{0, (1-\delta/\gamma)}.
 \end{equation*}
Moreover, $(p, E)\mapsto \theta_{1,0}(p,E)$ is continuously differentiable in $E$ and $\gamma$-times differentiable in $p$ with
\begin{align*}
 \partial_E \theta_{1,0} &\in \cK_{0, 1+(1-\delta/\gamma)} \, \\
 \partial_p^\nu \theta_{1,0} &\in \cK_{0, |\nu|/\gamma+(1-\delta/\gamma)}.
\end{align*}
 \end{lem}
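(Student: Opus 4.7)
The plan is to work directly from the integral representation
\[
\theta_{\Lambda,1,0}(p, E) = -\int_{|\xi|\leq \Lambda} |v(\xi)|^2 \Bigl(\frac{1}{A} - \frac{1}{B}\Bigr)\, \ud \xi,
\]
with $A = \Omega(p+\xi) + \omega(\xi) + E$ and $B = \Omega(\xi) + \omega(\xi) + E_0$, exploiting that $A = B$ at $(p, E) = (0, E_0)$ so that the subtraction of $E_{\Lambda, 1}$ is exactly the integrand evaluated at this point. To separate the $p$- and $E$-dependences I would introduce the intermediate $A_0 = \Omega(\xi) + \omega(\xi) + E$ and write
\[
\frac{1}{A} - \frac{1}{B} = \frac{\Omega(\xi) - \Omega(p+\xi)}{A A_0} + \frac{E_0 - E}{A_0 B}.
\]
The second summand is $p$-independent; using the interpolation $A_0 \geq E^s \omega^{1-s}$ together with $B \geq \omega$ for $s \in [0, 1]$ gives $|E-E_0|/(A_0 B) \lesssim E^{1-s}\omega^{-(2-s)}$, and $\int |v|^2 \omega^{-(2-s)} \ud \xi$ converges exactly when $s < 1 - \delta/\gamma$ (the same sharp condition behind $v\omega^{-1} \in L^2$), yielding a bound $\lesssim E^{1-s} \leq H_0\, E^{-s}$ of the required form.

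The first summand is the technical heart. For $\gamma = 1$, the mean value theorem and Assumption~\ref{ass:main} give $|\Omega(p+\xi) - \Omega(\xi)| \leq C|p|$; combined with $A A_0 \geq E^{2s}\omega^{2(1-s)}$ and $|p| \lesssim H_0$, this directly produces the same type of bound. For $\gamma = 2$ with $\delta$ close to $\gamma$, the naive MVT bound $|\Omega(p+\xi) - \Omega(\xi)| \lesssim |p|(1 + \omega^{1/2})$ does not decay fast enough; this is the main obstacle. I would resolve it by a second-order Taylor expansion $\Omega(p+\xi) - \Omega(\xi) = p \cdot \nabla\Omega(\xi) + R_2$ with $|R_2| \leq C|p|^2$, combined with the rotational symmetry of $\Omega$: after rewriting $1/(A A_0) = 1/A_0^2 + (\Omega(\xi) - \Omega(p+\xi))/(A A_0^2)$, the principal piece $\int_{|\xi|\leq \Lambda} |v|^2 (p \cdot \nabla\Omega)/A_0^2 \ud \xi$ vanishes by parity (the integrand is odd under $\xi \to -\xi$ since $\nabla\Omega$ is odd while $|v|^2$ and $A_0^2$ are even), while the correction and $R_2$ pieces are each controlled by $C|p|^2 \lesssim H_0$ times an integrable weight carrying the right $E$-decay. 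Dominated convergence then delivers the limit $\Lambda \to \infty$ and hence $\theta_{1,0} \in \cK_{0, 1 - \delta/\gamma}$.

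For the derivatives I would differentiate under the integral, which is justified by uniform local integrability of the differentiated integrands. For $\partial_E$ the integrand becomes $|v|^2/A^2$ and no longer needs subtraction; the extra factor $1/A$ contributes one additional power of $1/\omega$ after interpolation, which upon absorbing $H_0 \geq E$ becomes one additional unit of $E$-decay, giving $\partial_E \theta_{1,0} \in \cK_{0, 1 + (1-\delta/\gamma)}$. For $\partial_p^\nu$, the product rule produces terms of the form $\partial^\nu \Omega(p+\xi)/A^{|\nu|+1}$ (plus lower-order combinations when $|\nu| = \gamma = 2$); by Assumption~\ref{ass:main} the numerator grows like $\omega^{(\gamma - |\nu|)/\gamma}$ at infinity, while the denominator has $|\nu|$ extra resolvents, for a net gain of $|\nu|/\gamma$ in the $\omega$-exponent (and hence in the $E$-exponent), matching $\partial_p^\nu \theta_{1,0} \in \cK_{0, |\nu|/\gamma + (1-\delta/\gamma)}$.
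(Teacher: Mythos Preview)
Your argument for the main estimate $\theta_{1,0}\in\cK_{0,1-\delta/\gamma}$ is essentially the paper's: both separate the $E$- and $p$-variations, kill the first-order $p$-term by rotational parity, and bound the remainder by $|p|^\gamma\int|v|^2/(E+\omega)^2\lesssim|p|^\gamma E^{-(1-\delta/\gamma)}$. Your algebraic decomposition via $A_0$ is a minor variant of the paper's Taylor expansion. The $\partial_E$-argument and the $|\nu|=\gamma$ case of $\partial_p^\nu$ are also fine, since those integrands are absolutely integrable.

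There is, however, a real gap in your treatment of the \emph{first} $p$-derivative when $\gamma=2$ and $\delta\geq 1$. After differentiating under the integral you obtain $|v(\xi)|^2\,\partial_i\Omega(p+\xi)/A^2$, and your power counting gives $|\partial_i\Omega|\lesssim A^{1/2}$, hence a bound $|v|^2/A^{3/2}\lesssim |v|^2/(E+\omega)^{3/2}$. But this is integrable only when $\tfrac32>1+\delta/\gamma$, i.e.\ $\delta<\gamma/2=1$; for $\delta\in[1,2)$ the integral diverges and your ``net gain of $|\nu|/\gamma$'' is insufficient. The subtraction $1/B$ that saved the zeroth-order estimate is $p$-independent and disappears upon differentiation, so you are back to an unsubtracted, non-integrable expression. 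The paper handles this by reusing the rotational symmetry trick at the level of the difference quotient: write $\theta_{1,0}(p+h,E)-\theta_{1,0}(p,E)$ as a $q$-integral of $\nabla_q(1/A)$, Taylor-expand once more around $q=0$ so that the first-order term again vanishes by parity, and bound the resulting Hessian term (which \emph{is} integrable) by $|p|\,E^{-(1-\delta/\gamma)}\lesssim H_0\,E^{-1/\gamma-(1-\delta/\gamma)}$. You need the same device here.
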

\begin{proof}
We give the proof for for $\delta\geq1$ (which only occurs for $\gamma=2$). When $\delta<1$ the development to second order in~\eqref{eq:theta_10 Hess} is not necessary, which simplifies the argument.
Note that $E_{\Lambda, 1}$ corresponds to the evaluation of the first term in~\eqref{eq:theta-10} at $p=0$, $E=E_0$.
By Taylor expansion we thus have
 \begin{align}
  &-\theta_{1,0}(p,E) \notag\\
  &=  \lim_{\Lambda \to \infty} \int\limits_{|\xi|\leq \Lambda} |v(\xi)|^2  \int_{E_0}^E \left(\partial_t \frac{1}{\Omega(\xi) + \omega(\xi) +t}\right)\ud t\ud \xi \label{eq:theta_10 E}\\
  &\qquad + \lim_{\Lambda \to \infty} \int\limits_{|\xi|\leq \Lambda} |v(\xi)|^2\bigg\langle p, \nabla_q\frac{1}{\Omega(\xi+q) + \omega(\xi)+E}\Big\vert_{q=0}\ud \xi\bigg\rangle\label{eq:theta_10 rot} \\
  &\qquad + \lim_{\Lambda \to \infty} \int\limits_{|\xi|\leq \Lambda} |v(\xi)|^2\int_0^1(1-t)\bigg\langle p,  \nabla_q^2 \frac{1}{\Omega(\xi+q) + \omega(\xi)+E}\Big\vert_{q=tp}  p\bigg\rangle\ud t\ud \xi \label{eq:theta_10 Hess}.
 \end{align}
We have (since $\delta>0$)
\begin{align}
 &\bigg|\lim_{\Lambda \to \infty} \int_{|\xi|\leq \Lambda} |v(\xi)|^2  \int_{E_0}^E \left(\partial_t \frac{1}{|\Omega(\xi) + \omega(\xi) +t }\right)\ud t\ud \xi \bigg| \notag \\
 &\leq \int_0^E \int_{\R^d} \frac{|v(\xi)|^2}{(\omega(\xi)+t)^2}\ud \xi \ud t \stackrel{\eqref{eq:integral-E}}{\lesssim}  \int_0^E \frac{1}{t^{1-\delta/\gamma}} \lesssim  E^{1-(1-\delta/\gamma)}.
\end{align}
For the derivatives in $p$, first note that 
\begin{equation}
  \nabla_q\frac{1}{\Omega(\xi+q)+ \omega(\xi)+E }\Big\vert_{q=0} =- \frac{ (\nabla \Omega)(\xi) }{(\Omega(\xi)+\omega(\xi)+E )^2}.
\end{equation}
Since $\Omega$ is rotation invariant, $\nabla \Omega(\xi)$ is proportional to $\xi/|\xi|$, and by rotation invariance of $v, \omega$ the integral~\eqref{eq:theta_10 rot} equals zero (note however that the integrand in~\eqref{eq:theta_10 rot} may not be absolutely integrable, so it is necessary to interpret the limit as an improper integral). 
For the remaining term~\eqref{eq:theta_10 Hess} note that the Hessian satisfies the bound
\begin{align}
\left| \nabla_p^2 \frac{1}{\Omega(p+\xi) + \omega(\xi)+E } \right| \lesssim \frac{1}{(\Omega(p+\xi) + \omega(\xi)+E )^2}.
\end{align}
Hence the integrand in~\eqref{eq:theta_10 Hess} integrable (in $(\xi,t)$) and we have 
\begin{align}
 \eqref{eq:theta_10 Hess}&\lesssim p^2   \int_0^1 (1-t) \int_{\R^d} \frac{|v(\xi)|^2  }{(\omega(\xi)+E)^2}\ud t\ud\xi \stackrel{\eqref{eq:integral-E}}{\lesssim}  |p|^\gamma E^{\delta/\gamma-1}.
 %
\end{align}
This proves that the limit $\Lambda \to \infty $ exists, and satisfies
\begin{equation}
 |\theta_{1,0}(p,E)| \lesssim H_0(p,E) E^{-(1-\delta/\gamma)},
\end{equation}
so $\theta_{1,0}\in \cK_{0,(1-\delta/\gamma)}$.

To prove the bounds on the derivatives, note that
\begin{equation}
  \theta_{1,0}(p,E+h)-\theta_{1,0}(p,E) = 
 \int_E^{E+h} \int\limits_{\R^d}   \frac{|v(\xi)|^2}{(\Omega(p+\xi) + \omega(\xi) +t )^2}\ud \xi\ud t,
\end{equation}
which is absolutely convergent. From this we easily deduce that the derivative in $E$ exists and satisfies the bound
\begin{equation}
 |\partial_E \theta_{1,0}(p,E)|\lesssim E^{-(1-\delta/\gamma)} \leq H_0(p,E) E^{-1-(1-\delta/\gamma)},
\end{equation}
as claimed.

For the first derivative in $p$ we also first take the difference and then use Taylor expansion once, to obtain
\begin{align}
 &\theta_{1,0}(p+h,E) - \theta_{1,0}(p,E) \notag \\
 &= \lim_{\Lambda \to \infty} \int_{|\xi|\leq \Lambda} \int_{0}^{1} \bigg\langle h, \nabla_q\frac{|v(\xi)|^2}{\Omega(\xi+q) + \omega(\xi)+E }\Big\vert_{q=p+th}\bigg\rangle \ud t \ud \xi  \\
 &=  \int_{\R^d} \int_{0}^{1}\int_0^1 \bigg\langle h, \nabla_q^2\frac{|v(\xi)|^2}{\Omega(\xi+q) + \omega(\xi)+E }\Big\vert_{q=s(p+th)} (p+th)\bigg\rangle \ud s \ud t \ud \xi,\notag
 \end{align}
 where we have used rotation invariance and the bound on the Hessian as above.
 From this, we see that $\partial^{\nu}_p\theta_{1,0}$ exists for $|\nu|=1$ and 
 \begin{align}
  |\partial^{\nu}_p\theta_{1,0}(p,E)| \lesssim |p| E^{\delta/\gamma-1} \leq H_0(p,E) E^{-1/\gamma -(1-\delta/\gamma)}.
 \end{align}
For the second derivative one proceeds in the same way, writing the difference of derivatives as
 \begin{align}
  &\partial_j\theta_{1,0}(p+h,E) - \partial_j\theta_{1,0}(p,E) \notag \\
 &=\int_{\R^d} \int_{0}^{1} \bigg\langle e_j, \nabla_q^2\frac{|v(\xi)|^2}{\Omega(\xi+q) + \omega(\xi)+E }\Big\vert_{q=p+th} h\bigg\rangle  \ud t \ud \xi,
 \end{align}
 which implies the claim by the same arguments as above.
\end{proof}

 \begin{thm}\label{thm:theta}
  Assume the hypothesis~\ref{ass:main} and let $0\leq m\leq n\leq n_*$. Then for $\Lambda\in \R_+\cup \{\infty\}$ :
  
  \begin{enumerate}[a)]
   \item\label{thm:theta plain} The operator $\theta_{\Lambda, n,m}$ defined by~\eqref{eq:theta-nm-def},~\eqref{eq:theta-n0-def} satsifies
   \begin{equation*}
   \theta_{\Lambda,n,m} \in \cK_{m, (n-m)(1-\delta/\gamma)}, 
  \end{equation*}
where the bounds required by Definition~\ref{def:K_n} hold uniformly in $\Lambda$;
\item\label{thm:theta rot} The kernel $\theta_{\Lambda, n, m}(Q,R,p,E)$ is invariant under simultaneous rotations of its arguments in $\R^d$, i.e, the maps
\begin{equation}
 Q,R,p \mapsto Oq_1, \dots, Or_m, Op,
\end{equation}
for $O\in SO(d)$.
\item\label{thm:theta deriv} The kernel $\theta_{\Lambda, n, m}(Q,R,p,E)$ is continuously differentiable in $E$, $\gamma$-times differentiable in $p$, and satisfies
\begin{align*}
 \partial_E \theta_{\Lambda, n, m} &\in \cK_{m, 1+(n-m)(1-\delta/\gamma)} \\
 \partial_p^\nu \theta_{\Lambda, n, m} &\in \cK_{m, |\nu|/\gamma+(n-m)(1-\delta/\gamma)} , \qquad |\nu|\leq \gamma,
\end{align*}
uniformly in $\Lambda$;
\item\label{thm:theta conv} The kernels $\theta_{\Lambda, n, m}$ and their derivatives converge pointwise to $\theta_{n,m}$ and its derivatives as $\Lambda \to \infty$.
  \end{enumerate}
\end{thm}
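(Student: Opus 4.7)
The plan is a strong induction on $n$. The base case $n=1$ is in hand: $\theta_{\Lambda,1,1}$ has the explicit kernel~\eqref{eq:theta-11}, manifestly in $\cK_{1,0}$, and $\theta_{\Lambda,1,0}$ is covered by Lemma~\ref{lem:theta_10}, which already supplies the differentiability and convergence in (c), (d). Rotation invariance (b) is evident.

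Assume the statement for all $k\le n$ and consider $\theta_{\Lambda,n+1,m}$ with $m\ge 1$. By the defining formula~\eqref{eq:theta-nm-def}, it is a finite sum of terms $\vartheta_{\Lambda,J,I,L}$, each an iterated $H_0^{-1}\star_{\ell_\mu}$-product of $a(v_\Lambda)$, $a^*(v_\Lambda)$ and inner factors $\theta_{\Lambda,j_\mu,i_\mu}$ with $|J|=n$. Applying Theorem~\ref{thm:K-star-prod} inductively, using $\theta_{\Lambda,j_\mu,i_\mu}\in\cK_{i_\mu,(j_\mu-i_\mu)(1-\delta/\gamma)}$, the result lies in $\cK_{m,\sigma}$ with
\[
\sigma=(1-\delta/\gamma)\Bigl(\sum_\mu (j_\mu-i_\mu)+\sum_\mu\ell_\mu\Bigr)=(1-\delta/\gamma)(n+1-m),
\]
using $\sum_\mu j_\mu=n$ together with the contraction constraint~\eqref{eq:L-sum}. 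This yields (a). Assertion (b) follows by the same induction, since $v$, $\omega$, $\Omega$, and hence $H_0^{-1}$, are rotation invariant and the $\star_\ell$-integrations are over all of $\R^d$. For (c), differentiate each factor in $\vartheta_{\Lambda,J,I,L}$ via Leibniz: $\partial_E H_0^{-1}$ improves the $\lambda$-index by $1$, $\partial_p^\nu H_0^{-1}$ improves it by $|\nu|/\gamma$ thanks to the bound on $|\partial^\nu\Omega|$ from Assumption~\ref{ass:main}, and the inductive hypothesis (c) provides the analogous improvements for the inner kernels; Theorem~\ref{thm:K-star-prod} then combines them. Pointwise convergence (d) is obtained by dominated convergence applied to the $\star_\ell$-integrals: the uniform bounds from (a) and (c) furnish an integrable majorant, and the convergence $v_\Lambda\to v$ together with the inductive pointwise convergence of the inner kernels transfers to $\vartheta_{\Lambda,J,I,L}$.

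The expected main difficulty is the case $m=0$. Here each $\vartheta_{\Lambda,J,I,L}$ is a multiplication operator $f_\Lambda(p,E)$ (the admissible $L$ being rigid by the analysis preceding~\eqref{eq:L-constr-0}), and $E_{\Lambda,J,I,L}=f_\Lambda(0,E_0)$; both individually diverge as $\Lambda\to\infty$, but one expects the divergences to cancel in $f_\Lambda(p,E)-f_\Lambda(0,E_0)$. Following the strategy of Lemma~\ref{lem:theta_10}, one expands this difference by Taylor's theorem to first order in $E$ and second order in $p$. The $E$-derivative improves the $\lambda$-index by $1$ by (c) applied to the $\star_\ell$-structure, producing an absolutely convergent $\xi$-integral for $\Lambda=\infty$ and the correct $E$-bound. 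The first-order $p$-term vanishes by the rotation invariance (b), so the decisive contribution is the $p$-Hessian, which improves the $\lambda$-index by $2/\gamma\ge 1$; since $\delta<\gamma$, this is exactly the threshold that makes the $\xi$-integrals absolutely convergent uniformly in $\Lambda\in\R_+\cup\{\infty\}$ and yields a bound of the form $|p|^\gamma E^{-\sigma}$ consistent with membership in $\cK_{0,(n+1)(1-\delta/\gamma)}$. Differentiating under these integrals and invoking dominated convergence gives (c) and (d) in this case, completing the induction.
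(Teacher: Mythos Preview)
Your proposal is correct and follows essentially the same inductive strategy as the paper: the inner iterated product is controlled by repeated application of Theorem~\ref{thm:K-star-prod} (the paper packages this as Lemma~\ref{lem:tau}), the $m=0$ case is handled by the Taylor-expansion argument of Lemma~\ref{lem:theta_10} with the first-order $p$-term killed by rotation invariance, and (c), (d) follow by Leibniz and dominated convergence. One technical point to flag: Theorem~\ref{thm:K-star-prod} as stated applies only to $\kappa,\kappa'\in\cK$, so the outermost factors $a(v_\Lambda)$, $a^*(v_\Lambda)$ are not literally covered and require the companion Lemma~\ref{lem:a-star}; the paper also separates the sub-cases $I=0$ and $I\neq 0$ explicitly (for $m=0$ the latter yields a double integral in $\xi,\eta$ with the $\tau$-kernel inside, and for $m>0$ the case $I=0$ occurs only at $m=1$ and is treated by a direct bound), but your outline captures the mechanism in each case correctly.
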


Note that $\theta_{\Lambda,  n,m}$ is a sum of terms $\vartheta_{\Lambda, J, I, L}$ that  
  are essentially given as $(a(v) H_0^{-1} \star_{\ell_0} \tau_{\Lambda, I,L_1^{\nu-1}}) H_0^{-1}\star_{\ell_\nu}a^*(v)$, where $\tau_{\Lambda, I,L_1^{\nu-1}}$ is defined by~\eqref{eq:kernel-recursion} as a $\star$-product involving only the factors $H_0^{-1}$ and $\theta_{\Lambda,j,i}$, $i\leq j\leq n$ (thereby $\tau_{\Lambda, I,L_1^{\nu-1}}$ depends implicitly on $J$).
  Before proving Theorem~\ref{thm:theta} we thus prove a Lemma on  $\tau_{\Lambda, I,L_1^{\nu-1}}$ that proves the corresponding statement for $\tau$ and also plays a role in the proof of Proposition~\ref{prop:R}.

  \begin{lem}\label{lem:tau}
  Assume the hypothesis~\ref{ass:main}.
   Let $1\leq \nu\leq n\leq n_*$, $J\in \{0, \dots, n\}^\nu$ with $|J|<n_*$, $I\in \N_0^\nu$ with $i_\mu \leq j_\mu$ for $\mu=1, \dots, \nu$. For $\nu=1$ set $L=\emptyset$ and for $\nu>1$ let $L\in \N_0^{\nu-1}$ satisfying 
   \begin{equation*}
    \ell_\mu \leq \min \Big\{|I_1^\mu|-|L_1^{\mu-1}|, i_{\mu+1} \Big\},   \qquad  \mu=1, \dots, \nu-1.
   \end{equation*}
Assume that the statement of Theorem~\ref{thm:theta} holds for all $\theta_{\Lambda, j, i}$ for $0\leq i\leq j\leq n$ and $\Lambda \in \R_+ \cup \{\infty\}$.

For $\Lambda \in \R_+\cup\{\infty\}$ and $\tau_{\Lambda, I,L}$ defined by~\eqref{eq:kernel-recursion} we then have
\begin{enumerate}[a)]
\item\label{lem:tau rot} The kernel $\tau_{\Lambda, I,L}$ is invariant under simultaneous rotations of $Q,R,p \in \R^{d(2m+1)}$, $m=|I|-|L|$;
\item\label{lem:tau diff} The kernel $\tau_{\Lambda, I,L}(Q,R,p,E)$ is continuously differentiable in $E$ and $\gamma$-times differentiable in $p$;
%
\item\label{lem:tau K} For all $0\leq |\mu|\leq \gamma$ and uniformly in $\Lambda$
\begin{align*}
\partial_p^\mu \tau_{\Lambda, I, L} &\in \cK_{|I|-|L|,|\mu|/\gamma+\sigma}, \\
 \partial_E \tau_{\Lambda, I, L} &\in \cK_{|I|-|L|,1+\sigma}
\end{align*}
with
\begin{equation*}
 \sigma:=(|J|-|I|+|L|)(1-\delta/\gamma);
\end{equation*}
\item\label{lem:tau conv} For $\Lambda\to \infty$, $\tau_{\Lambda, I, L}$ and its derivatives converge pointwise to $\tau_{I,L}$ and its derivatives.
\end{enumerate}

\end{lem}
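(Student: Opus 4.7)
The plan is to proceed by induction on $\nu$, the length of the multi-indices $I$ and $J$. In the base case $\nu=1$ we have $\tau_{\Lambda,(i_1),\emptyset} = \theta_{\Lambda,j_1,i_1}$, and all four assertions follow at once from the assumed validity of Theorem~\ref{thm:theta} for $\theta_{\Lambda,j_1,i_1}$: here $|I|-|L|=i_1$ and $(|J|-|I|+|L|)(1-\delta/\gamma)=(j_1-i_1)(1-\delta/\gamma)$, so the $\cK$-indices match. For the inductive step, apply the recursion \eqref{eq:kernel-recursion},
\[
 \tau_{\Lambda,I_1^{\mu+1},L_1^\mu} = \tau_{\Lambda,I_1^\mu,L_1^{\mu-1}}\, H_0^{-1} \star_{\ell_\mu}\, \theta_{\Lambda,j_{\mu+1},i_{\mu+1}},
\]
and invoke Theorem~\ref{thm:K-star-prod}. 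By induction $\tau_{\Lambda,I_1^\mu,L_1^{\mu-1}}\in\cK_{|I_1^\mu|-|L_1^{\mu-1}|,\sigma_\mu}$ with the appropriate $\sigma_\mu$, the hypothesis on $\theta$ gives $\theta_{\Lambda,j_{\mu+1},i_{\mu+1}}\in\cK_{i_{\mu+1},(j_{\mu+1}-i_{\mu+1})(1-\delta/\gamma)}$, and the constraint on $L$ is exactly the admissibility condition $\ell_\mu\leq\min\{|I_1^\mu|-|L_1^{\mu-1}|,i_{\mu+1}\}$ needed for the $\star_{\ell_\mu}$-product. A direct arithmetic check shows that the resulting $\cK$-indices match $|I_1^{\mu+1}|-|L_1^\mu|$ and $\sigma_{\mu+1}=\sigma_\mu+(j_{\mu+1}-i_{\mu+1}+\ell_\mu)(1-\delta/\gamma)$, proving part \ref{lem:tau K}) for the undifferentiated kernel. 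The hypothesis $|J|<n_*$ ensures $\sigma_\nu<1$, so the bound is a genuine $\cK_{\cdot,\sigma}$-bound.

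Rotation invariance, part \ref{lem:tau rot}), propagates through the $\star_\ell$-operation: in \eqref{eq:kappa-star_ell} the integration variable $\Xi$ may be rotated jointly with $(Q,R,p)$, and the shifted arguments $p+\sum q_\nu$, $E+\sum\omega(q_\nu)$ are rotation-equivariant since $\omega$ is rotation-invariant. For parts \ref{lem:tau diff}) and the derivative bounds of \ref{lem:tau K}), I would differentiate the $\Xi$-integrand in \eqref{eq:kappa-star_ell} in $(p,E)$; by the chain rule, the derivatives land on the $(p,E)$-arguments of \emph{both} factors, producing a Leibniz-type expansion into sums of $\star_{\ell_\mu}$-products of $\partial_p^{\mu_1}\tau_{\Lambda,I_1^\mu,L_1^{\mu-1}}$ with $\partial_p^{\mu_2}\theta_{\Lambda,j_{\mu+1},i_{\mu+1}}$ (and analogously for $\partial_E$). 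Each summand lies in a $\cK$-class by the induction hypothesis and the assumption on $\theta$, and a second application of Theorem~\ref{thm:K-star-prod} yields the claimed indices with the extra $|\mu|/\gamma$ or $1$ in the $\sigma$-exponent. Part \ref{lem:tau conv}) then follows by dominated convergence in the $\Xi$-integral: the uniform-in-$\Lambda$ bounds underlying Theorem~\ref{thm:K-star-prod} supply integrable majorants, while pointwise convergence of the integrand (and its derivatives) is inherited from the induction hypothesis and the assumption on $\theta$.

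The main obstacle will be justifying the interchange of differentiation and the $\Xi$-integration, and more generally guaranteeing that the Leibniz-expanded derivatives admit $(p,E,\Lambda)$-uniform integrable majorants compatible with the $\cK$-class definitions. This is not difficult in principle, since each derivative costs only $|\mu|/\gamma$ or $1$ in the $\sigma$-exponent — well within the slack afforded by $|J|<n_*$ (which keeps $\sigma$ strictly below $1$) — but it requires carefully invoking the integral estimates of Section~\ref{sect:kappa-prod} with derivative-adjusted exponents, an essentially bookkeeping task that mirrors the proof of Theorem~\ref{thm:K-star-prod}.
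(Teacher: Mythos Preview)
Your approach is essentially the same as the paper's: induction on $\nu$, with the base case given by the assumed properties of $\theta_{\Lambda,j_1,i_1}$, the inductive step for the undifferentiated kernel via Theorem~\ref{thm:K-star-prod}, rotation invariance by change of variables in the $\Xi$-integral, and pointwise convergence by dominated convergence. The arithmetic you indicate for the $\cK$-indices is exactly what the paper does in~\eqref{eq:tau step}.

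There is, however, one omission in your Leibniz expansion for the derivatives. When you differentiate
\[
\tau_{\Lambda,I_1^{\mu},L_1^{\mu-1}}\, H_0^{-1} \star_{\ell_\mu} \theta_{\Lambda,j_{\mu+1},i_{\mu+1}}
\]
in $p$ (or $E$), the derivative hits \emph{three} factors, not two: $\tau$, $\theta$, \emph{and} $H_0^{-1}$. Your expansion ``$\partial_p^{\mu_1}\tau \star_{\ell_\mu} \partial_p^{\mu_2}\theta$'' misses the term $\tau\,(\partial_p H_0^{-1})\star_{\ell_\mu}\theta$. This term is not covered directly by Theorem~\ref{thm:K-star-prod}, since $\partial_p H_0^{-1}$ is not of the form $H_0^{-1}$. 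The paper handles it by writing $\tau\,(\partial_{p_i}H_0^{-1})H_0\cdot H_0^{-1}$ and using the pointwise bound
\[
\big|(H_0\,\partial_{p_i}H_0^{-1})(p,E)\big|\lesssim H_0(p,E)^{-1/\gamma},
\]
together with Lemma~\ref{lem:star ell=0}, to show that $\tau\,(\partial_{p_i}H_0^{-1})H_0\in\cK_{|I_1^{\mu}|-|L_1^{\mu-1}|,\,1/\gamma+\sigma_\mu}$; then Theorem~\ref{thm:K-star-prod} applies as before. The same mechanism (with~\eqref{eq:res p-Hess}) handles $\partial_E$ and second-order $p$-derivatives. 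This is the step where the extra $|\mu|/\gamma$ (resp.\ $1$) in the $\sigma$-exponent actually enters for the middle factor, and without it your bookkeeping would not close. Once you add this term and its treatment, your argument coincides with the paper's.
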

\begin{proof}
 We proceed by (finite) induction on $\nu=1, \dots, n$. For $\nu=1$, we have $J=(j_1)$, $L=\emptyset$ and $\tau_{\Lambda,i_1, \varnothing}=\theta_{\Lambda,j_1, i_1}$ satisfies all the claimed properties since the statement of Theorem~\ref{thm:theta} was assumed to hold and $j_1=|J|<n_*$.
 
 For part~\ref{lem:tau rot}), observe that changing variables $\xi_1, \dots , \xi_\ell \mapsto O\xi_1, \dots, O\xi_\ell$ in the formula~\eqref{eq:kappa-star_ell} shows that the $\star_\ell$-product preserves the property of being invariant under a simultaneous rotation of all arguments.
 
 Part~\ref{lem:tau diff}) will be proved together with the bound on the derivatives of part~\ref{lem:tau K}).
 Assume that the statement of the Lemma holds for $1\leq \nu-1<n_*$.
 for $J=(J_1^{\nu-1}, j_\nu)$  we have by Theorem~\ref{thm:K-star-prod} (taking $L_1^{\nu-2}=\emptyset$ if $\nu=2$)
 \begin{equation}\label{eq:tau Inot0 rec}
  \tau_{\Lambda, I , L} = \tau_{\Lambda, I_1^{\nu-1}, L_1^{\nu-2}}H_0^{-1} \star_{\ell_{\nu-1}} \theta_{\Lambda, j_\nu, i_\nu}\in \cK_{m, \sigma},
 \end{equation}
 with 
 \begin{equation}\label{eq:tau step}
  \begin{aligned}
 m&= |I_1^{\nu-1}| - |L_1^{\nu-2}| + i_{\nu}- \ell_{\nu-1},  \\
 \sigma &= (|J_1^{\nu-1}|-|I_1^{\nu-1}| + |L_1^{\nu-2}| + j_\nu - i_\nu + \ell_{\nu-1})(1-\delta/\gamma),
  \end{aligned}
 \end{equation}
due to the induction hypothesis and the assumed properties of $\theta_{\Lambda, j_\nu, i_\nu}$.
This shows part~\ref{lem:tau K}) for $\mu=0$.

Now let $|\mu|=1$. By Assumption~\ref{ass:main} we have
\begin{equation}\label{eq:res p-deriv}
\big|(H_0 \partial_{p_i} H_0^{-1})(p,E)\big| =\bigg|\frac{\partial_{p_i} \Omega(p)}{\Omega(p)+E} \bigg| 
\lesssim \frac{\Omega(p)^{1-1/\gamma}}{\Omega(p)+E} \leq \frac{1}{(\Omega(p)+E)^{1/\gamma}}.
\end{equation}
It then follows from Lemma~\ref{lem:star ell=0} (in case $I_1^{\nu-1}\neq 0$, otherwise the product is just the pointwise product and the analogous bound is is trivial) that for all 
\begin{equation}
0\leq  \lambda < \sigma_{\nu-1}:=  (|J_1^{\nu-1}|-|I_1^{\nu-1}| + |L_1^{\nu-2}|)(1-\delta/\gamma)
\end{equation}
 we have, setting $t=s+1/\gamma$,
\begin{align}
 &\Big|\tau_{\Lambda, I_1^{\nu-1}, L_1^{\nu-2}} (\partial_{p_i} H_0^{-1})H_0 (Q,R,p,E)\Big|  \\
 &\lesssim \frac{1}{(E+\Omega(Q))^{1/\gamma}} \min_{s\in [\lambda-1, 1-\lambda]}  \rho_{|I_1^{\nu-1}| - |L_1^{\nu-2}|,\lambda+s}(Q,E)\tilde\rho_{|I_1^{\nu-1}| - |L_1^{\nu-2}|,\lambda-s}(R,E). \notag \\
 &\leq \min_{t\in [\lambda+1/\gamma-1, 1-\lambda+1/\gamma]}  \rho_{|I_1^{\nu-1}| - |L_1^{\nu-2}|,\lambda+t}(Q,E)\tilde\rho_{|I_1^{\nu-1}| - |L_1^{\nu-2}|,\lambda+1/\gamma-t}(R,E).\notag
\end{align}
Hence 
\begin{equation}
 \tau_{\Lambda, I_1^{\nu-1}, L_1^{\nu-2}} (\partial_{p_i} H_0^{-1})H_0 \in \cK_{|I_1^{\nu-1}| - |L_1^{\nu-2}|, 1/\gamma+\sigma_{\nu-1}},
\end{equation}
and by the induction hypothesis $\partial_{p_i}\tau_{\Lambda, I_1^{\nu-1}, L_1^{\nu-2}} $ is an element of the same class.
Thus by Theorem~\ref{thm:K-star-prod}
\begin{align}\label{eq:tau p1-deriv}
 (\partial_{p_i} \tau_{\Lambda, I_1^{\nu-1}, L_1^{\nu-2}} H_0^{-1})& \star_{\ell_{\nu-1}} \theta_{\Lambda, j_\nu, i_\nu} \notag \\
 &+ \tau_{\Lambda, I_1^{\nu-1}, L_1^{\nu-2}} H_0^{-1}\star_{\ell_{\nu-1}} \partial_{p_i}\theta_{\Lambda, j_\nu, i_\nu} \in \cK_{m, \sigma + 1/\gamma}, 
\end{align}
with $m, \sigma$ as in~\eqref{eq:tau step}. Since the bounds on the kernels are uniform in $p$ and the internal integrals absolutely convergent (cf.~Lemmas~\ref{lem:star ell<n},~\ref{lem:star ell=n}), this shows that $\tau_{\Lambda, I , L}$ is continuously differentiable in $p$, with derivative given by~\eqref{eq:tau p1-deriv}. For the derivatives in $p$ of order $|\mu|=2=\gamma$ and the derivative in $E$ the claim follows from the same argument, in view of the inequalities
\begin{align}\label{eq:res p-Hess}
\begin{aligned}
 \big|(H_0 \partial^\mu_p H_0^{-1})(p,E)\big| &\lesssim H_0(p,E)^{-1}, \qquad |\mu|=\gamma, \\
 \big|(H_0 \partial_E H_0^{-1})(p,E)\big|&\lesssim H_0(p,E)^{-1}.
 \end{aligned}
 \end{align}
 This proves~\ref{lem:tau diff}) and~\ref{lem:tau K}).

Concerning part~\ref{lem:tau conv}), the uniform bounds on $\tau_{\Lambda, I_1^{\nu-1}, L_1^{\nu-2}}$, and $\theta_{\Lambda, j_\nu, i_\nu}$ together with the fact that the integrals in $\star_\ell$ converge absolutely (cf.~Lemmas~\ref{lem:star ell<n},~\ref{lem:star ell=n}) imply that  $\tau_{\Lambda, I, L}$ converges pointwise to $\tau_{I, L}$ for $\Lambda \to \infty$, by dominated convergence. The same holds for the derivatives in $E$, $p$.
This completes the proof of the Lemma.
\end{proof}

\begin{proof}[Proof of Theorem~\ref{thm:theta}]
 We proceed by induction on $n$.
 
\paragraph{Base case $n=1$} 
The base case for $\theta_{\Lambda, 1,0}$ is established in Lemma~\ref{lem:theta_10}. The kernel 
$\theta_{\Lambda,1,1}$ is invariant under the rotation $(q,r,p) \mapsto (Oq, Or, Op)$, $O\in SO(d)$, by rotation invariance of $v, \omega, \Omega$ (and the sharp UV-cutoff), so~\ref{thm:theta rot}) holds. We have
 \begin{equation}
  |\theta_{\Lambda,1,1}(q,r,p,E)| = \frac{|v_\Lambda(q)v_\Lambda(r)|}{\Omega(p+q+r)+\omega(q)+\omega(r) + E } \leq \min_{s\in [-1,1]}  \rho_{1,s}(q,E) \tilde\rho_{q, -s}(r,E),
 \end{equation}
so $\theta_{\Lambda, 1, 1}\in \cK_{1,0}$ uniformly in $\Lambda$, and~\ref{thm:theta plain}) holds for $\theta_{\Lambda, 1,1}$.
Moreover, as $E\geq 1$ we have for all $0\leq \lambda \leq 1$
\begin{align}
 |\partial_E \theta_{\Lambda,1,1}(q,r,p,E)| &= \frac{|v_\Lambda(q)v_\Lambda(r)|}{(\Omega(p+q+r)+\omega(q)+\omega(r) + E )^2} \notag\\
 &\leq  \min_{s\in [\lambda-1,1-\lambda]}  \rho_{1,\lambda+s}(q,E) \tilde\rho_{1, \lambda-s}(r,E),
\end{align}
so $\partial_E \theta_{\Lambda, 1, 1}\in \cK_{1,1}$ uniformly in $\Lambda$.
For the $p$-derivatives, we use the inequality~\eqref{eq:res p-deriv}
to obtain for all $0\leq \lambda \leq 1/\gamma$
\begin{align}
 |\nabla_p \theta_{\Lambda,1,1}(q,r,p,E)| 
 %
 &\lesssim \frac{|v(q)v(r)|}{(\Omega(p+q+r)+\omega(q)+\omega(r) + E )^{1+1/\gamma}} \notag \\
 &\leq  \min_{s\in [\lambda-1, 1-\lambda]} \rho_{1,\lambda+s}(q,E)\tilde \rho_{1,\lambda-s}(r,E),
 \end{align}
 whence $\partial^\nu_p \theta_{\Lambda,1,1}\in \cK_{1,1/\gamma}$ for $\nu=1$.
If $\gamma=1$ this already shows~\ref{thm:theta deriv}). If $\gamma=2$ calculating the second derivative and bounding it in the same way yields~\ref{thm:theta deriv}). 

We have convergence $\theta_{\Lambda,1,1}(q,r,p,E)\to \theta_{1,1}(q,r,p,E)$ for $\Lambda\to \infty$ by convergence of $v_\Lambda \to v$ and thus~\ref{thm:theta conv}).
This establishes the base case $n=1$.

\paragraph{Induction step for $m=0$}
To perform the induction step for $\theta_{n+1, 0}$ we need to take the limit $\Lambda \to \infty$, so we need to consider $\Lambda<\infty$ first. It is sufficient to prove the claim for each summand in~\eqref{eq:theta-n0-def}, i.e. for 
\begin{equation}
\lim_{\Lambda \to \infty} (\vartheta_{\Lambda, J, I, L}- E_{\Lambda, J, I, L})
\end{equation}
where $\vartheta_{\Lambda, J, I, L}$ is given by~\eqref{eq:vartheta-def} with $\nu\leq n$, $J\in \{1, \dots, n\}^\nu$ with $|J|=n$, $I\in \N_0^\nu$ with $i_\mu\leq j_\mu$, $1\leq \mu\leq \nu$ and  $L=(\ell_1, \dots, \ell_\nu)$ is chosen according to~\eqref{eq:L-constr-0}.

There are two somewhat distinct cases, $I=0$ and $I\neq 0$, to be considered. 
For $I=0$, we have $\ell_\nu=1$, $L_1^{\nu-1}=0$, and
\begin{align}
 &\vartheta_{\Lambda, J, 0, (0,\dots, 1)}- E_{\Lambda, J, 0, (0,\dots, 1)} \\
 &= 
 \int_{|\xi|\leq \Lambda} 
 \begin{aligned}[t]
  \frac{|v_\Lambda(\xi)|^2 \tau_{\Lambda, 0,0}(p+\xi,E+\omega(\xi))}{(\Omega(p+\xi)+ \omega(\xi)+ E  )^2} 
 -\frac{|v_\Lambda(\xi)|^2 \tau_{\Lambda, 0,0}(\xi,\omega(\xi)+E_0)}{(\Omega(\xi)+ \omega(\xi)+E_0  )^2} \ud \xi.
  \end{aligned}\notag
 \end{align}
 In view of properties of $\tau_{\Lambda, 0, 0}(p,E)$ from Lemma~\ref{lem:tau} the limit $\Lambda\to \infty$ can be treated in complete analogy to Lemma~\ref{lem:theta_10} by Taylor-expanding the integrand (appealing to rotation invariance to eliminate the first-order term if necessary, and using that $(n+1)(1-\delta/\gamma)\leq 1$).
 
For $I\neq 0$ we have $\ell_0=\ell_\nu=1$, and  $|I|-|L_1^{\nu-1}|=1$ by~\eqref{eq:L-constr-0}.
Then
\begin{align}\label{eq:theta I-not0}
 &\vartheta_{\Lambda, J, I, L} - E_{\Lambda, J, I, L} \\ &= 
 \begin{aligned}[t]
 \int\limits_{|\xi|\leq \Lambda}\int\limits_{|\eta|\leq \Lambda} & \frac{v(\xi) v(\eta) \tau_{\Lambda, I,L_1^{\nu-1}}(\xi, \eta,p,E)}{(\Omega(p+\xi)+ \omega(\xi)+ E  )(\Omega(p+\eta)+ \omega(\eta)+ E  )} \\
 &-\frac{v(\xi) v(\eta) \tau_{\Lambda, I,L_1^{\nu-1}}(\xi, \eta,0,E_0)}{(\Omega(\xi)+ \omega(\xi)  )(\Omega(\eta)+ \omega(\eta) +E_0 )} \ud \xi \ud \eta.
\end{aligned}\notag
\end{align}
Invariance of $\theta_{\Lambda, n+1,0}$ under rotations $p\mapsto Op$ follows from this formula by changing variables $\xi, \eta \mapsto O\xi, O\eta$ and using the rotation invariance of $\tau_{\Lambda, I,L_1^{\nu-1}}$, $\Omega$, $\omega$, and $v$.

If we express this using Taylor expansion as in Lemma~\ref{lem:theta_10} (for $\gamma=2$, for $\gamma=1$ we expand only once in $p$), we find
\begin{align}
 \vartheta_{\Lambda, J, I, L} - E_{\Lambda, J, I, L} 
 & =\int\limits_{E_0}^E \int\limits_{|\xi|\leq \Lambda}\int\limits_{|\eta|\leq \Lambda} \partial_t  F(\xi, \eta, 0, t) \ud t \ud \xi \ud \eta \label{eq:theta_n0 E}\\
 &\qquad + \int\limits_0^1 \int\limits_{|\xi|\leq \Lambda}\int\limits_{|\eta|\leq \Lambda}
  \Big\langle p,\nabla_q F(\xi, \eta, q, E) \Big\vert_{q=0}  \bigg\rangle \label{eq:theta_n0 rot}\\
 &\qquad +\int\limits_0^1 (1-t)\int\limits_{|\xi|\leq \Lambda}\int\limits_{|\eta|\leq \Lambda}
 \ \Big\langle p,\nabla_q^2 F(\xi, \eta, q, E) \Big\vert_{q=tp} p \bigg\rangle \label{eq:theta_n0 Hess}
\end{align}
with 
\begin{equation}
 F(\xi, \eta, p, E)=\frac{v(\xi) v(\eta) \tau_{\Lambda, I,L_1^{\nu-1}}(\xi, \eta,p,E)}{(\Omega(p+\xi)+ \omega(\xi)+ E  )(\Omega(p+\eta)+ \omega(\eta)+ E  )}.
\end{equation}
By Lemma~\ref{lem:tau}~\ref{lem:tau K}) with $|J|=n$, $|I|-|L_1^{\nu-1}|=1$, we have for 
$\lambda< (n-1)(1-\delta/\gamma)<1$,
\begin{align}
 &\bigg|\int\limits_{E_0}^E \int\limits_{|\xi|\leq \Lambda}\int\limits_{|\eta|\leq \Lambda} \partial_t  F(\xi, \eta, 0, t) \ud t \ud \xi \ud \eta\bigg| \notag\\
 &\lesssim \int_0^E t^{-\lambda} \bigg(\int_{\R^d} \frac{|v(\xi)^2}{(t+\omega(\xi))^2} \ud \xi\bigg)^2 \ud t \notag\\
 &\stackrel{\eqref{eq:integral-E}}{\lesssim} \int_0^E t^{-\lambda-2(1-\delta/\gamma)} \ud t \lesssim E^{1-\lambda -2(1-\delta/\gamma) },
 \label{eq:theta_n0 log}
\end{align}
as $\lambda + 2(1-\delta/\gamma) < (n+1)(1-\delta/\gamma)\leq 1$ for $n+1\leq n_*$.

The term~\eqref{eq:theta_n0 rot} vanishes due to rotation invariance of $\Omega, \omega, v$, and $\tau$, as argued in Lemma~\ref{lem:theta_10}. 

To bound~\eqref{eq:theta_n0 Hess} we use Lemma~\ref{lem:tau}~\ref{lem:tau K}) to obtain with $\lambda $ as above (keeping in mind that $\gamma=2$)
\begin{align}
 &\bigg| \int\limits_0^1 (1-t)\int\limits_{|\xi|\leq \Lambda}\int\limits_{|\eta|\leq \Lambda}
 \ \Big\langle p,\nabla_q^2 F(\xi, \eta, q, E) \Big\vert_{q=tp} p \bigg\rangle\bigg| \notag \\
 &\lesssim p^2 E^{-\lambda}\int\limits_0^1 (1-t) \bigg(\int_{\R^d} \frac{|v(\xi)|^2}{(\omega(\xi)+E)^2}\ud \xi \bigg) \notag \\
 &\lesssim |p|^\gamma E^{-\lambda -2(1-\delta/\gamma)}.
\end{align}

The limit of for $\Lambda\to \infty$ thus exists and satisfies the required bound.
The bounds on the derivatives are obtained by arguing as in Lemma~\ref{lem:theta_10}. This proves all claims for $\theta_{\Lambda, n+1, 0}$.

\paragraph{Induction step for $m>0$}
Now consider $\theta_{\Lambda, n+1, m}$ for $m>0$. The summand $\vartheta_{J,I, L}$ is given by
\begin{equation}
 \vartheta_{\Lambda, J, I, L} =\Big( a(v_\Lambda) H_0^{-1}\star_{\ell_0} \tau_{\Lambda, I,L_1^{\nu-1}} \Big) H_0^{-1} \star_{\ell_{\nu}}  a^*(v_\Lambda),
\end{equation}
where 
\begin{equation}
 m=|I|- |L_1^{\nu-1}| +1 -\ell_0-\ell_1 = |I|- |L| +1
\end{equation}
is the number of uncontracted variables.

In the case $I\neq 0$, the claim of  part~\ref{thm:theta plain}) follows immediately from Lemma~\ref{lem:tau}~\ref{lem:tau K}) and Lemma~\ref{lem:a-star}.

The case $I=0$ only occurs for $m=1$ and with $\ell_0=\ell_1=0$. Using Lemma~\ref{lem:tau}~\ref{lem:tau K}) we obtain in this case for $\lambda<n(1-\delta/\gamma)$
\begin{align}
 |\vartheta_{\Lambda, J, I, L}(q,r,p,E)| &=
 \bigg|\frac{v_\Lambda(q)v_\Lambda(r)\tau_{\Lambda,0, 0}(p+q+r, E + \omega(q)+\omega(r)}{H_0(p+q+r, E + \omega(q)+\omega(r))^2}\bigg| \notag\\
 & \lesssim \frac{|v(q)| |v(r)|}{(E+ \omega(q)+\omega(r))^{1+\lambda}},
\end{align}
which clearly satisfies~\ref{thm:theta plain}) with $n+1-m=n$.

Rotation invariance of these kernels, part~\ref{thm:theta rot}), follows from a change of variables in the formulas of Lemma~\ref{lem:a-star}, as in Lemma~\ref{lem:tau}~\ref{lem:tau rot}).

Part~\ref{thm:theta deriv}) is proved in the same way as Lemma~\ref{lem:tau}~\ref{lem:tau K}) for the derivatives, making use of the bounds on derivatives of $\tau_{I,L_1^{\nu-1}}$ and Lemma~\ref{lem:a-star}.

Moreover, we have pointwise convergence of $\vartheta_{\Lambda, J, I, L}$ to $\vartheta_{J, I, L}$ as $\Lambda \to \infty$ by our uniform bounds and the fact that the integrals in Lemma~\ref{lem:a-star} converge absolutely. This completes the proof.
\end{proof}

\subsection{Proof of Propositions~\ref{prop:T} and~\ref{prop:R}}\label{sect:propT}


\begin{proof}[Proof of Proposition~\ref{prop:T}]
\textit{a)} We need to show that $T_{\Lambda,n}=\sum_{m=1}^n \theta_{\Lambda,n,m}$ is bounded relative to $H_0 \ud \Gamma(\omega)^{-s}$ for, $s<\tfrac12 n(1-\delta/\gamma)$, uniformly in $\Lambda \in \R_+ \cup \{\infty\}$.

For $m=0$, we have from Theorem~\ref{thm:theta}~\ref{thm:theta plain}) the bound
\begin{equation}
 |\theta_{n,0}(\ud \Gamma(k)-p, \ud \Gamma(\omega) + E_0) |\lesssim H_0 (\ud \Gamma(\omega) +E_0)^{-s},
\end{equation}
for $s<n(1-\delta/\gamma)$, which is better than required.

For $m>0$, we have the bound of Theorem~\ref{thm:theta}~\ref{thm:theta plain}), which together with Lemma~\ref{lem:K_n-op} proves that for $\Psi \in D(H_0)$
\begin{equation}
\|\theta_{n,m}\Psi\| \lesssim \| (\ud \Gamma(\omega)+E_0)^{1-s} \Psi \|
\end{equation}
for $s<\tfrac12 n(1-\delta/\gamma)$. This proves the bound of Proposition~\ref{prop:T}~\ref{prop:T-bound}).
 
\textit{b)}  We need to prove strong convergence of $T_{\Lambda, n}$ to $T_n$  in the sense of bounded operators from $D(H_0)$ to $\cF$. Since by~\ref{prop:T-bound}) the familiy $T_\Lambda$ is uniformly bounded in $\mathscr{L}(D(H_0), \cF)$ it is sufficient to prove this on a dense subset $D\subset D(H_0)$.
As this set we choose the elements $\Psi\in \cF$ that are finite linear combinations of compactly supported functions in $L^2(\R^{d k})$, $k\geq 0$. 
The pointwise convergence of the kernels $\theta_{\Lambda,n,m}$ to $\theta_{n,m}$, $m\leq n$, established in Theorem~\ref{thm:theta} implies that for $\Psi \in D$, we have
\begin{equation}
 \lim_{\Lambda \to \infty} \big(\theta_{\Lambda, n,m}\Psi\big)^{(k)}(K) =\big(\theta_{n,m}\Psi\big)^{(k)}(K) 
\end{equation}
pointwise in $K\in \R^{dk}$. For $m=0$, convergence in $L^2(\R^{dk})$ follows immediately from the uniform bound on the multiplication operator $\theta_{\Lambda,n,0}$ and dominated convergence. 
Moreover, by the bound of Theorem~\ref{thm:theta}~\ref{thm:theta plain}), we have  that, for $k\geq m>0$, $|\theta_{\Lambda, n,m}\Psi^{(k)}(K)|$ is (up to symmetrisation and a numerical prefactor) given by
\begin{align}
  %
  &  \int\limits_{\R^{dm}} \Big|\theta_{\Lambda, n,m}\big(K_1^m,\Xi,\omega(K_{m+1}^k)+E_0\big)\Psi\uppar{k} (\Xi, K_{m+1}^k) \Big|\ud \Xi\notag \\
  & \lesssim \rho_{m,1}(K_1^m,\omega(K_{m+1}^k)+E_0) \int\limits_{\R^{dm}} \tilde \rho_{m,-1}(\Xi,\omega(K_{m+1}^k)+E_0)\big|\Psi\uppar{k} (\Xi, K_{m+1}^k)\big|\ud \Xi.
  \notag 
\end{align}
This expression has compact support in $K_{m+1}^k$ since $\Psi\uppar{k}$ has compact support, and it is square integrable in $K_1^m$ because $\rho_{m,1}(K_1^m, E_0)$ is square integrable. Thus $\theta_{\Lambda, n,m}\Psi^{(k)}$ converges to $\theta_{n,m}\Psi^{(k)}$ in $L^2(\R^{dk})$ by dominated convergence. This proves part~\ref{prop:T-conv}) of Proposition~\ref{prop:T}.

\textit{c)} The fact that $T_n$ is well defined on $D(H_0)$ follows from part~\ref{prop:T-bound}).
By construction, $T_{\Lambda,n}$ is symmetric. Indeed, the formula~\eqref{eq:T_Lambda def} shows inductively that $T_{\Lambda, n}$ takes the form $a(v_\Lambda) S_\Lambda a^*(v_\Lambda)$ with a symmetric operator $S_\Lambda$. Hence by the convergence proved in part~\ref{prop:T-conv}) $T_n$ is also symmetric on $D(H_0)$.
\end{proof}

We now turn to the remainder $R_\Lambda$, $\Lambda \in \R_+ \cup \{\infty\}$.

\begin{proof}[Proof of Proposition~\ref{prop:R}]
 
 For $\Lambda \in \R_+ \cup \{\infty\}$ the operator $R_\Lambda$ is given as 
 \begin{equation*}
  R_\Lambda=G_{T_\Lambda}^*  \sum_{m=0}^{n_*-1} \sum_{j=m+1}^{n_*} T_{\Lambda,j} \sum_{\nu=1}^{n_*-1-m}(-1)^{\nu}\sum_{J\subset\{1, \dots, n_*\}^\nu \atop |J|=n_*-1-m} \Big( \prod_{\mu=1}^\nu H_0^{-1} T_{\Lambda, j_\mu}\Big) G_0.
 \end{equation*}
 We first prove uniform bounds on $G_0, G_{T_\Lambda}^*$ where we gain some decay. Then we show a (form) bound on the operator between $G_0$, $G_{T_\Lambda}^*$  that allows us to conclude boundedness of $R_\Lambda$. 
 
 First, $G_0^* \ud \Gamma(\omega)^{s}$ is bounded for $1-s>\tfrac12 (1+\delta/\gamma)$, i.e., $s<\tfrac12 (1-\delta/\gamma)$, by Lemma ~\ref{lem:a-omega}, since
 \begin{equation}
   \| a(v)H_0^{s-1}\| \leq \|v\omega^{s-1}\|_{L^2}.
 \end{equation}
Hence $\ud \Gamma(\omega)^s G_0$ is also bounded. For $G_{T_\Lambda}^*$ we find
\begin{align}
 \|G_{T_\Lambda}^*\ud \Gamma(\omega)^{s}\| &\leq  
 \|a(v)  H_0^{-1} \ud\Gamma(\omega)^{s}- a(v)(H_0+T_\Lambda)^{-1} T_\Lambda H_0^{-1}\ud \Gamma(\omega)^{s}\| \notag \\
 & \leq \|G_0^* \ud\Gamma(\omega)^{s}\| + \|G_{T_\Lambda}\| \| T_\Lambda H_0^{-1}\ud \Gamma(\omega)^{s}\|,
\end{align}
where the last norm is finite for $s<\tfrac12(1-\delta/\gamma)$ by Proposition~\ref{prop:T}~\ref{prop:T-bound}).
Similarly, we have for $t>\tfrac12(1+ \delta/\gamma)$
 \begin{equation}
  \| G_0^* H_0 \ud \Gamma(\omega)^{-t}\| \leq \|v\omega^{-t}\|_{L^2}
 \end{equation}
by Lemma~\ref{lem:a-omega}.

We now consider the factor between $G_{T_\Lambda}^*$, $G_0$. Fix one summand, i.e., $j=j_0\in \{m+1, n_*-1\}$, $J\subset \{1, \dots, n_*\}$ with $|J| = n_*-s-m$. We can then expand each $T_{\Lambda,j_\mu}$, $\mu\in \{0, \dots, \nu\}$ into 
\begin{equation}
T_{\Lambda, j_\mu}=\sum_{i_\mu=1}^{j_\mu} \theta_{\Lambda, j_\mu, i_\mu} 
\end{equation}
and the operator products into the kernel products $\star_\ell$ with all possible choices of $\ell$, as in Section~\ref{sect:algorithm}.
After this, a fixed summand takes the form
\begin{equation}
 (-1)^{\nu}\theta_{\Lambda, j_0, i_0} H_0^{-1} \star_{\ell_0} \tau_{\Lambda, I, L}. 
\end{equation}
Note that $|J|<n_*$ so Lemma~\ref{lem:tau} applies to $\tau_{\Lambda, I, L}$. 
We then obtain from  Theorem~\ref{thm:K-star-prod} with Theorem~\ref{thm:theta}
\begin{equation}
 \theta_{\Lambda, j_0, i_0} H_0^{-1} \star_{\ell_0} \tau_{\Lambda, I, L} \in \cK_{k, \lambda}
\end{equation}
with 
\begin{equation}
\begin{aligned}
\lambda &= (j_0 + n_*-1-m - k)(1-\delta/\gamma), \\
 k&= i_0 + |I| - \ell_0 - |L|.
\end{aligned}
\end{equation}

We have $k=0$ only if $i_0=0$ and $I=0$. Then, by choice of $n_*$, we have $n_*(1-\delta/\gamma)>\delta/\gamma$ so there exists $s<\tfrac12(1-\delta/\gamma)$, $t>\tfrac12(1+\delta/\gamma)$ with $t-s-n_*(1-\delta/\gamma)< 0$. Thus
\begin{align}
&\| G_{T_\Lambda}^*  \theta_{\Lambda, j_0, 0} H_0^{-1} \tau_{\Lambda, 0, 0} G_0\| \notag\\
&\leq \| G_{T_\Lambda}^* \ud \Gamma(\omega)^s\| \| \ud \Gamma(\omega)^{-s} \theta_{\Lambda, j_0, 0} H_0^{-1} \tau_{\Lambda, 0, 0} \ud \Gamma(\omega)^{t} H_0^{-1}\|  \| \ud \Gamma(\omega)^{-t} H_0 G_0 \| \notag \\
&\lesssim \| G_{T_\Lambda}^* \ud \Gamma(\omega)^s\| \| \ud \Gamma(\omega)^{-s+t-n_*(1-\delta/\gamma)} \| \ud \Gamma(\omega)^{-t} H_0 G_0 \|
\end{align}
is bounded. 

For $k\geq 1$, an operator in $\cK_{k, \lambda}$ 
is form-bounded relative to $\ud \Gamma(\omega)^{2s}$ if $2s>1-\lambda-k(1-\delta/\gamma)$ by Lemma~\ref{lem:K_n-form}. Since $j+n_*-m-1\geq n_*$, we can choose $s>\tfrac12(1-\delta/\gamma)$ so that this holds. Consequently,
\begin{align}
 &\| G_{T_\Lambda}^* \theta_{\Lambda, j_0, i_0} H_0^{-1} \star_{\ell_0} \tau_{\Lambda, I, L} G_0\| \notag\\
 &\leq \| G_{T_\Lambda}^* \ud \Gamma(\omega)^s\| \| \ud \Gamma(\omega)^{-s} \theta_{\Lambda, j_0, i_0} H_0^{-1} \star_{\ell_0} \tau_{\Lambda, I, L}  \ud \Gamma(\omega)^{-s}\|  \| \ud \Gamma(\omega)^{s} G_0 \|
\end{align}
is bounded uniformly in $\Lambda$.
This shows that $\R_\Lambda$, $\Lambda \in \R_+\cup\{\infty\}$ is uniformly bounded.

Strong convergence of $R_\Lambda$ to $R$ for $\Lambda \to \infty$ follows from these bounds together with the convergence $G_{T_\Lambda}\to G_T$, and the convergence of the kernels $\theta_{\Lambda, n,m}\to \theta_{m,n}$ proved in Theorem~\ref{thm:theta} (see the proof of Proposition~\ref{prop:T}~\ref{prop:T-conv})).

\end{proof}

 \section*{Acknowledgements}

 This work work was supported by the Agence Nationale de la Recherche (ANR) through the project DYRAQ ANR-17-CE40-0016 and the ICB received additional support through the EUR-EIPHI Graduate School (Grant No. ANR-17-EURE-0002).

\appendix 

\section{Products in $\cK_n$}\label{sect:kappa-prod}

In this appendix we derive in detail the quantitative bounds on $\kappa \star_\ell H_0^{-1} \kappa' $ 
that imply Theorem~\ref{thm:K-star-prod} and are used in the proof of Theorem~\ref{thm:theta}.

Recall the definition~\eqref{eq:rho-def} of $\rho_{n,\lambda}$, $\tilde \rho_{n, \lambda}$ and note that 
\begin{equation}
 \tilde \rho_{n, \lambda}(R,E)=\rho_{n, \lambda}((r_n, \dots, r_1), E).
\end{equation}
We also have the property
\begin{equation}\label{eq:rho-id}
\begin{aligned}
  \rho_{m, 1}(Q, E+\Omega(Q')) \rho_{m', \lambda}(Q', E)&= \rho_{m+m', \lambda}((Q, Q'), E) \\
  \tilde \rho_{m, \lambda}(R,E) \tilde \rho_{m', 1}(R', E+\Omega(R)) &= \tilde \rho_{m+m', \lambda}((R, R'), E).
 \end{aligned}
\end{equation}

We will frequently use an elementary bound on a class of integrals. The proof makes explicit the role of the parameters $\alpha, \gamma$.

\begin{lem}
Assume the hypothesis~\ref{ass:main}.
 Let $s, t\geq 0$ with $s\neq 1+\delta/\gamma$ and $s+t>1+\delta/\gamma$. Then  for $b>0$, $a\geq 0$
 \begin{equation}\label{eq:integral-E}
 \int \frac{|v(\xi)|^2 \ud \xi}{(a+\omega(\xi))^s(b+\omega(\xi))^{t}} \lesssim (a+1)^{-(s-1-\delta/\gamma)_+} b^{-\min\{s+t -1 -\delta/\gamma, t\}},
\end{equation}
\end{lem}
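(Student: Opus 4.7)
The plan is to reduce the $d$-dimensional integral to a one-dimensional one, perform a change of variables putting it in a dimensionally normalised form, and then do a case analysis on the sign of $s-1-\delta/\gamma$.

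First I would use the lower bound $\omega(\xi)\geq C(c_b+\xi^2)^{\gamma/2}$ from Assumption~\ref{ass:main} together with the pointwise bound $|v(\xi)|\leq C|\xi|^{-\alpha}$ to dominate
\begin{equation*}
\int \frac{|v(\xi)|^2 \ud \xi}{(a+\omega(\xi))^s(b+\omega(\xi))^{t}} \lesssim \int_0^\infty \frac{r^{d-1-2\alpha} \ud r}{\bigl(a+(c_b+r^2)^{\gamma/2}\bigr)^s \bigl(b+(c_b+r^2)^{\gamma/2}\bigr)^t}
\end{equation*}
after integrating out the sphere (this uses $2\alpha<d$ to ensure local integrability near $r=0$). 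I would then split at $r=1$. On $\{r\leq 1\}$ the factor $(c_b+r^2)^{\gamma/2}$ is comparable to the constant $c_b^{\gamma/2}$, so that piece is bounded by $(a+1)^{-s}(b+1)^{-t}$ times a finite geometric integral, which is easily absorbed into the target. On $\{r\geq 1\}$ the substitution $u=r^\gamma$ identifies the remaining contribution, up to constants, with
\begin{equation*}
J := \int_1^\infty \frac{u^{\delta/\gamma}\, \ud u}{(a+u)^s(b+u)^t},
\end{equation*}
where the identity $d-2\alpha=\delta+\gamma$ has been used to collect the Jacobian and $r^{d-1-2\alpha}$.

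The core task is then to bound $J$. I would distinguish the two subcritical cases, exploiting the hypothesis $s\neq 1+\delta/\gamma$.

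\emph{Case A: $s<1+\delta/\gamma$.} Here $(a+u)^{-s}\leq u^{-s}$ already gives enough decay, and the factor in $a$ is irrelevant since $(s-1-\delta/\gamma)_+=0$. Rescaling $u=bv$ yields $J\lesssim b^{1+\delta/\gamma-s-t}\int_0^\infty v^{\delta/\gamma-s}(1+v)^{-t}\ud v$, where the remaining integral is finite because $\delta/\gamma-s>-1$ (ensuring convergence at $0$) and $s+t>1+\delta/\gamma$ (ensuring convergence at $\infty$). The resulting exponent $1+\delta/\gamma-s-t=-\min\{s+t-1-\delta/\gamma,t\}$ matches the claim in this subcase.

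\emph{Case B: $s>1+\delta/\gamma$.} Here one must track $a$ and $b$ separately. I would split $J$ at $u=b$. On $\{1\leq u\leq b\}$ the factor $(b+u)^t$ is comparable to $b^t$, and the remaining integral $\int_1^b u^{\delta/\gamma}(a+u)^{-s}\ud u$ is bounded by $\int_1^\infty u^{\delta/\gamma}(a+u)^{-s}\ud u$. The latter is estimated separately for $a\leq 1$ (where $(a+u)^{-s}\asymp u^{-s}$ and one gets an $O(1)$ bound) and $a\geq 1$ (by splitting once more at $u=a$ and using $s>1+\delta/\gamma$), producing the bound $(a+1)^{1+\delta/\gamma-s}$ in both cases. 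Thus this contribution is dominated by $b^{-t}(a+1)^{-(s-1-\delta/\gamma)}$, which is the right-hand side. On $\{u\geq b\}$ the factor $(b+u)^t\asymp u^t$, and using $(a+u)^{-s}\leq(a+b)^{-s+(s\wedge\cdot)}u^{-(s\wedge\cdot)}$ (interpolating between bounding by $(a+u)^{-s}\leq (a+b)^{-s}$ and $\leq u^{-s}$) one again recovers the required bound; equivalently, one may just note this tail is bounded by $b^{-t}(a+1)^{-(s-1-\delta/\gamma)}$ using $a+1\lesssim a+b$ monotonicity.

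The main obstacle is not conceptual but bookkeeping: carefully handling the interplay between $a$ and $b$ in Case B without losing the precise exponent, and making sure the exclusion $s\neq 1+\delta/\gamma$ is used to avoid the logarithmic borderline behaviour that would otherwise show up in the $\int u^{\delta/\gamma-s}\ud u$ estimates. Once the splits are chosen correctly, the bound follows by elementary power-counting.
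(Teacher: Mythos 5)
Your plan is correct and follows essentially the same route as the paper's proof: reduce to a radial power-law integral (the paper does this implicitly, you do it explicitly with the substitution $u=r^\gamma$, using $d-2\alpha=\delta+\gamma$) and split into the two cases $s\lessgtr 1+\delta/\gamma$, with the excluded borderline $s=1+\delta/\gamma$ avoiding a logarithm. Two small remarks. First, Case B can be simplified: the paper just uses $(b+\omega(\xi))^{-t}\leq b^{-t}$ for \emph{all} $\xi$, so the whole integral is at once bounded by $b^{-t}\int_1^\infty u^{\delta/\gamma}(a+u)^{-s}\,\ud u\lesssim b^{-t}(a+1)^{1+\delta/\gamma-s}$ with no need for the split at $u=b$. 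Second, the inequality ``$a+1\lesssim a+b$'' that you invoke for the tail is false when $b<1$; the interpolation version of your estimate does work (choose the split $s=(s-1-\delta/\gamma)+(1+\delta/\gamma)$ and use $(a+u)\geq\max\{a+1,u\}$ for $u\geq 1$), or one can just use $u^{-t}\leq 1$ on $u\geq 1$ in that regime, but the monotonicity claim as stated should be corrected.
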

\begin{proof}
 By the hypothesis~\ref{ass:main} the integral is bounded  by
 \begin{equation}
  \eqref{eq:integral-E} \lesssim \int 
  \frac{\ud \xi}{|\xi|^{2\alpha} (a+(c_1+\xi^2)^{\gamma/2})^s(b+|\xi|^{\gamma})^{t}}.
 \end{equation}
For $s>1+\delta/\gamma$, i.e. $2\alpha + s \gamma>d$, we drop $|\xi|^\gamma$ from the second factor in the denominator and obtain
\begin{equation}
 \eqref{eq:integral-E} \lesssim b^{-t} \int \frac{\ud \xi}{|\xi|^{2\alpha} (a+(c_1+\xi^2)^{\gamma/2})^s}
 \lesssim (a+1)^{-s -2\alpha/\gamma +d/\gamma}b^{-t},
\end{equation}
which yields the claim as $-2\alpha/\gamma +d/\gamma=1+\delta/\gamma$.
For $s<1+\delta/\gamma$, we have $2\alpha + s \gamma<d$ and
\begin{equation}
 \eqref{eq:integral-E} \lesssim  \int \frac{\ud \xi}{|\xi|^{2\alpha+s\gamma}(b+|\xi|^{\gamma})^{t} }
 \lesssim b^{-s-t-2\alpha/\gamma+d/\gamma},
\end{equation}
which proves the claim.
\end{proof}

The  following Lemmas propagate bounds on $\kappa\in \cK_n, \kappa'\in \cK_{n'}$ to  $\kappa H_0^{-1} \star_\ell  \kappa' $.
We treat the cases $\ell=\min\{n, n'\}=0$, $0<\ell<\max\{n, n'\}$, and $\ell=n=n'$ separately, starting with the case $\ell=\min\{n, n'\}=0$.

\begin{lem}\label{lem:star ell=0}
 Let $n\in \N$ and $\kappa \in \cK_n$, $\kappa'\in \cK_{0}$. Suppose that for some  $\mu\geq 0$,
 $0\leq \lambda\leq 1$, and $\lambda'\geq 0$ we have the bounds
\begin{align*}
  |\kappa(Q,R,p,E)| & \lesssim E^{-\mu} \min_{s\in [\lambda-1, 1-\lambda]} \rho_{n,\lambda+s}(Q,E)\tilde\rho_{n,\lambda-s}(R,E) \\
  |\kappa'(p,E)| & \lesssim H_0(p,E)E^{-\lambda'}.
\end{align*}
Then with
\begin{align*}
 \sigma&=\min\{\lambda+\lambda', 1\}, \\
 \tau&=\mu+(\lambda+\lambda'-1)_+,
\end{align*}
we have 
\begin{align*}
 |\kappa H_0^{-1} \star_0 \kappa'|(Q,R,p,E) &\lesssim E^{-\tau}
 \min_{s\in [\sigma-1, 1-\sigma]}\rho_{n,\sigma+s}(Q,E)\tilde\rho_{n,\sigma-s}(R,E), \\
 |\kappa' H_0^{-1} \star_0 \kappa|(Q,R,p,E)& \lesssim E^{-\tau}
 \min_{s\in [\sigma-1, 1-\sigma]}\rho_{n,\sigma+s}(Q,E)\tilde\rho_{n,\sigma-s}(R,E).
\end{align*}
\end{lem}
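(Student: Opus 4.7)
The plan is to unfold the $\star_0$ product via formula~\eqref{eq:kappa-star_ell}, apply the hypothesis on $\kappa'$ to extract a decay factor, and then redistribute that decay between the $\rho,\tilde\rho$ factors and an overall power of $E$.

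Since $\kappa'\in\cK_0$ has no creation/annihilation variables and $\ell=0$, formula~\eqref{eq:kappa-star_ell} collapses to
\[
  (\kappa H_0^{-1}\star_0 \kappa')(Q,R,p,E)
  = \kappa(Q,R,p,E)\cdot (H_0^{-1}\kappa')\bigl(p+\textstyle\sum_\mu r_\mu,\,E+\omega(R_1^n)\bigr).
\]
The hypothesis $|\kappa'|\lesssim H_0\cdot E^{-\lambda'}$ gives $|H_0^{-1}\kappa'(p,E)|\lesssim E^{-\lambda'}$, so
\[
  |\kappa H_0^{-1}\star_0 \kappa'|(Q,R,p,E) \lesssim |\kappa(Q,R,p,E)|\cdot(E+\omega(R_1^n))^{-\lambda'}.
\]

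The key reduction is to set $x:=\sigma-\lambda = \min\{\lambda',\,1-\lambda\}\in[0,\lambda']$ and use the pointwise inequality
\[
  (E+\omega(R_1^n))^{-\lambda'}
  \leq (E+\omega(r_1))^{-x}\cdot E^{-(\lambda'-x)},
\]
valid because $\omega(R_1^n)\geq \omega(r_1)$ and $\omega(R_1^n)\geq 0$. Inspecting the definition~\eqref{eq:rho-def}, the factor $(E+\omega(r_1))^{-x}$ is tailor-made to shift the first factor of $\tilde\rho_{n,\lambda-s}(R,E)$, turning it into $\tilde\rho_{n,\lambda-s+2x}(R,E)$, while $E^{-(\lambda'-x)} = E^{-(\lambda+\lambda'-1)_+}$ contributes exactly the power of $E$ recorded in $\tau=\mu+(\lambda+\lambda'-1)_+$. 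For a prescribed $s'\in[\sigma-1,1-\sigma]$, I would apply the hypothesis on $\kappa$ with $s:=s'+x$; the identities $\lambda+s=\sigma+s'$ and $\lambda-s+2x=\sigma-s'$ then produce the claimed bound, and the admissibility $s\in[\lambda-1,1-\lambda]$ is automatic from the corresponding constraint on $s'$.

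The reverse order $\kappa'H_0^{-1}\star_0\kappa$ proceeds identically, now using the symmetric estimate $(E+\omega(Q_1^n))^{-\lambda'}\leq (E+\omega(q_n))^{-x}E^{-(\lambda'-x)}$ to shift $\rho_{n,\lambda+s}(Q,E)$ into $\rho_{n,\lambda+s+2x}(Q,E)$, applying the $\kappa$-hypothesis with $s:=s'-x$. The main obstacle is the initial bookkeeping: identifying $x=\min\{\lambda',1-\lambda\}$ as the correct split of the weight $\lambda'$ between a factor absorbable into $\tilde\rho$ (or $\rho$) and a bare power of $E$, and then verifying that the substitution $s\leftrightarrow s'\pm x$ maps $[\sigma-1,1-\sigma]$ into $[\lambda-1,1-\lambda]$. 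Everything else is routine algebra with~\eqref{eq:rho-def}.
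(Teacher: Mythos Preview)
Your proposal is correct and matches the paper's argument. The only cosmetic difference is that the paper first absorbs the full weight $\lambda'$ into the index of $\rho$ (writing $\rho_{n,\lambda+2\lambda'+t}$) and then extracts the excess over $\sigma$ as a power of $E$, whereas you perform the split $\lambda'=x+(\lambda'-x)$ with $x=\min\{\lambda',1-\lambda\}$ beforehand; the substitution $t=s'-\sigma+\lambda$ in the paper is exactly your $s=s'\pm x$.
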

\begin{proof}
 The kernel of $\kappa' H_0^{-1} \star_0 \kappa$ is
 \begin{equation}
  \kappa' H_0^{-1} \star_0 \kappa(Q,R,p,E) = \frac{\kappa'\Big(p+\sum_{\mu=1}^n q_\mu, E + \omega(Q)\Big)\kappa(Q,R,p,E)}{\Omega(p+\sum_{\mu=1}^n q_\mu) + E+\omega(Q)}.
 \end{equation}
We thus have for any $t\in [\lambda-1,1-\lambda]$
\begin{align}
 |\kappa' H_0^{-1} \star_0 \kappa(Q,R,p,E)|&\lesssim E^{-\mu}(E+\omega(Q))^{-\lambda'} \rho_{n,\lambda+t}(Q,E)
 \tilde\rho_{n,\lambda-t}(R,E) \notag \\
 & \lesssim E^{-\mu}\rho_{n,\lambda +2\lambda'+t}(Q,E)
 \tilde\rho_{n,\lambda-t}(R,E).
\end{align}
Choosing $t=s-\sigma+\lambda$ for $s\in [\sigma-1,1-2\lambda+\sigma]\supset [\sigma-1, 1-\sigma]$, this becomes
\begin{align}
 |\kappa' H_0^{-1} \star_0 \kappa(Q,R,p,E)|&\lesssim E^{-\mu}\rho_{n,2\lambda+2\lambda' - \sigma +t}(Q,E)
 \tilde\rho_{n,\sigma-s}(R,E) \notag \\
 & \lesssim E^{-\tau} \rho_{n,\sigma +t}(Q,E)
 \tilde\rho_{n,\sigma-s}(R,E).
\end{align}
The proof for the $\kappa H_0^{-1} \star_0 \kappa'$ is essentially the same, with the roles of $Q,R$ reversed.
\end{proof}

The next Lemma treats the general case of $\kappa\star H_0^{-1} \kappa'$, except for the special case $n=n'=\ell$.

 \begin{lem}\label{lem:star ell<n}
 Let $n, n'\in \N$ and $\kappa \in \cK_n$, $\kappa'\in \cK_{n'}$. Suppose that for some  
 $\mu, \mu'\geq 0$ and  $0\leq \lambda, \lambda'\leq 1$ with $\lambda+\lambda'\neq 1+\delta/\gamma$ we have the bounds
\begin{align*}
  |\kappa(Q,R,p,E)| & \lesssim E^{-\mu} \min_{s\in [\lambda-1, 1-\lambda]} \rho_{n,\lambda+s}(Q,E)\tilde\rho_{n,\lambda-s}(R,E) \\
  |\kappa'(Q',R',p,E)| & \lesssim E^{-\mu'} \min_{s\in [\lambda'-1, 1-\lambda']}  \rho_{n',\lambda'+s}(Q',E)\tilde\rho_{n',\lambda'-s}(R',E).
\end{align*}
Then for all $0\leq \ell\leq \min\{n, n'\}$ with $\ell<\max\{n , n'\}$ and
\begin{align*}
 \sigma&=\min\{\lambda+\lambda'+\ell(1-\delta/\gamma), 1\}, \\
 \tau&=\mu+\mu'+(\lambda+\lambda'+\ell(1-\delta/\gamma)-1)_+,
\end{align*}
we have 
\begin{equation*}
 |\kappa H_0^{-1} \star_\ell \kappa'|(Q,R,p,E) \lesssim E^{-\tau}
 \min_{s\in [\sigma-1, 1-\sigma]}\rho_{n+n'-\ell,\sigma+s}(Q,E)\tilde\rho_{n+n'-\ell,\sigma-s}(R,E).
\end{equation*}
\end{lem}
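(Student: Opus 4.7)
The plan is to substitute the explicit formula~\eqref{eq:kappa-star_ell} for $\kappa \star_\ell \kappa'$ (with the extra factor $H_0^{-1}$ inserted in the middle), apply the assumed pointwise bounds on $\kappa$ and $\kappa'$, and then carry out the $\ell$-fold integral over the contraction variables $\Xi \in \R^{d\ell}$ using the elementary estimate~\eqref{eq:integral-E}. By the symmetry $(n,\kappa)\leftrightarrow (n',\kappa')$, which swaps the roles of $\rho$ and $\tilde\rho$ and of $Q$ and $R$, we may assume without loss of generality that $\ell<n$. The point of this reduction is that the distinguished first slot of $\tilde\rho_n$ in the bound on $\kappa$, which carries the fractional exponent $(1+\lambda-s)/2$, is then necessarily occupied by an \emph{uncontracted} $R$-variable, and so it does not enter the $\Xi$-integration.

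First I would fix contraction indices in the sum in~\eqref{eq:kappa-star_ell}: route the $\ell$ contracted variables of $\kappa$ into positions $n-\ell+1,\dots,n$ of $\tilde\rho_n(R,E)$, and the $\ell$ contracted variables of $\kappa'$ into the first $\ell$ positions of $\rho_{n'}(Q',E)$. On all these positions the factor of $\Xi$ carries integer exponent $1$, i.e.\ is of the form $|v(\xi_j)|/(E+\omega(\cdots)+\omega(\Xi_1^j))$. Repeated use of the identity~\eqref{eq:rho-id} then factors the integrand of the $\Xi$-integral into an \emph{external} $\rho_{n+n'-\ell}(\cdot,E)\tilde\rho_{n+n'-\ell}(\cdot,E)$ piece depending only on $(Q,R)$ and an \emph{internal} piece depending only on $\Xi$. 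After majorizing $E + \omega(\cdots)+\omega(\xi_j)\geq E+\omega(\xi_j)$ in the internal denominators, the bound takes the schematic form
\begin{equation*}
|\kappa H_0^{-1}\star_\ell \kappa'|(Q,R,p,E)
\lesssim E^{-\mu-\mu'}\rho\tilde\rho(Q,R,E)\int_{\R^{d\ell}}\prod_{j=1}^\ell \frac{|v(\xi_j)|^2}{(E+\omega(\xi_j))^2}\,\ud\xi_j.
\end{equation*}

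Second I would apply~\eqref{eq:integral-E} with $s=t=1$ (which is admissible since $2>1+\delta/\gamma$ as $\delta<\gamma$) to each of the $\ell$ one-dimensional integrals, obtaining a total gain of $E^{-\ell(1-\delta/\gamma)}$. The total excess decay $\lambda+\lambda'+\ell(1-\delta/\gamma)$ is then redistributed according to whether it stays below or exceeds the threshold $1$: if $\lambda+\lambda'+\ell(1-\delta/\gamma)\leq 1$, all of it is absorbed into the exponents of the external $\rho\tilde\rho$ by writing it as $\rho_{n+n'-\ell,\sigma+s}\tilde\rho_{n+n'-\ell,\sigma-s}$ with $\sigma=\lambda+\lambda'+\ell(1-\delta/\gamma)$; otherwise at most $\sigma=1$ is absorbed there and the surplus is extracted as an additional scalar $E^{-(\lambda+\lambda'+\ell(1-\delta/\gamma)-1)}$ by estimating one denominator $E+\omega(\cdots)\geq E$. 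The range $s\in[\sigma-1,1-\sigma]$ in the conclusion arises from the freedom to vary the free parameters in the bounds on $\kappa$ and $\kappa'$ at the outset, which shifts excess decay between $Q$- and $R$-slots.

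The main obstacle is the careful bookkeeping of the energy shifts $E+\sum\omega(q_\nu)$, $E+\sum\omega(r_\mu)$ appearing in the arguments of the inner factor in~\eqref{eq:kappa-star_ell}: these shifts are precisely what the identity~\eqml{eq:rho-id} is designed to handle, allowing the pieces of the external $\rho\tilde\rho$ decay to be reassembled, across the $\Xi$-integration, into a single $\rho_{n+n'-\ell}\tilde\rho_{n+n'-\ell}$ on the external variables without loss. The exclusion $\lambda+\lambda'\neq 1+\delta/\gamma$ reflects exactly the borderline case of~\eqref{eq:integral-E} (where the bound becomes logarithmic rather than a pure power); away from this critical value the powers fit together and produce the claimed estimate.
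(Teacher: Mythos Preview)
There is a genuine gap in your argument. You claim that after reducing to $\ell<n$, the distinguished first slot of $\tilde\rho_n$ (carrying the fractional exponent) is ``necessarily occupied by an uncontracted $R$-variable''. This is false: the formula~\eqref{eq:kappa-star_ell} is a \emph{sum} over all index sets $I=(i_1<\dots<i_\ell)$ and $J=(j_1,\dots,j_\ell)$, and the term with $i_1=1$ places a contracted variable $\xi_1$ precisely into that first slot. The reduction $\ell<n$ only guarantees that \emph{some} $R$-slot is uncontracted, not that the first one is. You are not free to ``route'' the contractions into your favourite positions; each term in the sum must be bounded separately.

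The paper's proof handles these edge cases explicitly. When $i_1=1$ (or, symmetrically, when some $j_\nu=n'$), the contracted variable sits in a slot with exponent $(1+\lambda-s)/2$ (resp.\ $(1+\lambda'+s')/2$), and the integral must absorb the adjacent factor $|v(r_1)|$ (resp.\ $|v(q_{n+n'-\ell})|$) before the estimate~\eqref{eq:integral-E} can be applied. The simultaneous case $i_1=1$ and $j_1=n'$ is where the exclusion $\lambda+\lambda'\neq 1+\delta/\gamma$ actually enters, since the relevant integral then has combined exponent $\lambda+\lambda'$ in the position of the parameter $s$ in~\eqref{eq:integral-E}. Your sketch never encounters this case, which is why you cannot explain the role of that hypothesis beyond a vague reference to the borderline. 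A secondary issue is that by dropping all external variables from the internal denominators you lose the ``link'' factor $(E+\omega(r_1)+\omega(q_{n+n'-\ell}))^{-1-\ell(1-\delta/\gamma)}$ that the paper retains; this factor is what allows the gained decay to be split freely between the $Q$- and $R$-sides to obtain the full range $s\in[\sigma-1,1-\sigma]$.
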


\begin{proof}

For $\ell=0$, we then have from~\eqref{eq:rho-id} (taking $s=1-\lambda$ and $s'=\lambda'-1$)
 \begin{align}
  &|\kappa \star_0 H_0^{-1}\kappa'|((Q,Q'), (R,R'),p,E) \notag \\
   &\lesssim E^{-\mu-\mu'}\frac{\tilde \rho_{n, 2\lambda-1}(R, E+\omega(Q')) \tilde \rho_{n', 1}(R',E+\omega(R))}{E+\omega(Q')+\omega(R)}  \notag\\
  &\qquad \times \rho_{n, 1}(Q, E+\omega(Q'))\rho_{n', 2\lambda'-1}(Q',E+\omega(R)) \notag \\
  &\leq E^{-\mu-\mu'}\frac{\rho_{n+n',2\lambda'-1}((Q,Q'),E) \tilde \rho_{n+n', 2\lambda-1}((R,R'),E)}{E+\omega(q'_{n'})+\omega(r_1)} \notag\\
  & \leq E^{-\mu-\mu'} \rho_{n+n',2\lambda'+t}((Q,Q'),E) \tilde\rho_{n+n', 2\lambda-t}((R,R'),E),
 \end{align}
 for any $-1\leq t\leq 1$. Setting $s=\lambda'-\lambda+t$ and bounding any inverse powers of $E+\omega(r_1)$, $E+\omega(q_{n+n'})$ in excess of one by inverse powers of $E$ gives the desired inequality. 
 
 Now let $\ell>0$.  In view of~\eqref{eq:kappa-star_ell}, we need to integrate in $\Xi=(\xi_1, \dots , \xi_\ell)$ the quantity
\begin{align}
 &\frac{ \Big|\kappa\Big(Q_1^n, S,p+\sum\limits_{\mu=1}^{n'-\ell} q_{n+\mu}, E+\omega(Q_{n+1}^{n+n'-\ell})\Big) \Big|}  {E+\omega(\Xi)+ \omega(Q_{n+1}^{n+n'-\ell})+\omega(R_1^{n-\ell})} \notag \\
 & \times \Big|\kappa'\Big(U, R_{n-\ell+1}^{n+n'-\ell},p+\sum\limits_{\mu=1}^{n-\ell} r_\mu, E+\omega(R_1^{n-\ell})\Big)\Big|\label{eq:kappa-star-integrand}
\end{align}
evaluated at $S_I=\Xi=U_J$, $S_{I^c}=R_1^{n-\ell}$, $U_{J^c}=Q_{n+1}^{n+n'-\ell}$,
where $I=(i_1, \dots , i_\ell)$ with $1\leq i_1< \dots< i_\ell\leq n$ and $J=(j_1, \dots j_\ell )$ with  pairwise different $j_1, \dots , j_\ell \in \{1, \dots, n'\}$.

We first restrict to $\ell<\min\{n,n'\}$.
As for $\ell=0$, we then use hypothesis with $s=1-\lambda$, $s'=\lambda'-1$ to obtain
\begin{align}
 \eqref{eq:kappa-star-integrand}
 &\leq \frac{\tilde \rho_{n, 2\lambda-1}(S,E+\omega(Q_{n+1}^{n+n'-\ell}))\rho_{n',2\lambda'-1}(U, E+\omega(R_1^{n-\ell}))}{E+\omega(\Xi)+ \omega(Q_{n+1}^{n+n'-\ell})+\omega(R_1^{n-\ell})} \label{eq:rho-ell integrand} \\
 &\qquad \times E^{-\mu-\mu'}\rho_{n, 1}(Q_1^n, E+\omega(Q_{n+1}^{n+n'-\ell}) \tilde \rho_{n',1}(R_{n-\ell+1}^{n+n'-\ell}, E+\omega(R_1^{n-\ell})).\label{eq:rho-non-integrand}
\end{align}
We expand $\rho$, $\tilde \rho$ in~\eqref{eq:rho-ell integrand} using the definition~\eqref{eq:rho-def} and evaluate the variables $S, U$ according to the prescription above.
For any pair with $i_\nu\neq 1$, $j_\nu\neq n'$, $\nu=1,\dots, \ell$, we group the two factors containing $v(s_{i_\nu})=v(\xi_\nu)=v(u_{j_\nu})$ together and drop $\omega(\xi_\nu)$ from all other factors, which gives an upper bound. The integral over $\xi_\nu$ is then given by
\begin{align}
 &\int \hspace{-3.5pt} \frac{|v(\xi_\nu)|^2 \ud\xi_\nu}{(E+\omega(\xi_\nu) + \omega(R_1^{i_\nu -1})+ \omega(Q_{n+1}^{n+n'-\ell}))(E+\omega(\xi_\nu)+\omega(R_1^{n-\ell})+\omega(Q_{j_\nu+1}^{n+n'-\ell}))} \notag\\
 &\stackrel{\eqref{eq:integral-E}}{\lesssim} (E+\omega(r_1)+\omega(q_{n+n'-\ell}))^{\delta/\gamma-1}.
\end{align}
If $\nu=1$ and $i_1=1$, $j_1\neq n'$, we include the factor with $v(r_1)$ (here we use that $\ell<n$ and thus $r_1=s_a$ for some $a>1$) before dropping $\omega(\xi_1)$, which gives an upper bound on the $\xi_1$-integral by
\begin{align}
 &| v(r_1)| \int  \frac{|v(\xi_1)|^2 \ud\xi_1}{(E+\omega(\xi_1))^\lambda(E +\omega(\xi_1) + \omega(r_1)+\omega(q_{n+n'-\ell}))^2} \notag\\
 &\stackrel{\eqref{eq:integral-E}}{\lesssim} \frac{| v(r_1)|}{(E+\omega(r_1)+\omega(q_{n+n'-\ell}))^{\lambda +1-\delta/\gamma}},
\end{align}
since $\lambda\leq 1<1+\delta/\gamma$.
If for some $\nu\in \{1, \dots, \ell\}$, $j_\nu=n'$ and $i_\nu\neq 1$ the argument is the same with $v(r_1)$ replaced by $v(q_{n+n'-\ell})$ and $\xi_1$ by $\xi_\nu$.

 This gives us the inequality
\begin{align}
 &\int \eqref{eq:rho-ell integrand}\Bigg\vert{\substack{S_I=\Xi \\ S_{I^c}=R_1^{n-\ell} \\ U_J=\Xi \\ U_{J^c}=Q_{n+1}^{n+n'-\ell}}} \ud \Xi \notag \\
 &\lesssim  \frac{\rho_{n'-\ell, 2\lambda'-1}(Q_{n+1}^{n+n'-\ell},E)\tilde \rho_{n-\ell,2\lambda-1 }(R_1^{n-\ell}, E)}{(E+\omega(r_1)+\omega(q_{n+n'-\ell}))^{1+\ell(1-\delta/\gamma)}}
\end{align}
 Combining this with~\eqref{eq:rho-non-integrand} and splitting the denominator as in the case $\ell=0$ gives a bound by
\begin{align}
 E^{-\tau} \rho_{n+n'-\ell,\sigma + s}(Q,E)\tilde\rho_{n+n'-\ell, \sigma-s}(R,E),
\end{align}
with $2\lambda'-1-\sigma \leq s\leq \sigma+1-2\lambda$, which is a weaker condition than $\sigma-1\leq s\leq 1-\sigma$. This proves required bound for the terms with $j_\nu\neq n'$.

If $I, J$ are such that $i_1=1$ and $j_1=n'$ we include both the factor with $v(r_1)$ and $v(q_{n+n'-\ell})$ in the $\xi_1$-integral, leading to 
\begin{align}
 & |v(r_1)||v(q_{n+n'-\ell})| \int \frac{|v(\xi_1)|^2 \ud \xi_1 }{(E+\omega(\xi_1))^{\lambda+\lambda'}(E +\omega(\xi_1) + \omega(r_1)+\omega(q_{n+n'-\ell}))^2} \notag \\
 & \stackrel{\eqref{eq:integral-E}}{\lesssim}  \frac{|v(r_1)||v(q_{n+n'-\ell})|}{(E+\omega(r_1)+\omega(q_{n+n'-\ell}))^{\min\{\lambda+\lambda'+1-\delta/\gamma,2\}}} E^{-(\lambda+\lambda'-1-\delta/\gamma)_+} \notag \\
 &\lesssim 
   \frac{|v(r_1)||v(q_{n+n'-\ell})|}{(E+\omega(r_1)+\omega(q_{n+n'-\ell}))^{\sigma - (\ell-1)(1-\delta/\gamma)}} E^{-(\lambda + \lambda'+\ell(1-\delta/\gamma)-1)_+}
\end{align}
where we used that $1+\delta/\gamma \neq \lambda + \lambda'$. From here we conclude as before, and this proves the claim for  $\ell<\min\{n, n'\}$.

The remaining case is $\ell=\min\{n, n'\}$. Let $\ell=n$, $\ell<n'$. We then use the hypothesis differently, keeping the freedom of choosing the value of $s'\in [\lambda'-1, 1-\lambda']$. Instead of~\eqref{eq:rho-ell integrand},~\eqref{eq:rho-non-integrand}, this gives for the case at hand
\begin{align}
 \eqref{eq:kappa-star-integrand}&=\frac{\Big|\kappa\big(Q_1^n, \Xi,p+\sum_{\mu=1}^{n'-\ell} q_{n+\mu}, E+\omega(Q_{n+1}^{n+n'-\ell})\big) \kappa'\big(U, R,p, E)\Big|}{E+\omega(\Xi)+ \omega(Q_{n+1}^{n+n'-\ell})} \notag\\
 &\lesssim  \frac{\tilde\rho_{n, 2\lambda-1}(\Xi,E+\omega(Q_{n+1}^{n+n'-\ell}))\rho_{n',\lambda'+s'}(U, E)}{E+\omega(\Xi)+ \omega(Q_{n+1}^{n+n'-\ell})} \label{eq:rho n=ell int} \\
 &\qquad \times E^{-\mu-\mu'} \rho_{n, 1}(Q_1^n, E+\omega(Q_{n+1}^{n+n'-\ell})) \tilde \rho_{n',\lambda'-s'}(R, E)).
 \label{eq:rho n=ell non-int}
\end{align}
The integral over $\xi_1=\xi_{i_1}$ is then bounded using the denominator in~\eqref{eq:rho n=ell int}, which gives for $j_1\neq n'$
\begin{align}
 &\int  \frac{|v(\xi_1)|^2 \ud\xi_1}{(E +\omega(\xi_1) +\Omega(Q_{n+1}^{n+n'-\ell}))^{1+\lambda}(E +\omega(\xi_1) +\Omega(Q_{n+j_1+1}^{n+n'-\ell}))} \notag \\
 &\stackrel{\eqref{eq:integral-E}}{\lesssim}  (E+\omega(q_{n+n'-\ell}))^{-\lambda-1+\delta/\gamma}.
\end{align}
If $j_1=n'$, then we additionally include the factor with $v(q_{n+n'-\ell})$ as we did for $\ell<\min\{n, n'\}$ and obtain (with $(1+\lambda'-s')/2\leq 1$)
\begin{align}
 &|v(q_{n+n'-\ell})| \int  \frac{|v(\xi_1)|^2 \ud\xi_1}{(E +\omega(\xi_1) +\omega(q_{n+n'-\ell}))^{2+\lambda}(E +\omega(\xi_1))^{(1+\lambda'-s')/2}} \notag \\
 &\lesssim \frac{|v(q_{n+n'-\ell})|}{(E+\omega(q_{n+n'-\ell}))^{1-\delta/\gamma + (1+2\lambda+\lambda'-s')/2}}.
\end{align}
This gives us a bound on the integral by
\begin{align}
 \int \eqref{eq:rho n=ell int}  \bigg|{\substack{\scriptscriptstyle U_J=\Xi \\ U_{J^c}=Q_{n+1}^{n+n'-\ell}}}
 \; \ud \Xi
 \lesssim \rho_{n'-\ell,2\lambda+\lambda'+2\ell(1-\delta/\gamma)+ t}(Q_{n+1}^{n+n'-\ell}, E).  
\end{align}
Combining with~\eqref{eq:rho n=ell non-int} and setting $s=s'+\sigma-\lambda'$ (with the resulting restriction $ s\in [\sigma-1, 1+\sigma-2\lambda']\supset [\sigma-1,1-\sigma]$)
gives the claim.
 \end{proof}

We now turn to the remaining case $\ell=n=m$.

\begin{lem}\label{lem:star ell=n}
Let $n\in \N$ and $\kappa, \kappa' \in \cK_n$. Suppose that for some  
 $\mu, \mu'\geq 0$ and  $0\leq \lambda, \lambda'\leq 1$  we have the bounds
\begin{align*}
  |\kappa(Q,R,p,E)| & \lesssim E^{-\mu} \min_{s\in [\lambda-1, 1-\lambda]} \rho_{n,\lambda + s}(Q,E)\tilde\rho_{n,\lambda-s}(R,E) \\
   |\kappa'(Q,R,p,E)| & \lesssim E^{-\mu'} \min_{s\in [\lambda'-1, 1-\lambda']} \rho_{n,\lambda' + s}(Q,E)\tilde\rho_{n,\lambda'-s}(R,E)
\end{align*}
for $E\geq 1$.
Then for all $0\leq \sigma \leq 1$ satisfying
\begin{align*}
 \max\{\lambda, \lambda'\}\leq \sigma&<\lambda+\lambda'+n(1-\delta/\gamma)
 \end{align*}
 and
 \begin{align*}
 \tau&=\mu+\mu'+(\lambda+\lambda'+n(1-\delta/\gamma)-1)_+
\end{align*}
we have for $E\geq 1$
\begin{equation*}
 |\kappa \star_n H_0^{-1}\kappa'|(Q,R,p,E) \lesssim E^{-\tau}\min_{s\in [\sigma-1, 1-\sigma]}  \rho_{n,\sigma_\eps+s}(Q,E)\tilde\rho_{n,\sigma-s}(R,E).
\end{equation*}
\end{lem}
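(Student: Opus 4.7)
My first step is to write out $\kappa H_0^{-1}\star_n \kappa'$ explicitly using formula~\eqref{eq:kappa-star_ell} specialised to $\ell=n=n'$: every $R$-variable of $\kappa$ is contracted with a $Q$-variable of $\kappa'$, leaving no uncontracted bosonic variables from either side, and no momentum/energy shifts survive in the arguments. The sum over pairings $J$ reduces, after a change of integration variables, to $n!$ times
\begin{equation*}
  \int \frac{\kappa(Q,\Xi,p,E)\,\kappa'(\Xi,R,p,E)}{\Omega(p+\textstyle\sum\xi_i)+E+\omega(\Xi)}\,d\Xi,
\end{equation*}
where the denominator encodes the inserted $H_0^{-1}$ via the pull-through formulas~\eqref{eq:pull-gen}. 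Bounding $\Omega+\cdots+\omega \geq E+\omega(\Xi)$ turns the $H_0^{-1}$ factor into a single extra factor of $(E+\omega(\Xi))^{-1}$ in the integrand.

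\textbf{Factoring the $Q,R$-dependence.} I then apply the hypothesis bounds to $\kappa$ and $\kappa'$ with the specific choices $s_\kappa=\sigma-\lambda+s$ and $s_{\kappa'}=\lambda'-\sigma+s$. The assumption $\sigma\geq \max\{\lambda,\lambda'\}$ guarantees that, for $s\in[\sigma-1,1-\sigma]$, these parameters lie inside the admissible intervals $[\lambda-1,1-\lambda]$ and $[\lambda'-1,1-\lambda']$. This factors the desired $Q$-part as $\rho_{n,\sigma+s}(Q,E)$ and $R$-part as $\tilde\rho_{n,\sigma-s}(R,E)$ out of the integral, reducing the claim to the scalar estimate
\begin{equation*}
  J := \int \frac{\tilde\rho_{n,\lambda-s_\kappa}(\Xi,E)\,\rho_{n,\lambda'+s_{\kappa'}}(\Xi,E)}{E+\omega(\Xi)}\,d\Xi \;\lesssim\; E^{-(\lambda+\lambda'+n(1-\delta/\gamma)-1)_+}.
\end{equation*}

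\textbf{Iterated integration and main obstacle.} To control $J$, I would integrate the variables $\xi_n,\xi_{n-1},\dots,\xi_1$ in sequence. At stage $k$, the denominators $(E+\omega(\Xi_j^k))^{-1}$ with $2\leq j\leq k-1$ that are not essential for the current integration are bounded by $(E+\omega(\Xi_j^{k-1}))^{-1}$, removing the $\xi_k$-dependence (a valid upper bound). The three remaining factors of $(E+\omega(\Xi_1^k))$---from the endpoint of $\tilde\rho$ at $j=k$, from the start of $\rho$ at $j=1$, and from the carried-over $H_0^{-1}$-contribution---together with the fractional boundary factor involving $(E+\omega(\xi_k))$ give, by~\eqref{eq:integral-E}, a factor $(E+\omega(\Xi_1^{k-1}))^{-c_k}$, with total exponent $c_k$ strictly exceeding $1$ precisely because $\sigma<\lambda+\lambda'+n(1-\delta/\gamma)$. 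Part of this decay is recycled as the new $H_0^{-1}$-like factor at level $k-1$, and the excess yields a gain of $E^{-(1-\delta/\gamma)}$. After $n$ such steps, the accumulated gain matches $(\lambda+\lambda'+n(1-\delta/\gamma)-1)_+$. The principal difficulty is the bookkeeping of this redistribution: one must verify at every step that the remaining exponents stay within the range allowed by Definition~\ref{def:K_n}, that~\eqref{eq:integral-E} is applicable (in particular the excluded value $s=1+\delta/\gamma$ is avoided), and that the boundary cases---namely $\sigma=\max\{\lambda,\lambda'\}$ (when one of $s_\kappa,s_{\kappa'}$ saturates) and the vanishing of $(\lambda+\lambda'+n(1-\delta/\gamma)-1)_+$---are handled uniformly. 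It is in this last step that the strict upper bound $\sigma<\lambda+\lambda'+n(1-\delta/\gamma)$, as opposed to the non-strict bound available when $\ell<\max\{n,n'\}$, becomes forced.
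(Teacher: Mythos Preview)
Your first step contains a genuine error: the sum over pairings $J$ does \emph{not} reduce to $n!$ copies of the identity-permutation integral. A change of integration variables $\Xi\mapsto\Xi_\pi$ merely shifts the permutation from the argument of $\kappa'$ to that of $\kappa$; it cannot eliminate it, because the kernels in $\cK_n$---and in particular their bounding functions $\rho_{n,\lambda}$, $\tilde\rho_{n,\lambda}$ from~\eqref{eq:rho-def}---are not symmetric under permutations of their boson variables. After applying the hypothesis, the integral one must estimate is
\[
\int\frac{\tilde\rho_{n,\lambda-t}(\Xi,E)\,\rho_{n,\lambda'+t'}(\Xi_J,E)}{E+\omega(\Xi)}\,d\Xi,
\]
and its structure genuinely depends on $J$, through the relative placement of the two fractional-power endpoint factors (the one from $\tilde\rho$ sits on $\xi_1$, the one from $\rho$ on the variable occupying the last slot under $J$).

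The paper therefore keeps $J$ and performs a case distinction. When the two fractional factors load onto the same variable $\xi_1$, one bounds the remaining denominators so that each $\xi_j$, $j\geq 2$, can be integrated out via~\eqref{eq:integral-E}, producing a clean gain $(E+\omega(\xi_1))^{-(1-\delta/\gamma)}$ per variable; the final $\xi_1$-integral then converges precisely because of the strict inequality $\sigma<\lambda+\lambda'+n(1-\delta/\gamma)$, with your parameter choice $t=s+\sigma-\lambda$, $t'=s-\sigma+\lambda'$ (which you do identify correctly). When the fractional factors sit on different variables---which only occurs for $n\geq 2$---a preparatory integration over the variable carrying the $\rho$-endpoint is needed first. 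Your iterated-integration sketch covers at best the identity permutation (one instance of the second case), and the phrase ``the three remaining factors of $(E+\omega(\Xi_1^k))$'' does not match what one actually obtains from the product $\tilde\rho_{n,a}(\Xi,E)\rho_{n,b}(\Xi,E)$; so the bookkeeping you flag as the main difficulty is not carried out, and the case you are missing is not addressed at all.
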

 
 \begin{proof}
  This is the case $n=\ell=n'$ of the previous lemma and we adopt the notation from there.
In the present case we have $i_\nu=\nu$, and $j_1, \dots, j_n$ is just a permutation of $1, \dots, n$. The integral then simplifies to
\begin{align}
 \int \frac{\Big|\kappa\big(Q, \Xi, E)\big) \kappa'\big(\Xi_J, R, E)\Big|}{E+\Omega(\Xi)} \ud \Xi 
 &\lesssim E^{-\mu-\mu'} \rho_{n, \lambda + t}(Q, E) \tilde \rho_{n,\lambda'-t'}(R,E)\notag \\
 & \times \int \frac{\tilde \rho_{n, \lambda-t}(\Xi, E)\rho_{n, \lambda'+t'}(\Xi_J, E)}{E+\omega(\Xi)} \ud \Xi  \label{eq:int ell=n=n'},
\end{align}
where $\Xi_J=(\xi_{j_1}, \dots, \xi_{j_n})$ are the permuted variables.
Hence we only need to prove that the integral~\eqref{eq:int ell=n=n'} is bounded by $E^{-(\lambda+\lambda'+n(1-\delta/\gamma)-1)_+}$ for appropriate choices of $t, t'$.

Let us first consider the case $j_1=n$. Then the integral is bounded by
\begin{align}
 \eqref{eq:int ell=n=n'} &\leq \int \frac{|v(\xi_1)|^2}{(E+\omega(\xi_1))^{2+(\lambda- t+\lambda'+t')/2}} \prod_{j=2}^n \frac{|v(\xi_j)|^2}{(E+\omega(\xi_j))(E+\omega(\Xi_1^j))} \ud \Xi \notag \\
 &\stackrel{\eqref{eq:integral-E}}{\lesssim}  \int  \frac{|v(\xi_1)|^2\ud \xi_1}{(E+\omega(\xi_1))^{2+(\lambda - t+\lambda'+t')/2+(n-1)(1-\delta/\gamma)}}.
\end{align}
For the final integral to be finite, we need
\begin{align}
 2+(\lambda + \lambda'- t+t')/2+(n-1)(1-\delta/\gamma)>1+\delta/\gamma .
\end{align}
Now let $\sigma \leq 1$ as in the statement,
and set for $\sigma-1\leq s\leq 1-\sigma$
\begin{equation}
t=s+\sigma-\lambda \,, \qquad t'=s - \sigma + \lambda'.
\end{equation}
These choices are admissible since $\sigma\geq \lambda, \lambda'$. Because $\sigma < \lambda + \lambda' + n(1-\delta/\gamma)$, we have 
\begin{align}
  \eqref{eq:int ell=n=n'}&\lesssim \int  \frac{|v(\xi)|^2\ud \xi}{(E+\omega(\xi))^{2+(\lambda +\lambda'+(n-1)(1-\delta/\gamma)-\sigma)}}
  \stackrel{\eqref{eq:integral-E}}{\lesssim}  E^{-(\lambda +\lambda'+n(1-\delta/\gamma)-\sigma)}.
\end{align}
As $\tau -\mu-\mu'\leq  (\lambda +\lambda'+n(1-\delta/\gamma)-\sigma)$, this yields
\begin{align}
 |\kappa \star_n H_0^{-1}\kappa'|(Q,R,p,E) \lesssim E^{-\tau} \min_{s\in [\sigma-1, 1-\sigma]} \rho_{n, \sigma+s}(Q,E)\tilde\rho_{n, \sigma-s}(R,E).
\end{align}
This proves the claim for the case $j_1=n$.

The case $j_1\neq n$ arises only for $n\geq 2$. We choose $t,t'$ as before, and we then group the denominator in~\eqref{eq:int ell=n=n'} with the $\xi_{j_1}$-integral. This is then bounded by
\begin{align}
 &\int \frac{|v(\xi_{j_1})|^2\ud \xi_{j_1}}{(E+\omega(\xi_{j_1}))^{(1+2\lambda'+s-\sigma)/2}(E+\omega(\xi_1) + \omega(\xi_{j_1}))^2} \notag \\
 &\stackrel{\eqref{eq:integral-E}}{\lesssim} (E+\omega(\xi_1))^{-(1-\delta/\gamma)-(1+2\lambda'+s-\sigma)/2},
\end{align}
since $2\lambda'+s-\sigma \leq \lambda'+s \leq 1 $.
Treating the integrals over $\xi_\nu$, $\nu=2, \dots, n$, $\nu\neq j_1$, in the same way as before, we are left with the $\xi_1$-integral
\begin{align}
  \eqref{eq:int ell=n=n'}&\lesssim \int  \frac{|v(\xi)|^2\ud \xi}{(E+\omega(\xi))^{2+(\lambda +\lambda'+(n-1)(1-\delta/\gamma)-\sigma)}}.
\end{align}
This proves the claim by the same argument as for $j_1=1$. 
 \end{proof}

 In the proof of Theorem~\ref{thm:theta} we additionally need bounds on $a(v)H_0^{-1}\star_{\ell} \kappa H_0^{-1}\star_{\ell'} a^*(v)$.
 
\begin{lem}\label{lem:a-star}
 Let $n\in \N$, $\kappa\in \cK_n$ and $\ell, \ell'\in \{0,1\}$ with $\ell+\ell'\leq n$.
 Suppose that for some $\mu\geq 0$ and $0\leq \lambda \leq 1$ we have
 \begin{equation*}
  |\kappa(Q,R,p,E)| \lesssim E^{-\mu} \min_{s\in [\lambda-1, 1-\lambda]} \rho_{n,\lambda+s}(Q,E)\tilde\rho_{n,\lambda-s}(R,E).
 \end{equation*}
Then
\begin{equation*}
\kappa_{\ell, \ell'}:= \Big(a(v)H_0^{-1}\star_{\ell} \kappa  \Big) H_0^{-1}\star_{\ell'} a^*(v) \in \cK_{n+1-\ell-\ell'},
\end{equation*}
and for 
\begin{align*}
 \sigma&= \min\{ \lambda + (\ell+\ell')(1-\delta/\gamma),1\} \\
 \tau &= \mu + (\lambda + (\ell+\ell')(1-\delta/\gamma)-1)_+
\end{align*}
we have
\begin{equation*}
 |\kappa_{\ell, \ell'}(Q,R,p,E)|\lesssim E^{-\tau} \min_{s\in [\sigma-1, 1-\sigma]} \rho_{n,\sigma+s}(Q,E)\tilde\rho_{n,\sigma-s}(R,E).
\end{equation*}
\end{lem}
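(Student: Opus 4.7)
The plan is to imitate the proofs of Lemmas~\ref{lem:star ell=0}--\ref{lem:star ell=n}, viewing $a(v)$ and $a^*(v)$ as boundary kernels that each contribute a single factor of $v$ and commute with $\ud\Gamma(k)$, $\ud\Gamma(\omega)$ via the pull-through formula~\eqref{eq:pull-gen}. We treat the four cases $(\ell,\ell')\in\{0,1\}^2$ in turn; in each case the steps are (i) to write $\kappa_{\ell,\ell'}$ explicitly as in~\eqref{eq:kappa-star_ell} but with the argument count adjusted for $a(v)$, $a^*(v)$; (ii) to apply the hypothesis bound on $\kappa$ with an appropriate choice of the free parameter $s\in[\lambda-1,1-\lambda]$; (iii) to estimate any integral over contracted variables using~\eqref{eq:integral-E}, which provides a factor $E^{-(1-\delta/\gamma)}$ per contraction.

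For $\ell=\ell'=0$ there are no contractions and $\kappa_{0,0}$ is just the pointwise product of $v(q_1)v(r_{n+1})$ with $\kappa$ (evaluated at momenta and energies shifted by the omitted boson) and two resolvents. Choosing $s=1-\lambda$ on the $Q$-side and $s=\lambda-1$ on the $R$-side, and using identity~\eqref{eq:rho-id}, the extra $v$-factors promote $\rho_{n,2\lambda-1}(Q_2^{n+1},E)\mapsto\rho_{n+1,\cdot}(Q,E)$, yielding the claim with $\sigma=\lambda$, $\tau=\mu$. For $(\ell,\ell')=(1,0)$ (the case $(0,1)$ is symmetric) a single contraction produces an integral of the form $\int|v(\xi)|^2(\cdots)\,\ud\xi$ in which~\eqref{eq:integral-E} supplies the gain $(E+\omega(\text{boundary}))^{-(1-\delta/\gamma)}$; distributing this into the $\rho$-structure raises $\sigma$ by $1-\delta/\gamma$, and any excess above $1$ contributes to $\tau$ via the $(\cdot)_+$. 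Finally, for $(\ell,\ell')=(1,1)$ there are two contractions: if $n\geq 2$ the two $\xi$-integrations are essentially independent and are handled as in Lemma~\ref{lem:star ell<n}, whereas for $n=1$ all arguments of $\kappa$ are contracted and the reasoning parallels Lemma~\ref{lem:star ell=n}, with the constraint $\lambda+2(1-\delta/\gamma)\leq 1+\lambda\leq 2$ ensuring that~\eqref{eq:integral-E} applies.

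The principal bookkeeping obstacle is ensuring that the boundary factors $v(q_1)$ and $v(r_{n+1})$ combine with the $\rho_{n,\cdot}, \tilde\rho_{n,\cdot}$ bounds of $\kappa$ to produce exactly $\rho_{n+1-\ell-\ell',\sigma+s}$ and $\tilde\rho_{n+1-\ell-\ell',\sigma-s}$ with $s$ ranging over the full interval $[\sigma-1,1-\sigma]$ required by Definition~\ref{def:K_n}. This is achieved by pinning the free parameter in the hypothesis bound on $\kappa$ at one endpoint of its allowed interval so that the boundary $v$-factor precisely supplies the missing slot of the $\rho$-product of length $n+1-\ell-\ell'$, after telescoping energies via~\eqref{eq:rho-id}. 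A secondary subtlety, in the $(1,1)$--$n=1$ case, is to verify that the borderline condition $s+t>1+\delta/\gamma$ of~\eqref{eq:integral-E} is strict throughout; this follows from $\lambda\leq 1$ and the strict inequality $\delta/\gamma<1$, exactly as in the proof of Lemma~\ref{lem:star ell=n}.
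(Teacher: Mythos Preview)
Your plan is essentially the paper's own: case split on $(\ell,\ell')$, write the kernel of $\kappa_{\ell,\ell'}$ explicitly, fix the free parameter $s$ in the hypothesis bound at an endpoint, and estimate each contracted integral via~\eqref{eq:integral-E}. The mechanism you describe for $(0,0)$, $(1,0)/(0,1)$, and $(1,1)$ with $n\geq 2$ matches the paper's proof closely (up to cosmetic differences in where the new $v$-factors are indexed: the paper places them at $q_{n+1}$ and $r_1$ rather than $q_1$ and $r_{n+1}$, which changes which endpoint of $[\lambda-1,1-\lambda]$ one pins $s$ at but nothing else).

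One point to correct: your discussion of the case $(\ell,\ell')=(1,1)$ with $n=1$ is vacuous. The hypothesis of the lemma requires $\ell+\ell'\leq n$, so $(1,1)$ forces $n\geq 2$; indeed the paper's proof begins that case with ``recall that we suppose that $n\geq 2$''. You therefore do not need (and cannot invoke) the Lemma~\ref{lem:star ell=n}-type argument here. A related minor slip: your phrase ``choosing $s=1-\lambda$ on the $Q$-side and $s=\lambda-1$ on the $R$-side'' suggests two independent choices, but there is only one $s$; a single choice (say $s=1-\lambda$) already gives $\rho_{n,1}$ on $Q$ and $\tilde\rho_{n,2\lambda-1}$ on $R$, and the full $s$-range in the conclusion is then recovered by redistributing the two extra resolvent denominators, exactly as the paper does.
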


 \begin{proof}
   The kernel of $\kappa_{0,0}$ is
  \begin{align}
   &\kappa_{0,0}(Q,R,p,E) \notag \\
   &= \frac{v(r_1) v(q_{n+1}) \kappa\Big(Q_1^n,R_1^n, p+r_1+q_{n+1}, E+ \omega(r_1)+\omega(q_{n+1})\Big)}
   {(E+ \omega(Q)+\omega(r_1))(E+\omega(R)+\omega(q_{n+1}))}.
   \label{eq:kappa_00-def}
  \end{align}
Using the hypothesis, it thus satisfies
\begin{align}
&|\kappa_{0,0}(Q,R,p,E)| \notag \\
%
%
&\lesssim \frac{|v(r_1)| |v(q_{n+1})| \rho_{n, 1}(Q_1^n, E+\omega(q_{n+1}))\tilde \rho_{n, 2\lambda-1}(R_2^{n+1}, E+ \omega(r_1)+\omega(q_{n+1})) }{(E+ \omega(q_{n+1})+\omega(r_1))^{1+\mu+\lambda}(E+\omega(R))^{1-\lambda}} \notag \\
&= \frac{ \rho_{n+1, -1}(Q_1, E))  \tilde \rho_{n+1, -1}(R, E) }{(E+ \omega(q_{n+1})+\omega(r_1))^{1+\lambda+\mu}} \notag \\
&\leq E^{-\mu} \rho_{n+1, \lambda + s}(Q, E))  \tilde \rho_{n+1, \lambda-s}(R, E)
 \end{align}
for any $-1-\lambda\leq s \leq 1+\lambda$. 

For the case $\ell+\ell'=1$, we give the details only for $\kappa_{0,1}$ (the proof for $\kappa_{1,0}$ is similar, but the term corresponding to $i=1$ below does not occur).
The kernel of the first parenthesis, where there is no contraction, satisfies
\begin{align}
&\left| a(v)H_0^{-1} \star_0 \kappa (Q,R,p,E) \right| \notag\\
 &\lesssim E^{-\mu} \frac{|v(r_1)| \rho_{n, \lambda + s}(Q, E) \tilde\rho_{n, \lambda - s}(R_2^{n+1}, E+\omega(r_1))}{E+\omega(Q)+\omega(r_1)}.
\end{align}
Distinguishing the contraction with the first variable from the remaining ones, where we take $s=\lambda-1$, we obtain
\begin{align}
 &|\kappa_{0,1} (Q,R,p,E)| \notag \\
 &=\left| \sum_{i=1}^{n+1} \int  \frac{\big(a(v)H_0^{-1} \star_0 \kappa\big)(Q,S,p,E) v(\xi)}{E+\omega(\xi) + \omega(R)}\bigg\vert\substack{s_i=\xi \\ S_{\{i\}^c}=R}\, \ud \xi \right| \notag \\
 &\lesssim  E^{-\mu} \rho_{n, \lambda+s}(Q, E)\tilde \rho_{n, \lambda-s}(R,E) \int \frac{|v(\xi)|^2 \ud \xi}{(E+\omega(\xi) + \omega(Q))(E+\omega(\xi) + \omega(R))}  \notag \\
 &\qquad  + \frac{| v(r_1)|   \tilde \rho_{n-1, 1}(R_2^n, E+\omega(r_1)) \rho_{n, 2\lambda-1}(Q, E)}{E+\omega(Q)+\omega(r_1)} \notag \\ 
 &\qquad\qquad  \times \sum_{i=2}^{n+1} \int \frac{|v(\xi)|^2 \ud \xi}{(E+\omega(\xi) + \omega(R_1^{i-1}))(E+\omega(\xi) + \omega(R))} \notag \\
 &\stackrel{\eqref{eq:integral-E}}{\lesssim} E^{-\mu} \frac{\rho_{n, \lambda+s}(Q, E)\tilde \rho_{n,\lambda -s}(R,E)}{ (E+\omega(r_1)+\omega(q_{n}))^{1-\delta/\gamma}} \notag \\
 &\qquad + nE^{-\mu}  \frac{\tilde \rho_{n, -1 +2 (1-\delta/\gamma)}(R, E) \rho_{n, 2\lambda - 1}(Q, E)}{E+\omega(r_1)+\omega(q_n)} \notag \\
 &\lesssim E^{-\mu - (\lambda -\delta/\gamma)_+}\rho_{n, \sigma+s}(Q, E)\tilde \rho_{n, \sigma-s}(R,E)
\end{align}
for $\sigma-1\leq s \leq 1-\sigma$ (in fact, the range can be chosen larger here).
This shows the bound as claimed.

For $\ell=\ell'=1$ recall that we suppose that $n\geq 2$. The kernel we need to bound is
\begin{align}
 &|\kappa_{1,1}(Q,R,p,E)| \notag \\
 &= \left|\sum_{i, j=1}^n \int\frac{v(\xi_1) }{E+\omega(\xi_1)+ \omega(Q)} \frac{v(\xi_2)}{E+\omega(\xi_2) + \omega(R)}  \kappa(U,S,p,E)\bigg\vert\substack{u_j=\xi_1 \\ U_{\{j\}^c}=Q \\ s_i=\xi_2 \\ S_{\{i\}^c}=R }
 \, \ud \xi_1 \ud \xi_2\right| \notag \\
 &\lesssim  E^{-\mu}\bigg(\sum_{j=1}^n \int \frac{|v(\xi_1)| \rho_{n, \lambda+s}(U, E) }{E+\omega(\xi_1) + \omega(q_{n-1})}
 \bigg\vert\substack{u_j=\xi_1 \\ U_{\{j\}^c}=Q }\,\ud \xi_1\bigg) \label{eq:tau_11 int-Q} \\
&\qquad \times \bigg(\sum_{i=1}^n \int \frac{|v(\xi_2)| \tilde\rho_{n, \lambda-s}(S, E) }{E+\omega(\xi_2) + \omega(r_1)}
 \bigg\vert\substack{s_i=\xi_2 \\ S_{\{i\}^c}=R }
 \, \ud \xi_2\bigg) \label{eq:tau_11 int-R}
\end{align}
To bound the integral~\eqref{eq:tau_11 int-R}, we expand $\tilde \rho$ as a product of fractions with numerator $|v(s_\nu)|$ using its definition.
For $i\geq 2$, we drop $\omega(\xi_2)$ from the denominators of all factors except the one of $v(\xi_2)=v(s_i)$, which gives the bound
\begin{align}
 \int \frac{|v(\xi)| \tilde\rho_{n, \lambda-s}(S, E) }{E+\omega(\xi) + \omega(r_1)}
 \bigg\vert\substack{s_i=\xi \\ S_{\{i\}^c}=R }
 \, \ud \xi 
 &\leq \tilde\rho_{n-1, \lambda-s}(R, E) \int \frac{|v(\xi)|^2}{(E+\omega(\xi) + \omega(r_1))^2} \notag \\
 &\stackrel{\eqref{eq:integral-E}}{\lesssim} \tilde\rho_{n-1, \lambda-s+2(1-\delta/\gamma)}(R, E).
\end{align}
For $i=1$ we do not drop $\omega(\xi)$ in the factor of $v(r_1)=v(s_2)$. 
This leads to (keeping in mind that $1+\lambda-s \leq 2$)
\begin{align}
 &\int \frac{|v(\xi)| \tilde\rho_{n, \lambda-s}((\xi, R), E) }{E+\omega(\xi) + \omega(r_1)}
  \, \ud \xi \notag\\
 & \leq \tilde\rho_{n-2,1}(R_2^{n-1}, E+\omega(r_1))  \int \frac{|v(r_1)| |v(\xi)|^2 \ud \xi}{(E+\omega(\xi))^{(1+\lambda-s)/2}(E+\omega(\xi) + \omega(r_1))^2} \notag \\
 &\stackrel{\eqref{eq:integral-E}}{\lesssim} \tilde \rho_{n-1, \lambda-s+2(1-\delta/\gamma)}(R,E)\leq E^{-(\lambda-\delta/\gamma)_+} \tilde \rho_{n-1, \sigma-s}(R,E).
\end{align}
Arguing in the same way for the other integral~\eqref{eq:tau_11 int-Q} proves the claim.
\end{proof}

\section{Operator bounds}

We first give a well known Lemma on the boundedness of $a(v)\ud \Gamma(\omega)^{-s}$. 

 \begin{lem}\label{lem:a-omega}
 For $s>\tfrac12(1+\delta/\gamma)$
 \begin{equation*}
  \| a(v) d\Gamma(\omega)^{-s} \| \leq \|v\omega^{-s}\|_{L^2}.
 \end{equation*}
\end{lem}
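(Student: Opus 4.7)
The plan is to bound the adjoint $A^* := \ud\Gamma(\omega)^{-s}a^*(v)$, since $\|A\|=\|A^*\|$ and $A^*$ has a cleanly explicit action on $\cF\uppar{n}$. Writing $\omega(K):=\sum_{i=1}^n\omega(k_i)$, one has
\begin{equation*}
 (A^*\Psi)\uppar{n}(K) = \frac{1}{\sqrt{n}}\,\omega(K)^{-s}\sum_{j=1}^n v(k_j)\,\Psi\uppar{n-1}(K\setminus\{k_j\}).
\end{equation*}
The hypothesis $s>\tfrac12(1+\delta/\gamma)$ is precisely what guarantees that $\|v\omega^{-s}\|_{L^2}<\infty$, by the same integrability computation as in the paragraph following Assumption~\ref{ass:main}; in particular it forces $2s\geq 1$, which will be the only structural property of $s$ actually needed.

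The key step is a Cauchy--Schwarz estimate in which $\omega(K)^{-s}$ is factored as $(\omega(k_j)/\omega(K))^{s}\cdot\omega(k_j)^{-s}$ and the first piece is absorbed into the Cauchy--Schwarz weight. The elementary power-mean inequality $\sum_j\omega(k_j)^{2s}\leq (\sum_j\omega(k_j))^{2s}=\omega(K)^{2s}$, valid for $2s\geq 1$, then gives $\sum_j(\omega(k_j)/\omega(K))^{2s}\leq 1$, so that
\begin{equation*}
 |(A^*\Psi)\uppar{n}(K)|^2\;\leq\;\frac{1}{n}\sum_{j=1}^n \frac{v(k_j)^2}{\omega(k_j)^{2s}}\,|\Psi\uppar{n-1}(K\setminus\{k_j\})|^2.
\end{equation*}
Integrating in $K\in\R^{dn}$ and exploiting the symmetry of $|\Psi\uppar{n-1}|^2$ evaluates $\int\sum_j f(k_j)|\Psi\uppar{n-1}(K\setminus\{k_j\})|^2\ud K = n\|f\|_{L^1}\|\Psi\uppar{n-1}\|^2$, with $f=v^2\omega^{-2s}$, which exactly cancels the prefactor $1/n$. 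Summing over $n$ telescopes to $\|A^*\Psi\|^2\leq \|v\omega^{-s}\|_{L^2}^2\,\|\Psi\|^2$, which is the desired bound.

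The approach is essentially routine; the only nontrivial ingredient is the power-mean inequality, which demands $2s\geq 1$ and is guaranteed by the stated hypothesis. I do not expect any genuine obstacle, as the only delicate point is choosing the Cauchy--Schwarz weights so that the $1/\sqrt{n}$ from $a^*(v)$ matches the combinatorial factor produced by symmetrisation of $\Psi\uppar{n-1}$.
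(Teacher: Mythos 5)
Your proof is correct and is essentially the same as the paper's: both hinge on a weighted Cauchy--Schwarz step followed by the power-mean inequality $\sum_j\omega(k_j)^{2s}\leq\bigl(\sum_j\omega(k_j)\bigr)^{2s}$ (valid since $2s\geq 1$), the only cosmetic difference being that you estimate the adjoint $\ud\Gamma(\omega)^{-s}a^*(v)$ sector-by-sector (Cauchy--Schwarz on the finite sum over $j$) while the paper estimates $a(v)\ud\Gamma(\omega)^{-s}$ directly (Cauchy--Schwarz on the $\xi$-integral, then symmetrisation to turn the factor $(n+1)\,\omega(k_{n+1})^{2s}$ into $\sum_\ell\omega(k_\ell)^{2s}$). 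The bookkeeping of the $1/\sqrt{n}$ against the $n$ copies produced by symmetrisation matches the paper's handling of the $(n+1)$ prefactor, so the two computations are identical up to taking adjoints.
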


\begin{proof}
 Let $n\in \N_0$ and $\Psi \in \mathcal{F}^{(n+1)}$. Then, using Cauchy-Schwarz inequality, the symmetry of $\Psi$, and the fact that $2s\geq1$, we have
 \begin{align}
 &\|a(v) d\Gamma(\omega)^{-s} \Psi \|^2_{\mathcal{F}^{(n)}} \notag \\
 &=(n+1) \int_{\R^{dn}}  \bigg| \int_{\R^d} \frac{v(\xi)}{\omega^s(\xi)} \frac{\omega^s(\xi)\Psi(K, \xi)}{(\omega(\xi)+\sum_{j=1}^{n} \omega(k_j))^{s}} d \xi \bigg|^2 d K\notag \\
 &\leq \|v\omega^{-s}\|^2 (n+1) \int_{\R^{(n+1)d}} \omega^{2s}(k_{n+1}) \frac{|\Psi(K)|^2}{\big(\sum_{j=1}^{n+1} \omega(k_j)\big)^{2s}} dK  \notag\\
 &= \|v\omega^{-s}\|^2 \int_{\R^{(n+1)d}} \frac{\sum_{\ell=1}^{n+1} \omega^{2s}(k_{\ell})}{(\sum_{j=1}^{n+1} \omega(k_j))^{2s}} |\Psi(K)|^2 dK  \notag \\
 %
 %
 &\leq \|v\omega^{-s} \|^2_{L^2}\|\Psi\|^2_{\mathcal{F}^{(n+1)}}.
 \end{align}
 This proves the claim.
\end{proof}

The following Lemmas provide bounds on elements of $\cK_{n, \lambda}$, as operators respectively quadratic forms. Similar bounds (with $n=1$) appear in the literature on contact interactions~\cite{DeFiTe1994,MoSe17, GrLi18}.

\begin{lem}\label{lem:K_n-op}
 Let $\kappa \in \cK_n$ with kernel satisfying
 \begin{align*}
&|\kappa(Q,R,p,E)|  
\lesssim \rho_{n,1}(Q,E)\tilde\rho_{n,2\lambda-1}(R,E),
 \end{align*}
 for some $0\leq \lambda\leq 1$.
 Then for any non-negative $s>1-\lambda-\tfrac12 n(1-\delta/\gamma)$, the formula~\eqref{eq:kernel-a*a} defines a bounded operator
 \begin{equation*}
  \kappa:D\Big(\ud\Gamma(\omega)^{s}\Big)\to \cF. 
 \end{equation*}
 \end{lem}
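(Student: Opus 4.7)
The approach is to apply the pointwise factorised bound on $\kappa$ directly, reduce the action on the $k$-boson sector to a sum of integrals indexed by $n$-element subsets of $\{1,\dots,k\}$, and then use Cauchy--Schwarz in the annihilation variables $R$ with weight $(E_{J^c}+\omega(R))^{-s}$ to extract the $(\ud\Gamma(\omega)+E_0)^s$-weighted norm of $\Psi$. The main quantitative input is an iterated application of~\eqref{eq:integral-E} to the integrals defined by $\tilde\rho_{n,2\lambda-1}^2$ and $\rho_{n,1}^2$.

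First I would write the action of $\kappa$ on $\cF^{(k)}$ explicitly: for symmetric $\Psi^{(k)}$ and $k\geq n$,
\[
(\kappa\Psi)^{(k)}(K)=n!\sum_{J\subset\{1,\dots,k\},\,|J|=n}\int_{\R^{dn}}\kappa\bigl(K_J,R,p_{J^c},E_{J^c}\bigr)\Psi^{(k)}(K_{J^c},R)\,\ud R,
\]
with $p_{J^c}=\sum_{j\notin J}k_j-P$ and $E_{J^c}=\omega(K_{J^c})+E_0$. Using the hypothesis $|\kappa|\lesssim\rho_{n,1}(K_J,E_{J^c})\tilde\rho_{n,2\lambda-1}(R,E_{J^c})$ and Cauchy--Schwarz in $R$ with weight $(E_{J^c}+\omega(R))^{-s}$, the inner integral is bounded by
\[
\bigl(I(E_{J^c})\bigr)^{1/2}\Bigl(\int(E_{J^c}+\omega(R))^{2s}|\Psi^{(k)}(K_{J^c},R)|^2\,\ud R\Bigr)^{1/2},\qquad I(E):=\int\tilde\rho_{n,2\lambda-1}(R,E)^2(E+\omega(R))^{-2s}\,\ud R.
\]

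Next I would bound $I(E)$ by iterating~\eqref{eq:integral-E}, starting from the outermost variable $r_n$ (where the combined exponent on $(E+\omega(R_1^n))$ equals $2+2s$) and working inward. Each step produces a factor $(E+\omega(R_1^{j-1}))^{-(1-\delta/\gamma)}$ before being combined with the next $|v(r_{j-1})|^2/(E+\omega(R_1^{j-1}))^2$, so that after $n-1$ integrations the exponent on $(E+\omega(r_1))$ is $(n-1)(1-\delta/\gamma)+2s$. The final $r_1$-integration, incorporating the exceptional exponent $2\lambda$, yields
\[
I(E)\lesssim E^{-(2\lambda+n(1-\delta/\gamma)+2s-2)},
\]
with the exponent strictly positive precisely under the assumption $s>1-\lambda-\tfrac{n}{2}(1-\delta/\gamma)$. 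In particular $I(E_{J^c})\leq I(E_0)$ is uniformly bounded, and the same iteration yields $\int\rho_{n,1}(K_J,E)^2\,\ud K_J\lesssim E^{-n(1-\delta/\gamma)}$ uniformly in $E\geq E_0$. Writing $\tilde K=(K_{J^c},R)$ one has $E_{J^c}+\omega(R)=E_0+\omega(\tilde K)$, so integration over $K_{J^c}\times R$ converts the second Cauchy--Schwarz factor into $\|(\ud\Gamma(\omega)+E_0)^s\Psi^{(k)}\|^2$.

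The main obstacle is the combinatorial factor arising from the sum over the $\binom{k}{n}$ subsets $J\subset\{1,\dots,k\}$: a crude Cauchy--Schwarz on this sum would cost a factor $\binom{k}{n}$, and this is not absorbed by a single power of $\ud\Gamma(\omega)^s$ for the values of $s$ allowed by the hypothesis. The resolution is to exploit the permutation symmetry of $\Psi^{(k)}$ in combination with the decay $E_{J^c}^{-\mu}$, $\mu=2\lambda+2s+2n(1-\delta/\gamma)-2$, produced jointly by the two iterated integrations: using $\omega\geq c_b>0$ from Assumption~\ref{ass:main}, $E_{J^c}\gtrsim(k-n)$, and this supplies exactly the quantitative control over the ``number of bosons outside $J$'' needed to reassemble the $J$-sum, together with the global weight $(\omega(K)+E_0)^{2s}$, into the bound $\|\kappa\Psi\|_\cF\lesssim\|(\ud\Gamma(\omega)+E_0)^s\Psi\|_\cF$ uniformly in $k$.
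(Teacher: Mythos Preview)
Your reduction to the $k$-boson sector and the iterated use of~\eqref{eq:integral-E} to evaluate $I(E)$ and $\int \rho_{n,1}^2\,\ud K_J$ are correct, and yield the stated exponents. The gap is in the final paragraph, where you claim that the decay $E_{J^c}^{-\mu}$ with $\mu=2\lambda+2s+2n(1-\delta/\gamma)-2$ suffices to control the $J$-sum. It does not. After Cauchy--Schwarz in the $J$-sum, integrating out $K_J$, and using symmetry of $\Psi^{(k)}$, you obtain
\[
\|(\kappa\Psi)^{(k)}\|^2 \;\lesssim\; (n!)^2\binom{k}{n}^{\!2}\int E_{J_0^c}^{-\mu}\,(E_0+\omega(\tilde K))^{2s}\,|\Psi^{(k)}(\tilde K)|^2\,\ud\tilde K.
\]
Bounding $E_{J_0^c}\gtrsim c_b(k-n)+E_0$ as you suggest gives only $(k-n)^{-\mu}$, whereas $\binom{k}{n}^2\sim k^{2n}$. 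For the product to be bounded uniformly in $k$ you need $\mu\geq 2n$, i.e.\ $s\geq 1-\lambda+n\delta/\gamma$, which is strictly stronger than the hypothesis $s>1-\lambda-\tfrac{n}{2}(1-\delta/\gamma)$ (indeed, for the latter $\mu$ can be taken arbitrarily small). No refinement of the pointwise bound $E_{J^c}^{-\mu}\lesssim(k-n)^{-\mu}$ rescues this: the quantity $\sum_{|J|=n} E_{J^c}^{-\mu/2}$ genuinely grows like $k^{\,n-\mu/2}$ on configurations with all $\omega(k_j)$ comparable.

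The paper avoids this difficulty by passing to the quadratic form $\langle\Phi,\kappa\Psi\rangle$ and applying a Schur-test splitting with weight $h(q)=\omega(q)^{t}/|v(q)|^2$, $t=1+\delta/\gamma+\eps/n$. The decisive step is the iterated use of the \emph{operator} inequality
\[
\int \omega(r)^{t}\,a_r^*a_r\,\ud r \;=\; \ud\Gamma(\omega^{t}) \;\leq\; \ud\Gamma(\omega)^{t}\qquad(t\geq 1),
\]
together with the pull-through formula. This inequality is precisely what absorbs the combinatorial factor at each of the $n$ steps (it is the operator analogue of $\sum_j x_j^{t}\leq(\sum_j x_j)^{t}$), and it is what your approach is missing. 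If you want to repair your argument along direct lines, you must replace the crude bound on $E_{J^c}$ by a mechanism that converts the sum over $J$ into a power of $\ud\Gamma(\omega)$; the Schur weight $h$ in the paper's proof is one concrete way to do this.
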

 
 \begin{proof}
  Let $\Phi, \Psi \in \cF$ be finite linear combinations of compactly supported functions in $L^2(\R^{dn})$ and note that the set of such elements is dense in $\cF$. We have for any function $h$ on $\R^d$
 \begin{align}
  &|\langle \Phi, \kappa \Psi \rangle| \notag\\
  &\leq \int\limits_{\R^{dn}\times \R^{dn}}  \left|\left\langle \Big(\prod_{i=1}^n a_{q_i}\Big) \Phi , \kappa(Q,R,\ud \Gamma(k)-P, \ud \Gamma(\omega)+E_0) \Big(\prod_{i=1}^n a_{r_i}\Big) \Psi\right\rangle\right|\ud Q \ud R \notag\\
  &\lesssim \int\limits_{\R^{dn}\times \R^{dn}}
  \left\| \rho_{n,1}(Q, \ud \Gamma(\omega))\Big(\prod_{i=1}^n a_{q_i}\Big) \Phi \right\|
  \left\| \tilde\rho_{n,2\lambda-1}(R, \ud \Gamma(\omega))\Big(\prod_{i=1}^n a_{r_i}\Big) \Psi \right\|
  \ud Q \ud R \notag \\
  &\leq \Bigg(\int\limits \Big(\prod_{i=1}^n \frac{h(q_i)}{h(r_i)}\Big)
  \left\langle \Big(\prod_{i=1}^n a_{q_i}\Big) \Phi, \rho_{n,1}(Q, \ud \Gamma(\omega))^2 \Big(\prod_{i=1}^n a_{q_i}\Big) \Phi \right\rangle \ud R \ud Q  \Bigg)^{1/2} \label{eq:K-norm bound phi} \\
  & \quad\times \Bigg(\int \Big(\prod_{i=1}^n \frac{h(r_i)}{h(q_i)}\Big)
  \left\langle \Big(\prod_{i=1}^n a_{r_i}\Big) \Psi, \tilde\rho_{n,2\lambda-1}(R, \ud \Gamma(\omega))^2 \Big(\prod_{i=1}^n a_{r_i}\Big) \Psi \right\rangle \ud Q \ud R  \Bigg)^{1/2}.\label{eq:K-norm bound psi}
\end{align}
We choose $h(q)=\omega(q)^t/|v(q)|^2$ with $t=1+\delta/\gamma + \eps/n$, where $\eps>0$ is such that $t\leq 2$.
Then $\int \frac{\ud q}{h(q)}<\infty$ and, using the pull-through formula~\eqref{eq:pull-gen}, 
\begin{align}
 \eqref{eq:K-norm bound psi}
 &\lesssim\left\langle  \Psi,\int\limits_{\R^{dn}}\Big(\prod_{i=1}^n h(r_i) a_{r_i}^*\Big) \tilde\rho_{n,2\lambda-1}(R, \ud \Gamma(\omega))^2 \Big(\prod_{i=1}^n a_{r_i}\Big) \Psi \ud R \right\rangle^{1/2} \notag \\
 &= \Bigg\langle  \Psi,\int\limits_{\R^{dn}}\Big(\prod_{i=1}^n a_{r_i}^*\Big)
 \frac{\omega(r_1)^t}{(\ud \Gamma(\omega) +\omega(r_1))^{2\lambda}}\notag \\
 &\qquad \times
 \bigg(\prod_{j=2}^{n}\frac{\omega(r_j)^t}{(\ud \Gamma(\omega) + \omega(R_1^j))^2} \bigg) \Big(\prod_{i=1}^n a_{r_i}\Big) \Psi \ud Q \Bigg\rangle^{1/2} \notag \\
 &= \Bigg\langle  \Psi,\int\limits_{\R^{dn}}\Big(\prod_{i=2}^{n} a_{r_i}^*\Big)
 \omega(r_1)^t a^*_{r_1} a_{r_1}\ud\Gamma(\omega)^{-2\lambda}
 \notag \\
 &\qquad \times \bigg(\prod_{j=2}^{n}\frac{\omega(r_j)^t}{(\ud \Gamma(\omega) + \omega(R_2^j))^2} \bigg) \Big(\prod_{i=2}^{n} a_{r_i}\Big) \Psi \ud Q \Bigg\rangle^{1/2} \notag .
 \end{align}

 Now since $t\geq 1$,
\begin{align}
 \int_{\R^d}\omega(r_i)^t a^*_{r_1} a_{r_1}\ud\Gamma(\omega)^{-2\lambda} \ud q_n = \ud \Gamma(\omega^{t})\ud\Gamma(\omega)^{-2\lambda} \leq \ud\Gamma(\omega)^{t-2\lambda}.
\end{align}
Assume first that $t\geq 2\lambda$. Then $\ud\Gamma(\omega)^{t-2\lambda} \leq (\ud\Gamma(\omega)+ \omega(r_2))^{t-2\lambda}$, and we can iterate this argument to obtain 
\begin{align}
 &\int\limits_{\R^{d\nu}}\Big(\prod_{i=1}^{\nu} a_{r_i}^*\Big)\frac{\omega(r_1)^t}{(\ud \Gamma(\omega) +\omega(r_1))^{2\lambda}}
 \left(\prod_{j=2}^{n}\frac{\omega(r_j)^t}{(\ud \Gamma(\omega) + \omega(R_1^j))^2} \right) \Big(\prod_{i=1}^\nu a_{r_i}\Big) \notag \\
 &\leq (\ud\Gamma(\omega)+\omega(r_{\nu+1}))^{t\nu-2(\nu-1)-2\lambda}\left(\prod_{j=\nu+1}^{n}\frac{\omega(r_j)^t}{(\ud \Gamma(\omega) + \omega(R_{\nu+1}^j))^2} \right),
\end{align}
as long as 
\begin{equation}
 t\nu-2(\nu-1)-2\lambda = \nu(\delta/\gamma-1)+2-2\lambda + \frac{\eps \nu}{n}\geq 0.
\end{equation}
Now if $\lambda+\tfrac12 n(1-\delta/\gamma)\leq 1$ this holds true up to $\nu=n$ and we obtain
\begin{equation}
 \eqref{eq:K-norm bound psi} \lesssim \left\langle \Psi, \ud\Gamma(\omega)^{2-2\lambda+n(\delta/\gamma-1)+\eps)} \Psi\right\rangle^{1/2} = \|\ud \Gamma(\omega)^{1-\lambda-n(1-\delta/\gamma)/2 +\eps/2} \Psi \|.
\end{equation}
If $\lambda+\tfrac12n(1-\delta/\gamma)> 1$, let $\nu_0$ be the smallest $\nu \geq 1$ (which exists for small enough $\eps$) such that $t\nu-2\nu+2-\lambda +\eps \nu/n\leq 0$. Then we proceed as before, but bound the negative power of $\ud\Gamma(\omega)$ by a constant, which gives
\begin{align}
&\int\limits_{\R^{d\nu_0}}\Big(\prod_{i=1}^{\nu_0} a_{r_i}^*\Big)\frac{\omega(r_1)^t}{(\ud \Gamma(\omega) +\omega(r_1))^{2\lambda}}
 \left(\prod_{j=2}^{n}\frac{\omega(r_j)^t}{(\ud \Gamma(\omega) + \omega(R_1^j))^2} \right) \Big(\prod_{i=1}^{\nu_0} a_{r_i}\Big) \notag \\
 &\lesssim \prod_{j=\nu_0+1}^{n}\frac{\omega(r_j)^t}{(\ud \Gamma(\omega) + \omega(R_{\nu_0+1}^j))^2} ,
\end{align}
These remaining factors lead to a bounded operator by the same reasoning, because $t\leq 2$, so we obtain in this case
\begin{equation}
 \eqref{eq:K-norm bound psi} \lesssim \|\Psi\|.
\end{equation}
By the same argument, up to renaming of $R,Q$, we also have
\begin{equation}
 \eqref{eq:K-norm bound phi} \lesssim \|\Phi\|,
\end{equation}
and thus
\begin{align}
 &|\langle \Phi, \kappa \Psi \rangle| \lesssim \|\Phi\| \|\ud \Gamma(\omega)^{(1-\lambda-n(1-\delta/\gamma)/2 +\eps/2)_+} \Psi \|,
\end{align}
which proves the claim. 
 \end{proof}

\begin{lem}\label{lem:K_n-form}
 Let $\kappa\in \cK_n$ with kernel satisfying
 \begin{align*}
&|\kappa(Q,R,p,E)|  
\lesssim \rho_{n,\lambda}(Q,E)\tilde\rho_{n,\lambda}(R,E),
 \end{align*}
 for some $0\leq \lambda\leq 1$.
 Then for any non-negative $s>1-\lambda-n(1-\delta/\gamma)$, $\kappa$ defines a quadratic form on $D\Big(\ud\Gamma(\omega)^{s/2}\Big)$ satisfying
 \begin{equation*}
 |\langle \Phi, \kappa \Psi \rangle | \lesssim \| \ud \Gamma(\omega)^{s}\Phi \| \| \ud \Gamma(\omega)^{s}\Psi \|. 
 \end{equation*}
In particular, if $\lambda+n(1-\delta/\gamma)>1$ then $\kappa$ defines a bounded operator on $\cF$.
 \end{lem}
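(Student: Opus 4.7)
The plan is to mirror the structure of the proof of Lemma~\ref{lem:K_n-op}, but apply Cauchy--Schwarz so that the weight is distributed symmetrically between $\Phi$ and $\Psi$. First I would reduce to $\Phi,\Psi$ in a dense subspace (finite linear combinations of compactly supported elements of $L^2(\R^{dk})$) and expand $\langle \Phi,\kappa\Psi\rangle$ as an iterated integral over $Q,R$ of operator-valued inner products involving $\prod a_{q_i}\Phi$, $\prod a_{r_i}\Psi$, and the kernel of $\kappa$, mimicking the opening lines of the proof of Lemma~\ref{lem:K_n-op}.

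Next I would apply Cauchy--Schwarz with weight $\prod_{i} h(q_i)/h(r_i)$, choosing $h(q)=\omega(q)^t/|v(q)|^2$ with $t=1+\delta/\gamma+\eps/n$, where $\eps>0$ is small enough that $t\le 2$ and large enough that $\int dq/h(q)<\infty$. This yields
\begin{equation*}
|\langle\Phi,\kappa\Psi\rangle|\lesssim\Bigl(\!\int \!dQ\,h(Q)\,\|\rho_{n,\lambda}(Q,\ud\Gamma(\omega))\!\prod\! a_{q_i}\Phi\|^2\Bigr)^{\!1/2}\!\Bigl(\!\int \!dR\,h(R)\,\|\tilde\rho_{n,\lambda}(R,\ud\Gamma(\omega))\!\prod\! a_{r_i}\Psi\|^2\Bigr)^{\!1/2}.
\end{equation*}
In contrast with Lemma~\ref{lem:K_n-op}, the two factors are now structurally identical, so it suffices to bound one of them.

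For each of these factors I would proceed as in Lemma~\ref{lem:K_n-op}: use the pull-through formula to rewrite it as the expectation of a positive operator, and then iteratively integrate out the variables, using at each step the operator inequality $\ud\Gamma(\omega^t)\le \ud\Gamma(\omega)^t$ (valid for $t\ge 1$ by convexity). The only new feature is the exponent tracking. Because $\tilde\rho_{n,\lambda}^2$ carries the softer first factor $(E+\omega(r_1))^{-(1+\lambda)}$ instead of $(E+\omega(r_1))^{-2\lambda}$, the exponent produced after $\nu$ integrations is $t\nu-2(\nu-1)-(1+\lambda)$, which at $\nu=n$ equals $-n(1-\delta/\gamma)+1-\lambda+\eps$. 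I would then split into two cases, exactly as in Lemma~\ref{lem:K_n-op}: if $\lambda+n(1-\delta/\gamma)\le 1$ this exponent stays non-negative throughout the iteration and the bound is completed at $\nu=n$; otherwise I would stop at the first $\nu$ where the exponent becomes negative and bound the surviving negative power of $\ud\Gamma(\omega)$ by a constant, letting the leftover $\omega$-factors be estimated as in the second half of the proof of Lemma~\ref{lem:K_n-op}.

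Combining the two factors through Cauchy--Schwarz, and noting that the $\eps/2$ loss may be absorbed into any strict inequality $s>1-\lambda-n(1-\delta/\gamma)$, gives the stated quadratic form bound; the ``in particular'' assertion follows by choosing $s=0$ when $\lambda+n(1-\delta/\gamma)>1$. The only subtle part of the argument is the careful bookkeeping of the exponent $t\nu-2(\nu-1)-(1+\lambda)$ through the iteration (especially which intermediate powers of $\ud\Gamma(\omega)+\omega(r_{\nu+1})$ can be bounded by $(\ud\Gamma(\omega)+\omega(r_{\nu+1}))^{t-(\text{something})}$); once this pattern is set up as in Lemma~\ref{lem:K_n-op}, the rest is routine.
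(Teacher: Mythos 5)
Your proposal is correct and follows essentially the same route as the paper: the paper's proof is just a one-paragraph adaptation of Lemma~\ref{lem:K_n-op}, applying the identical weighted Cauchy--Schwarz with $h(q)=\omega(q)^t/|v(q)|^2$ and then invoking the same iterative $\ud\Gamma(\omega^t)\leq\ud\Gamma(\omega)^t$ argument on each factor, observing that the symmetric bound $\rho_{n,\lambda}\tilde\rho_{n,\lambda}$ now forces the exponent $(1+\lambda)$ (rather than $2\lambda$) in the first resolvent on each side and hence yields the exponent $t\nu-2(\nu-1)-(1+\lambda)$ after $\nu$ contractions. Your exponent bookkeeping, the case split on the sign of $1-\lambda-n(1-\delta/\gamma)$, and the derivation of the boundedness statement in the case $\lambda+n(1-\delta/\gamma)>1$ all match the paper.
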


\begin{proof}
As in the proof of Lemma~\ref{lem:K_n-op}, we take $\Phi, \Psi$ as finite combinations of compactly supported functions and obtain 
 \begin{align}
  &|\langle \Phi, \kappa \Psi \rangle|^2 \\
  &\leq \int\limits_{\R^{dn}\times \R^{dn}} \Big(\prod_{i=1}^n \frac{h(q_i)}{h(r_i)}\Big)
  \left\langle \Big(\prod_{i=1}^n a_{q_i}\Big) \Phi, \rho_{n,\lambda}(Q, \ud \Gamma(\omega))^2 \Big(\prod_{i=1}^n a_{q_i}\Big) \Phi \right\rangle \ud R \ud Q   \notag \\
  &\qquad \times \int\limits_{\R^{dn}\times \R^{dn}} \Big(\prod_{i=1}^n \frac{h(r_i)}{h(q_i)}\Big)
  \left\langle \Big(\prod_{i=1}^n a_{r_i}\Big) \Psi, \tilde\rho_{n,\lambda}(R, \ud \Gamma(\omega))^2 \Big(\prod_{i=1}^n a_{r_i}\Big) \Psi \right\rangle \ud Q \ud R  .\notag
\end{align}
With the identical choice for $h$, we conclude using the arguments of Lemma~\ref{lem:K_n-form} that
\begin{align}
 &|\langle \Phi, \kappa \Psi \rangle| \lesssim \|\ud \Gamma(\omega)^{(1-\lambda-n(1-\delta/\gamma) +\eps)/2} \Phi \|  \|\ud \Gamma(\omega)^{(1-\lambda-n(1-\delta/\gamma) +\eps)/2} \Psi \|,
\end{align}
if $\lambda + n(1-\delta/\gamma) \leq 1$, and
\begin{equation}
 |\langle \Phi, \kappa \Psi \rangle| \lesssim \|\Phi\| \|\Psi\|
\end{equation}
if $\lambda + n(1-\delta/\gamma)>1$. This proves the claim. 
\end{proof}


\end{document}